\documentclass[aps,prx,twocolumn, superscriptaddress]{revtex4-2}
\usepackage{amsmath}
\usepackage{amssymb}
\usepackage{amsthm}
\usepackage{graphicx}
\usepackage{dcolumn}
\usepackage{bm}
\usepackage{enumerate}
\usepackage{physics} 
\usepackage{subfigure}
\usepackage[percent]{overpic}
\usepackage{adjustbox}
\usepackage{hyperref}
\usepackage{extarrows}
\hypersetup{colorlinks = true, 
	    linkcolor = blue, 
	    urlcolor = blue,
            citecolor = blue}
\newtheorem*{theorem}{Theorem}
\newtheorem*{lemma}{Lemma}
\newtheorem{corollary}{Corollary}

\theoremstyle{definition}
\newtheorem{definition}{Definition}

\DeclareMathOperator{\Heff}{H_{\mathrm{eff}}}

\theoremstyle{definition}

\begin{document}
\preprint{APS/123-QED}
\title{Optimal Effective Hamiltonian for Quantum Computing and Simulation}
\author{Hao-Yu Guan}
\affiliation{International Quantum Academy, Shenzhen, 518048, China}
\author{Xiao-Long Zhu}
\affiliation{International Quantum Academy, Shenzhen, 518048, China}
\affiliation{Shenzhen Institute for Quantum Science and Engineering, Southern University of Science and Technology, Shenzhen, 518048, China}%
\author{Yu-Hang Dang}
\affiliation{International Quantum Academy, Shenzhen, 518048, China}
\author{Xiu-Hao Deng}
\email{dengxiuhao@iqasz.cn}
\affiliation{International Quantum Academy, Shenzhen, 518048, China}
\affiliation{Shenzhen Branch, Hefei National Laboratory, Shenzhen, 518048, China.}

\date{\today}
\begin{abstract}
The effective Hamiltonian serves as the conceptual pivot of quantum engineering, transforming physical complexity into programmable logic; yet, its construction remains compromised by the mathematical non-uniqueness of block diagonalization, which introduces an intrinsic ``gauge freedom" that standard methods fail to resolve. We address this by establishing the Least Action Unitary Transformation (LAUT) as the fundamental principle for effective models. By minimizing geometric action, LAUT guarantees dynamical fidelity and inherently enforces the preservation of symmetries—properties frequently violated by conventional Schrieffer-Wolff and Givens rotation techniques. We identify the Bloch-Brandow formalism as the natural perturbative counterpart to this principle, yielding analytic expansions that preserve symmetries to high order. We validate this framework against experimental data from superconducting quantum processors, demonstrating that LAUT quantitatively reproduces interaction rates in driven entangling gates where standard approximations diverge. Furthermore, in tunable coupler architectures, we demonstrate that the LAUT approach captures essential non-rotating-wave contributions that standard models neglect; this inclusion is critical for quantitatively reproducing interaction rates and revealing physical multi-body interactions such as $XZX+YZY$, which are verified to be physical rather than gauge artifacts. By reconciling variational optimality with analytical tractability, this work provides a systematic, experimentally validated route for high-precision system learning and Hamiltonian engineering.
\end{abstract}
\maketitle

\section{Introduction}
\begin{figure*}[t]
    \centering
    \includegraphics[width=1.4\columnwidth]{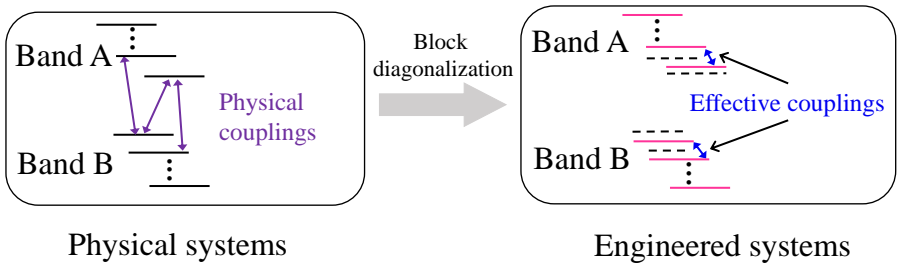}
    \caption{(Color online) Block diagonalization for effective Hamiltonian construction. The two representative cases introduced below exemplify how effective Hamiltonians reconcile microscopic complexity with low-dimensional physical intuition.
    \textbf{(a)}~Decoupling ancillary subsystems: A physical quantum circuit with couplers (left) is mapped to an effective model (right) where computational states are dynamically isolated. 
    \textbf{(b)}~Transforming interband into intraband couplings: A multiband system with hybridization (left) is transformed into an effective theory with only intraband dynamics (right). 
    Dashed lines indicate bare energies; pink solid lines indicate renormalized energies; blue arrows denote effective couplings. 
    }
    \label{Fig:IllustratingBD}
\end{figure*}
The effective Hamiltonian is the renormalized interface between microscopic complexity and macroscopic control; it is the distillation of relevant information that allows us to treat physical hardware as programmable logic. Historically, effective Hamiltonians have played a foundational role in nuclear structure theory~\cite{bloch1958theorie}, quantum chemistry~\cite{shavitt2009many}, condensed‑matter physics~\cite{kohn1959theory}, and atomic spectroscopy~\cite{schucan1972effective}, where they distill essential physics by eliminating irrelevant or virtual degrees of freedom. In the pursuit of scalable quantum information processing, the ability to derive faithful effective models for targeted subspaces—whether they be low-energy qubit manifolds, specific topological bands, or high-excitation Rydberg sectors—is paramount. Whether in superconducting circuits~\cite{blais2021circuit}, trapped ions~\cite{bruzewicz2019trapped}, or neutral atom arrays~\cite{morgado2021quantum}, the engineering of high-fidelity quantum computers~\cite{magesan2020effective,ku2020suppression,xu2021zz,xu2023parasitic,xu2024lattice} or quantum simulators~\cite{dang2024weaving,momoi2025numerical} relies entirely on our ability to dynamically isolate a computational manifold from a distinct ‘‘environment" of leakage levels and ancillary modes, as illustrated in Fig.~\ref{Fig:IllustratingBD}. However, as the field advances toward the strong-coupling and near-degenerate regimes required for fast gates and high-connectivity architectures, the standard theoretical frameworks used to construct this interface are facing a fundamental crisis of ambiguity.

The central difficulty lies in the mathematical non-uniqueness of block diagonalization. While the condition for decoupling a target subspace is precise, the unitary transformation that achieves it possesses an intrinsic ``gauge freedom"—an arbitrary rotation within the target subspace~\cite{di2019resolution}. In the weak-coupling limit, this freedom is often negligible. However, whether utilizing frequency-selective driving~\cite{wei2022hamiltonian,Schafer2018,Kawakami2014} or tunable couplers~\cite{krantz2019quantum,Malinowski2019} to engineer faster entanglement operations, interaction strengths increasingly rival spectral gaps. In this regime, perturbative approaches like Schrieffer-Wolff (SWT)~\cite{bravyi2011schrieffer} suffer from convergence issues, while numerical techniques such as Givens rotations (GR)~\cite{li2022nonperturbative}, though exact in spectral resolution, rely on path-dependent gauge choices. This lack of a canonical geometric constraint can result in spurious symmetry breaking and the loss of adiabatic continuity with the laboratory frame. Consequently, extracted interaction parameters cease to be objective physical invariants, becoming instead dependent on the algorithmic path of the chosen scheme. This gauge ambiguity renders the effective model physically opaque, introducing systematic uncertainties that fundamentally limit the precision required for Hamiltonian learning~\cite{qi2019determining,huang2020predicting} and calibration in quantum computing and simulation~\cite{di2019resolution,eisert2020quantum}.

In this work, we resolve this ambiguity by elevating the effective Hamiltonian from a phenomenological tool to a variationally optimized entity. We posit that a rigorous effective model must satisfy a Least Action Unitary Transformation (LAUT) principle: among the manifold of all block-diagonalizing transformations, the physical solution is the unique unitary that minimizes the geometric distance to the identity~\cite{cederbaum1989block, kvaal2008geometry}. Although the mathematical concept of minimizing this unitary distance has roots in chemical physics, its role as a fundamental selection rule for quantum engineering and its implications for dynamical fidelity have remained unexplored.

We present a unified framework that reconciles this variational optimality with analytical tractability, delivering three distinct advances:

First, we establish a rigorous link between static geometry and dynamical fidelity. We prove a theoretical lower bound showing that the transformation minimizing the Frobenius distance to the identity (the ``least action") also maximizes the long-time evolution fidelity. This result provides the missing physical justification for the LAUT geometry, linking it directly to the observable reliability of quantum gates.

Second, we address the critical issue of symmetry preservation. We prove that, unlike sequential numerical diagonalization, which introduces arbitrary phase artifacts, the unique solution selected by the LAUT criterion inherently respects all symmetries shared by the Hamiltonian and the subspace partition. This ensures that effective interactions—such as mediated exchange couplings in multi-qubit lattices—are protected from spurious symmetry-breaking errors.

Third, we bridge the gap between non-perturbative numerics and analytical theory by identifying the Bloch-Brandow (BB) formalism~\cite{bloch1958theorie} as the natural perturbative counterpart to the LAUT principle. We demonstrate that the BB expansion, unlike the canonical Schrieffer-Wolff transformation , remains variationally optimal and symmetry-preserving to high orders. This provides a systematic route to deriving closed-form, high-precision effective Hamiltonians that resolve the inconsistencies inherent in standard perturbative techniques.

We validate this framework through detailed case studies on superconducting quantum hardware, including the driven frequency-selective entangling gates, cross-resonance gate~\cite{magesan2020effective, wei2022hamiltonian} as an example, and multi-body coupler architectures~\cite{sung2021realization}. We demonstrate that the LAUT-based approach quantitatively reproduces experimental interaction rates in regimes where standard approximations diverge and successfully identifies novel three-body interactions ($XZX + YZY$) mediated by higher-order virtual processes. To ensure rigorous applicability, we further introduce a static soundness metric that efficiently certifies the validity domain of the effective model without necessitating computationally expensive time-domain simulations. By providing a mathematically rigorous and experimentally validated standard for model reduction, this framework lays the foundation for high-precision Hamiltonian engineering across diverse quantum platforms.

\section{Theory}
\label{sec: EH theory and application}

\subsection{Effective Hamiltonian and Least-action Criterion}
\label{subsec:definition_of_EH}

The goal of effective Hamiltonian theory is to represent the dynamics of a complex quantum system within a reduced subspace, capturing the relevant physics while faithfully reproducing observable quantities. Formally, one seeks an operator $\Heff$ acting on the subspace $\mathcal{H}_P$ such that the projected evolution generated by the full Hamiltonian $H(t)$ can be approximated as 
\[PU(t)\ket{\psi} = P\mathcal{T}e^{-i\int_0^t H(\tau)d\tau }\ket{\psi} \approx e^{-iH_\mathrm{eff}t}P\ket{\psi}.\]
The precise meaning of $\Heff$, however, depends on how spectral correspondence and block-diagonal structure are enforced. A formal definition consistent with conventional theory is given below.

\begin{definition}{\textbf{Effective Hamiltonian:}}\label{def:general_EH}
Given a quantum system with a full Hamiltonian $ H $ acting on a Hilbert space $\mathcal{H}=\mathcal{H}_P\oplus\mathcal{H}_Q$, which is partitioned into  a relevant subspace $ \mathcal{H}_P=\bigoplus_i \mathcal{H}_{P_i} $ (e.g., low-energy sectors of interest) and its orthogonal complement $ \mathcal{H}_Q $, with projectors $ P=\sum_i P_i $ and $ Q = \mathcal{I} - P $, the effective Hamiltonian $ \Heff $ is an operator acting on $ \mathcal{H}_P $ that satisfies the following criteria, as drawn from conventional theory~\cite{soliverez1981general}:
\begin{enumerate}[(1)]
    \item The eigenvalues of $ \Heff $ match the eigenvalues of $ H $ associated with the states that have significant support in $ \mathcal{H}_P $.
    \item There is a one-to-one correspondence between the eigenstates of $ \Heff $ and the projections onto $ \mathcal{H}_P $ of the corresponding eigenstates of $ H $.
    \item \( \Heff \) is block-diagonal with respect to the decomposition \( \mathcal{H}=(\bigoplus_i \mathcal{H}_{P_i})\oplus\mathcal{H}_{Q}  \).
\end{enumerate}
\end{definition}

In practice, solving an effective Hamiltonian involves seeking a block-diagonalizing (BD) transformation $T$ that block-diagonalizes $H$ with respect to the decomposition $\mathcal{H} = \mathcal{H}_{\mathrm{P}} \oplus \mathcal{H}_{\mathrm{Q}}$, yielding 
\begin{equation}
    \tilde{H} = T^\dagger H T =
    \begin{pmatrix}
        H_{\mathrm{P}} & 0 \\
        0 & H_{\mathrm{Q}}
    \end{pmatrix}.
\end{equation}
And then $H_P = H_\mathrm{eff}$ is taken as the effective model of the target system. This manuscript focuses on unitary BD transformations and, hence, Hermitian effective Hamiltonians.

While this definition ensures spectral equivalence and block diagonalization, the transformation is highly non-unique, as different choices of $T$ can yield distinct effective Hamiltonians that differ in higher-order terms, gauge conventions within each subspace $\mathcal{H}_i$, and symmetry properties. In strongly hybridized or near-degenerate systems—prevalent in quantum hardware and many-body simulations—this ambiguity manifests as inconsistent predictions for observables, such as interaction strengths or dynamical phases, underscoring the need for a principled selection of $T$ that prioritizes physical fidelity and minimizes structural alteration.

To better understand the origin of this ambiguity, it is instructive to identify the precise mathematical freedom allowed in the block–diagonalization process. The lemma below makes this freedom explicit.

\begin{lemma}[\textbf{Gauge Freedom in Block Diagonalization}] \label{lemma:uniqueness}
Consider a Hilbert space partition $\mathcal{H} = \mathcal{H}_P \oplus \mathcal{H}_Q$. Let $U$ and $U'$ be two unitary transformations that both block-diagonalize the Hamiltonian $H$ with respect to this partition, sorting the same set of eigenvalues into the target subspace $\mathcal{H}_P$. The relationship between these two transformations, $U' = W U$, depends on the spectral structure across the partition boundary:
\begin{enumerate}[(1)]
    \item \textbf{The Gapped Regime.} In the absence of any spectral degeneracy intersecting the partition boundary, the connecting unitary $W$ is strictly block-diagonal:
$$W = \begin{pmatrix} W_P & 0 \\ 0 & W_Q \end{pmatrix}$$, 
where $W_P$ represents an arbitrary internal basis rotation (gauge freedom) within the target subspace.
\item \textbf{The Resonant Regime.} If a degenerate eigenspace spans across both blocks, $W$ is block-diagonal only up to a mixing within that degenerate subspace.
\end{enumerate}
\end{lemma}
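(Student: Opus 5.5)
The plan is to reduce the statement to a Sylvester-type intertwining equation between the two block-diagonal forms, and then read off the block structure of $W$ from the spectra of the blocks. Throughout, I take the convention in which the transformed Hamiltonian is $D = U H U^\dagger$, so that $U' = WU$ is the natural composition. For the convention $T^\dagger H T$ used in the definition above, the same argument goes through with $U' = UW$ after relabeling. Write
\[
D = UHU^\dagger = \begin{pmatrix} D_P & 0\\ 0 & D_Q\end{pmatrix},\qquad
D' = U'HU'^\dagger = \begin{pmatrix} D'_P & 0\\ 0 & D'_Q\end{pmatrix}.
\]
By hypothesis both transformations sort the same set of eigenvalues into $\mathcal{H}_P$. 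Hence $\mathrm{spec}(D_P)=\mathrm{spec}(D'_P)=:\Sigma_P$ and $\mathrm{spec}(D_Q)=\mathrm{spec}(D'_Q)=:\Sigma_Q$, counted with multiplicity.

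First, eliminate $H$. Since $W = U'U^\dagger$ is unitary and $D' = W D W^\dagger$, we have the intertwining relation $WD = D'W$. Decompose $W$ into blocks $W_{PP}, W_{PQ}, W_{QP}, W_{QQ}$ with respect to $\mathcal{H}_P\oplus\mathcal{H}_Q$. Reading off the off-diagonal blocks of $WD=D'W$ gives two Sylvester equations:
\[
W_{QP}D_P - D'_Q W_{QP} = 0,\qquad W_{PQ}D_Q - D'_P W_{PQ}=0 .
\]

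Second, treat the gapped regime. Here the absence of degeneracy across the boundary means precisely $\Sigma_P\cap\Sigma_Q=\emptyset$. Since $D_P$ and $D'_Q$ are Hermitian, diagonalize them as $D_P=\sum_a \lambda_a |a\rangle\langle a|$ and $D'_Q=\sum_b \mu_b |b\rangle\langle b|$. Taking matrix elements of the first equation gives
\[
(\lambda_a-\mu_b)\,\langle b|W_{QP}|a\rangle=0 .
\]
Because $\lambda_a\neq\mu_b$ for every pair, $W_{QP}=0$, and the same argument gives $W_{PQ}=0$. This is the standard uniqueness statement for the Sylvester equation with disjoint spectra. Unitarity of $W$ then forces $W_P:=W_{PP}$ and $W_Q:=W_{QQ}$ to be unitary. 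The surviving constraint $D'_P=W_PD_PW_P^\dagger$ shows that $W_P$ is exactly an internal basis rotation of the target block. This is the claimed gauge freedom; it is arbitrary in the sense that any unitary $W_P$ yields another valid block-diagonalizing transformation.

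Third, treat the resonant regime with the same matrix-element identity. Now $\langle b|W_{QP}|a\rangle$ can be nonzero only when $\lambda_a=\mu_b\in\Sigma_P\cap\Sigma_Q$. So $W_{QP}$, and likewise $W_{PQ}$, is supported only on eigenvectors belonging to eigenvalues shared by both blocks. Equivalently, $W$ is block-diagonal on the complement of the degenerate eigenspace $\bigoplus_{\lambda\in\Sigma_P\cap\Sigma_Q}\ker(D-\lambda)$, and it mixes the $P$ and $Q$ parts only within that space. This is the statement of case (2).

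The step needing most care is not the algebra but making the hypotheses precise. One must formalize "sorting the same eigenvalues" as equality of the block spectra with multiplicity, since otherwise $D'_Q$ need not share the spectrum of $D_Q$ and the Sylvester argument is applied to the wrong pair of blocks. One must also keep the composition convention ($U'=WU$ versus $U'=UW$) consistent with the definition of the block-diagonal form. I would fix both points explicitly before writing the matrix-element argument.
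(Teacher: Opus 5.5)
Your proposal is correct and follows essentially the paper's argument. The paper also sets $W=U'U^\dagger$, uses $\Lambda'=W\Lambda W^\dagger$, and notes that $W$ carries each $\lambda$-eigenspace of $\Lambda$ onto the $\lambda$-eigenspace of $\Lambda'$; your block Sylvester relations $(\lambda_a-\mu_b)\langle b|W_{QP}|a\rangle=0$ are that same fact in coordinates. Your version is a bit more explicit: you fix equality of block spectra with multiplicity, and you show that in the resonant case the off-diagonal blocks are supported only on shared eigenvalues, whereas the paper's item (2) mainly remarks that the cross-block degenerate eigenspace can be rotated freely.
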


\begin{proof}
Define \(W:=U'U^{\dagger}\). By construction, \(W\) is unitary and satisfies
\(\Lambda'=W\Lambda W^{\dagger}\).
Write the spectral decomposition
\(\Lambda=\sum_{\lambda}\lambda\Pi_{\lambda}\).
Then \(\Lambda'=W\Lambda W^{\dagger}=\sum_{\lambda}\lambda\,W\Pi_{\lambda}W^{\dagger}\),
so \(W\Pi_{\lambda}W^{\dagger}\) is the spectral projector of \(\Lambda'\) corresponding to \(\lambda\).
Thus, \(W\) maps each eigenspace \(E_{\lambda}(\Lambda)=\operatorname{Ran}\Pi_{\lambda}\)
onto \(E_{\lambda}(\Lambda')\).

If every \(\Pi_{\lambda}\) is contained in exactly one of the blocks, then \(W\) cannot mix vectors from \(\mathcal H_P\) and \(\mathcal H_Q\) without leaving the eigen-space, so it must be block–diagonal with respect to \((P,Q)\), yielding item~(1).
If instead some \(\Pi_{\lambda}\) overlap both blocks, then the corresponding degenerate eigenspace can be rotated arbitrarily by \(W\) while still producing a block–diagonal form for \(H\), giving item~(2).
\end{proof}

This lemma formalizes the intuition that the block–diagonalizing transformation is unique only up to local rotations inside each invariant subspace. In the gapped regime, this restricts the freedom strictly to $W_P \oplus W_Q$. However, if degeneracy crosses the partition, this structure collapses. For instance, in a three–level system with degeneracy $\lambda_1=\lambda_2$:$$\Lambda = \mathrm{diag}(\lambda_1, \lambda_2, \mu), \quad P = |\lambda_1\rangle\langle\lambda_1|, \quad Q = |\lambda_2\rangle\langle\lambda_2| + |\mu\rangle\langle\mu|$$, any $U(2)$ rotation within $\mathrm{span}\{|\lambda_1\rangle, |\lambda_2\rangle\}$ preserves $\Lambda$ but mixes the blocks, thereby breaking strict block–diagonality. This confirms that in the physically relevant gapped regime, the effective Hamiltonian contains an intrinsic local gauge freedom $W_P$.

The LAUT principle addresses this by serving as a variational criterion that selects the transformation $T$ with the least deviation from the identity among all block-diagonalizing maps, thereby preserving the original Hamiltonian's structure and symmetries while achieving exact decoupling. Such a minimal-deformation principle is particularly vital in quantum physics, as it ensures interpretability, avoids unphysical artifacts, and facilitates robust modeling in regimes where perturbative or ad hoc methods falter, such as strong-coupling scenarios or multi-level architectures~\cite{magesan2020effective}. 

A well-established formulation~\cite{cederbaum1989block} minimizes the Frobenius norm
\begin{equation}
    \mathop{\arg\min}_T \|T - \mathcal{I}\|_F,
    \label{eq:frobenius_min}
\end{equation}
subject to the constraint that $T^\dagger H T$ is block diagonal. Notably, this criterion aligns with the principle of least information~\cite{cederbaum1989block}, ensuring that the resulting EH contains no arbitrary gauge information or spurious correlations beyond what is strictly necessary to block-diagonalize the system. Unlike trace fidelity, the LAUT framework based on the Frobenius norm remains well-defined for non-unitary but invertible transformations $T$, thus enabling extensions to non-Hermitian EHs in the future.

To resolve the gauge ambiguity identified in the Lemma and enforce a unique physical description, we propose that the definition of an effective Hamiltonian is incomplete without a variational criterion. Accordingly, we augment the EH construction by strictly enforcing a fourth defining condition:

\begin{enumerate}[(1)] 
    \setcounter{enumi}{3} 
    \item \textbf{Least Action Criterion:} Among all block-diagonalizing transformations satisfying conditions (1)-(3), the physical transformation \( T \) is the unique unitary that minimizes the geometric distance to the identity,
    \[
    T = \underset{T' \in \mathcal{T}_{BD}}{\arg \min} \|T' - \mathcal{I}\|_F,
    \]
    yielding the optimal effective Hamiltonian \( H_{\text{eff}} = P T^{\dagger} H T P \).
\end{enumerate}

\begin{figure}[t]
    \centering
    \includegraphics[width=0.85\columnwidth]{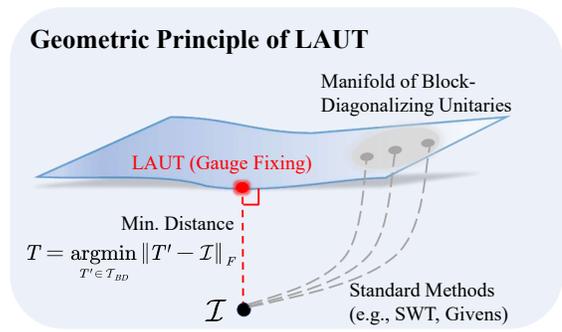}
    \caption{(Color online) Geometric interpretation of the Least Action Unitary Transformation (LAUT) principle.
As described in the Lemma, the set of all unitaries that block-diagonalize the Hamiltonian forms a continuous manifold $\mathcal{T}_{BD}$ (blue surface). Standard methods (grey dashed line) traverse this manifold via unconstrained trajectories. In contrast, the LAUT criterion (Definition 1) uniquely identifies the transformation $T$ (red dot) that minimizes the geometric distance $||T-\mathcal{I}||_F$ to the identity (black dot), thereby fixing the gauge freedom with minimal structural deformation.
    }
    \label{Fig:laut_geometry}
\end{figure}

This condition effectively ``legislates" the optimal frame, elevating the effective Hamiltonian from a phenomenological tool to a variationally optimized entity. To clarify this geometric intuition, we visualize the principle in Fig.~\ref{Fig:laut_geometry}. While conventional methods traverse the continuous solution manifold $\mathcal{T}_{BD}$ along arbitrary trajectories (gray dashed lines), the LAUT criterion acts as a strict gauge-fixing condition, isolating the unique transformation connected to the identity via the shortest geodesic (red dashed line).

\noindent\textit{Remark (Generalization to Hamiltonian Engineering).} 
Although formulated explicitly for subspace decoupling, the \textbf{LAUT} principle constitutes a universal variational foundation for Hamiltonian engineering. It generalizes naturally to tasks requiring targeted term cancellation---such as the suppression of counter-rotating wave contributions---where global subspace isolation is not strictly required. By replacing the block-diagonal constraint $\mathcal{T}_{\mathrm{BD}}$ with a specific interaction-vanishing constraint, the LAUT principle identifies the unique frame that eliminates unwanted terms with minimal geometric deformation, thereby minimizing artificial renormalization of the remaining physical parameters. This distinction between global decoupling and targeted cancelation is schematically illustrated in Fig.~\ref{Fig:IllustratingBD}.

This variational framework fundamentally refines the conceptual relationship between effective Hamiltonians and block diagonalization. In conventional formulations, the two are often regarded as synonymous, implying that \textit{any} transformation decoupling the subspaces yields a valid effective Hamiltonian. However, as formalized in Definition~\ref{def:general_EH}, block diagonalization is merely a mathematical procedure that yields a gauge-dependent manifold of solutions, whereas the Effective Hamiltonian (in a strict physical sense) must be a unique entity. 

By enforcing the LAUT principle as the fourth defining condition, we filter this manifold to select the unique transformation \( T \) with minimal geometric deformation. Thus, EH~\cite{bloch1958theorie,killingbeck2003bloch} and BD~\cite{li2022nonperturbative,bravyi2011schrieffer} must be conceptually distinguished: a rigorous one-to-one correspondence between the mathematical procedure and the physical model is established \textit{only} when the transformation is constrained by the LAUT criterion.

The explicit solution to the LAUT criterion via the exact block diagonalization (EBD-LAUT) method, following Ref.~\cite{cederbaum1989block}, is reviewed in Appendix~\ref{app:ebd-la}. Mathematically, this solution relies on the polar decomposition~\footnote{Analogous to the polar form of a complex number $z = r e^{i\theta}$, the polar decomposition uniquely factorizes the projection matrix into a unitary ``rotation'' component and a positive-semidefinite ``stretching'' component, isolating the former as the optimal transformation.} of the projection matrix linking the target eigenstates to the computational basis. Crucially, provided that these eigenstates maintain a one-to-one correspondence with the computational subspace, this projection is non-singular, thereby guaranteeing a unique unitary $T$ that minimizes $\| T - \mathcal{I}\|_F$. This provides a robust non-perturbative framework for EH construction. In Sec.~\ref{subsec:exact_EH_benchmark}, we benchmark LAUT against GR and dynamical spectral analysis based on Fourier transforms of population dynamics (hereafter referred to as FFT analysis), demonstrating that LAUT consistently surpasses GR in accuracy while preserving symmetry.


Unless otherwise specified, the block-diagonalizing transformation $T$ refers to the unique solution derived from the LAUT principle. In the resulting effective Hamiltonian, diagonal elements correspond to renormalized frequencies (or energies), while off-diagonal elements represent effective couplings.

\subsection{Accuracy of Effective Hamiltonian and Near Optimal Condition}
\label{subsec:LAUT_lower_bound}

A central question is how the LAUT criterion relates to the dynamical accuracy of the resulting effective Hamiltonian $H_{\mathrm{eff}}$. While the LAUT principle minimizes the structural deviation of the block-diagonalizing transformation, its impact on long-time evolution is not a priori evident. Here, we establish this connection by interpreting the LAUT objective as eigenbasis alignment and deriving a lower bound on the time-averaged evolution error. This analysis quantitatively links the static variational principle to dynamical fidelity, thereby justifying the LAUT-based definition of the effective Hamiltonian and clarifying its relevance for strongly driven entangling gates such as the CR gate, where perturbative methods often fail.

\paragraph*{Eigenbasis alignment and basis mismatch.-}

Let the full Hamiltonian \( H \) and its effective counterpart \( H_{\mathrm{eff}} \) share the same spectrum \( E \), with corresponding eigenstate matrices \( S \) and \( \mathcal{E} \), such that
\begin{align}
    S^{\dagger} H S &= \mathcal{E}^{\dagger} H_{\mathrm{eff}} \mathcal{E} = E.
\end{align}
It follows that $H_{\mathrm{eff}} = T^\dagger H T$ and \( T = S \mathcal{E}^\dagger \), which act as a unitary change of basis from \( \mathcal{E} \) to \( S \). The Frobenius norm of the distance between transformation $T$ and the identity quantifies the mismatch:
\begin{equation}
    \| T - \mathcal{I} \|_F^2 = \| S - \mathcal{E} \|_F^2 = \Tr(2\mathcal{I} - \mathcal{E}^\dagger S - S^\dagger \mathcal{E}).
\end{equation}
Minimizing this quantity guarantees optimal eigenbasis alignment, preserving the spectrum and improving dynamical fidelity.

Consider a Hermitian Hamiltonian $H$ acting on a $D$-dimensional Hilbert space $\mathcal{H}$, decomposed into a computational subspace $\mathcal{H}_C$ of dimension $d$ and its orthogonal complement $\mathcal{H}_{NC}$. Let $T$ be a unitary block-diagonalizing transformation satisfying the least-action (LAUT) criterion, i.e., $T$ minimizes the Frobenius norm $\|T - \mathcal{I}\|_F$ among all unitaries that render $T^\dagger H T$ block-diagonal with respect to $\mathcal{H} = \mathcal{H}_C \oplus \mathcal{H}_{NC}$. Define the effective Hamiltonian in $\mathcal{H}_C$ as $H_{\mathrm{eff}} = P (T^\dagger H T) P$, where $P$ is the projector onto $\mathcal{H}_C$.
Let $U(t) = e^{-i H t}$ and $U_{\mathrm{eff}}(t) = e^{-i H_{\mathrm{eff}} t}$ be the unitary time-evolution operators generated by $H$ and $H_{\mathrm{eff}}$, respectively.

\paragraph*{Trace fidelity as a dynamical benchmark.-}

To evaluate the quality of effective dynamics, we use the trace fidelity:
\begin{equation}
    F_{\mathrm{tr}}(t) = \frac{1}{D} \left| \Tr\left[ U_{\mathrm{eff}}^\dagger(t) U(t) \right] \right|,
\end{equation}
which quantifies the average overlap between the full and effective propagators over the full Hilbert space of dimension \( D \). Its long-time average,
\begin{equation}\label{eq:long_time_averaged_fidelity}
    \langle F_{\mathrm{tr}} \rangle_t := \lim_{T \to \infty} \frac{1}{T} \int_0^T F_{\mathrm{tr}}(t)\, dt,
\end{equation}
serves as a robust metric of global dynamical consistency. This trace fidelity is directly related to the average fidelity over all pure input states~\cite{pedersen2007fidelity}, a standard figure of merit in quantum simulation, randomized benchmarking~\cite{knill2008randomized}, and Hamiltonian learning~\cite{poulin2011quantum}. 

\paragraph*{Assumptions and scope of the fidelity bound.-}
The bound below relies on mild spectral and labeling assumptions that ensure a consistent comparison between the exact eigenbasis $S$ and the LAUT-optimized basis $\mathcal{E}$. First, we assume that the spectrum of $H$ admits a well-defined eigenbasis such that accidental degeneracies do not introduce ambiguities in matching the diagonal terms that survive long-time averaging. Exact or near-degenerate \emph{states within} a block are allowed, while persistent cross-block degeneracies may invalidate the diagonal-time average simplification. Second, we assume that the eigenbases $S$ and $\mathcal{E}$ can be continuously aligned along the parameter manifold, which holds generically except at true eigenvalue crossings. Finally, the long-time average is taken over a window $T$ satisfying $T \gg 1/|E_m - E_n|$ for all $m\neq n$, so that oscillatory off-block contributions $\exp[i(E_n - E_m)t]$ average out. These assumptions are standard in time-averaged fidelity analyzes and hold throughout all benchmark regimes considered in this work.

\begin{theorem}[\textbf{Geometric Lower Bound on Dynamical Fidelity}]
\label{thm:LAUT_bound_trace_fidelity}
Under the assumptions above, the least-action (LAUT) block-diagonalizing transformation $T$ yields an effective
Hamiltonian $H_{\mathrm{eff}}=T^\dagger H T$ whose long-time averaged trace fidelity satisfies
\begin{equation}
    \left\langle F_{\mathrm{tr}} \right\rangle_t
    \;\geqslant\;
    \left( 1 - \frac{1}{2D}\,\|T - \mathcal{I}\|_F^2 \right)^2,
    \label{eq:key_bound}
\end{equation}
where $D$ is the Hilbert-space dimension.
The right-hand side becomes an asymptotically accurate estimate in the perturbative regime
$\|T-\mathcal{I}\|_F \ll 1$, where all inequalities used in the derivation coincide up to
second order in the singular-value deviations. No general tightness condition is implied beyond this perturbative limit.
\end{theorem}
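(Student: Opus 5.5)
The plan is to reduce everything to a single unitary overlap matrix in the common eigenbasis, evaluate the long-time average explicitly, and then chain two elementary inequalities that connect the surviving diagonal terms to $\operatorname{Re}\Tr T$. That trace is exactly what $\|T-\mathcal{I}\|_F^2$ measures.

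\textbf{Step 1 (reduction to an overlap matrix).} Following the eigenbasis-alignment paragraph, I will read $H_{\mathrm{eff}}$ in the theorem as the full block-diagonal operator $T^\dagger H T$, so that $U_{\mathrm{eff}}$ acts on all of $\mathcal{H}$ and the normalization $1/D$ is consistent. Write $H=SES^\dagger$ and $H_{\mathrm{eff}}=\mathcal{E}E\mathcal{E}^\dagger$, with $T=S\mathcal{E}^\dagger$. Define the unitary $M:=\mathcal{E}^\dagger S$. Cyclicity of the trace gives
\begin{equation*}
\Tr\bigl[U_{\mathrm{eff}}^\dagger(t)U(t)\bigr]=\Tr\bigl[e^{iEt}Me^{-iEt}M^\dagger\bigr]=\sum_{m,n}|M_{mn}|^2 e^{i(E_m-E_n)t}.
\end{equation*}
Moreover $\Tr T=\Tr(S\mathcal{E}^\dagger)=\Tr M$, so
\begin{equation*}
\|T-\mathcal{I}\|_F^2=2D-2\operatorname{Re}\Tr M,
\qquad\text{that is,}\qquad
1-\frac{1}{2D}\|T-\mathcal{I}\|_F^2=\frac{1}{D}\operatorname{Re}\Tr M .
\end{equation*}

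\textbf{Step 2 (long-time average).} Since $\frac{1}{T}\int_0^T|f|\,dt\ge\bigl|\frac{1}{T}\int_0^T f\,dt\bigr|$, the average of $F_{\mathrm{tr}}$ is bounded below by the modulus of the averaged sum. Under the stated non-degeneracy and averaging-window assumptions, every term with $E_m\neq E_n$ averages to zero, leaving
\begin{equation*}
\langle F_{\mathrm{tr}}\rangle_t\ \ge\ \frac{1}{D}\sum_m|M_{mm}|^2 .
\end{equation*}
If there are exact degeneracies inside a block, the surviving pairs $(m,n)$ with $E_m=E_n$ contribute the additional non-negative weights $|M_{mn}|^2$, so the inequality still holds. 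This is where the prohibition of cross-block degeneracies is used.

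\textbf{Step 3 (convexity).} By Cauchy--Schwarz (equivalently, Jensen's inequality),
\begin{equation*}
\frac{1}{D}\sum_m|M_{mm}|^2\ \ge\ \Bigl(\frac{1}{D}\sum_m|M_{mm}|\Bigr)^2 .
\end{equation*}
Since $\sum_m|M_{mm}|\ge|\operatorname{Re}\Tr M|$, the right-hand side is at least $\bigl(\frac{1}{D}\operatorname{Re}\Tr M\bigr)^2$. This remains valid even when $\operatorname{Re}\Tr M<0$, so no sign condition is needed. Substituting the identity from Step 1 yields Eq.~\eqref{eq:key_bound}.

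\textbf{Role of LAUT and tightness.} The chain of inequalities holds for any labeled pair $(S,\mathcal{E})$. The LAUT transformation is distinguished because it maximizes $\operatorname{Re}\Tr M$ over the gauge freedom $W_P\oplus W_Q$ described in the Lemma, and therefore gives the largest value of the right-hand side. For tightness in the perturbative regime, take $M=\mathcal{I}+O(\epsilon)$. The LAUT polar-decomposition structure makes the diagonal entries real and positive to leading order. Then the equality in Step 2, $|M_{mm}|$ versus $\operatorname{Re}M_{mm}$, and the spread of the $|M_{mm}|$ all differ only at second order in the singular-value deviations, which gives the asymptotic accuracy claimed in the theorem.

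\textbf{Main obstacle.} I expect the hardest part to be Step 2: making the diagonal-average simplification rigorous. This needs three things: the consistent labeling of eigenvalues shared between $S$ and $\mathcal{E}$, control of the terms from near-degenerate pairs over a finite averaging window, and the eigenvector phase conventions, since the columns of $S$ and $\mathcal{E}$ are each defined only up to a phase. I would handle the phases by noting that $|M_{mn}|^2$ is invariant under rephasing columns. Only $\operatorname{Re}\Tr M$ depends on the phases, and LAUT fixes them optimally.
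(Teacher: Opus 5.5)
Your argument is correct, and it reaches the bound by a different, more general route than the paper.

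\textbf{The paper's route.} The proof in Appendix~\ref{app:fidelity_bound} relies on the LAUT structure at the key step. From a polar decomposition it argues that the overlap matrix $W=\mathcal{E}^\dagger S$ has real non-negative diagonal entries. Hence $|W_{nn}|^2=W_{nn}^2$ and $\sum_n W_{nn}=\Tr W=D-\tfrac12\|T-\mathcal{I}\|_F^2$, and Cauchy--Schwarz finishes. In-block degeneracies are handled by choosing the eigenbasis to ``diagonalize $W$''.

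\textbf{Your route.} You combine the identity $\|T-\mathcal{I}\|_F^2=2D-2\operatorname{Re}\Tr T$, which holds for every unitary, with $\sum_m|M_{mm}|\geq|\Tr M|\geq|\operatorname{Re}\Tr M|$. Degeneracies are absorbed by discarding non-negative terms. This has three advantages:
\begin{enumerate}
\item The inequality holds for \emph{any} unitary $T$ with $H_{\mathrm{eff}}=T^\dagger H T$. A GR transformation obeys it with its own $\|T_{\mathrm{GR}}-\mathcal{I}\|_F$; LAUT only makes the right-hand side largest, as you note.
\item It does not depend on the paper's SVD step. That step is written for the unitary $S$, whose singular values are all $1$, when it should be applied to $S_{\mathrm{BD}}$; it also asserts a Hermiticity of $W$ that fails in general.
\item It needs no basis choice in degenerate sectors.
\end{enumerate}
What the paper's route buys is that, for LAUT, the step $\sum_m|M_{mm}|\geq\operatorname{Re}\Tr M$ is an equality. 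That is exactly what the tightness claim needs.

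\textbf{Two small corrections.} First, the diagonal entries are real and non-negative \emph{exactly}, not just to leading order. Since $\mathcal{E}$ is block-diagonal, $M_{mm}=(\mathcal{E}^\dagger S_{\mathrm{BD}})_{mm}$. With $S_{\mathrm{BD}}=U_0\Sigma V_0^\dagger$ and $\mathcal{E}=U_0V_0^\dagger$, one gets $\mathcal{E}^\dagger S_{\mathrm{BD}}=V_0\Sigma V_0^\dagger\succeq0$. This exactness matters: diagonal phases $\theta_m=O(\epsilon)$ would leave a gap of order $\frac1D\sum_m\theta_m^2$, comparable to $\frac1D\|T-\mathcal{I}\|_F^2$ itself.

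Second, the exclusion of cross-block degeneracies is not used in your Step 2. Dropping the terms $|M_{mn}|^2$ works for any degeneracy pattern. The assumption matters only for $T$ to be well defined with $S_{\mathrm{BD}}$ invertible.
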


\textit{Sketch of the proof.-} Let $H=S E S^\dagger$ and $H_{\mathrm{eff}}=\mathcal{E} E \mathcal{E}^\dagger$ share the eigenvalue spectrum $E$, and define
$W = \mathcal{E}^\dagger S$. The exact and effective propagators satisfy
\[
U(t)=S e^{-i E t} S^\dagger,\qquad 
U_{\mathrm{eff}}(t)=\mathcal{E} e^{-i E t} \mathcal{E}^\dagger,
\]
from which
\[
\Tr\!\left[ U^\dagger(t) U_{\mathrm{eff}}(t) \right]
= \sum_{m,n} e^{i(E_n - E_m)t} |W_{mn}|^2.
\]
By the long-time averaging assumption, all oscillatory off-diagonal terms vanish, giving
\[
\Big\langle \Tr(U^\dagger U_{\mathrm{eff}}) \Big\rangle_t
= \sum_n |W_{nn}|^2.
\]
Applying Jensen’s inequality and Cauchy–Schwarz yields
\[
\left\langle \left| \Tr(U^\dagger U_{\mathrm{eff}}) \right| \right\rangle_t
\geqslant \frac{1}{D}\left| \Tr(W) \right|^2.
\]
Let the singular value decomposition be $S = U_0 \Sigma V_0^\dagger$.
The LAUT principle implies $\mathcal{E} = U_0 V_0^\dagger$, so $W = \mathcal{E}^\dagger S$ has diagonal entries $\sigma_i$, giving
\[
\Tr(W)=\sum_i \sigma_i,
\qquad
\|T - \mathcal{I}\|_F^2 = 2D - 2 \sum_{i=1}^D \sigma_i.
\]
A direct algebraic manipulation then leads to the bound in Eq.~\eqref{eq:key_bound}.
A full proof is provided in Appendix~\ref{app:fidelity_bound}. \hfill$\blacksquare$

While the Theorem holds mathematically for any LAUT $T$, its physical utility is strictly constrained by the energy scale separation between the target and leakage subspaces. In the strong-coupling or resonant regime (e.g., $g/\Delta \gtrsim 1$), this scale separation collapses, leading to extensive hybridization that renders the concept of an isolated ``computational subspace" physically ill-defined. As a result, in this limit, the required transformation $T$ deviates significantly from the identity ($||T-\mathcal{I}||_F^2 \sim \mathcal{O}(D)$), causing the fidelity lower bound in Eq.~\eqref{eq:key_bound} to become loose and uninformative. Thus, the bound—and the effective Hamiltonian description itself—is maximally relevant in the dispersive and quasi-dispersive regimes, where the scale separation is preserved and the LAUT framework demonstrates its distinct advantage over conventional perturbative methods.

To provide intuition for the Theorem, 
Fig.~\ref{Fig:Illustrating_bound} shows a minimal three-level Q–C–Q example in which both sides of Eq.~\eqref{eq:key_bound} are evaluated across a broad frequency sweep. The inequality is satisfied throughout, and the two curves converge as $\|T-\mathcal{I}\|_F^2 \rightarrow 0$. The inset reveals a clear quadratic dependence of the pointwise difference between the two sides in the small-rotation regime, consistent with the qualitative expectation that the bound tightens smoothly as the transformation approaches the identity. Additional numerical details and the comparison between symmetric and weakly asymmetric configurations are provided in 
Appendix~\ref{app:bound_verification}.
\begin{figure}[t]
    \centering
    \includegraphics[width=0.7\columnwidth]{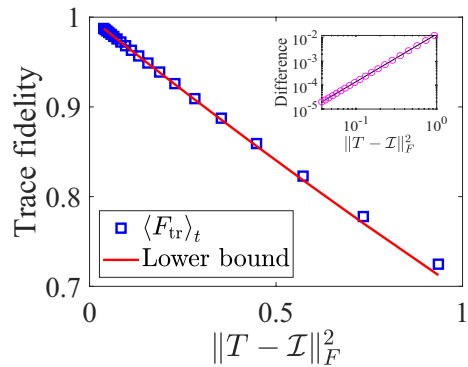}
    \caption{
Illustration of the LAUT-based fidelity bound in a minimal three-level model. The left- (blue squares) and right- (red line) hand sides of Eq.~(\ref{eq:key_bound}) are shown as functions of the coupler frequency in a symmetric configuration ($\omega_1/2\pi=\omega_2/2\pi=4~\mathrm{GHz}$). The inequality holds for all parameters, and the two curves converge as $\|T-\mathcal{I}\|_F^2 \rightarrow 0$, reflecting the expected tightening of the bound in the small-rotation regime. The inset displays the point-wise difference (pink circles) between the two sides of the inequality; the fitted line (black) exhibits a clear quadratic trend.}
    \label{Fig:Illustrating_bound}
\end{figure}

The bound shows that the long-time dynamical accuracy of the effective Hamiltonian is controlled entirely by the eigenbasis mismatch quantified through $\|T-\mathcal{I}\|_F^2$, independent of the fast oscillatory coherence terms that wash out under time averaging. Consequently, minimizing this Frobenius deviation is equivalent to maximizing the dynamical fidelity guaranty. This result provides the formal justification for adopting the LAUT criterion as \textit{the definition of the effective Hamiltonian}: the LAUT-selected transformation is precisely the one that optimizes the lower bound of long-time dynamical fidelity among all admissible block-diagonalizing unitaries.

This connection is particularly relevant for driven multilevel operations, such as the CR gate~\cite{magesan2020effective}, where non-rotating-wave-approximation (RWA) processes and higher-level hybridization substantially limit perturbative approaches. The LAUT framework naturally captures these contributions, providing a unified and robust tool for Hamiltonian design and entangling-gate calibration, as demonstrated in Sec.~\ref{sec:benchmarking}.

\subsection{Symmetry Preservation Under the LAUT Principle}
\label{subsec:symmetry_preserve}

Having established the LAUT criterion as the rigorous standard for structural fidelity, a critical question arises: does this construction respect the fundamental symmetries of the original system? Symmetry preservation is particularly paramount in analog quantum simulators and the study of symmetry-protected topological phases, where any violation of conserved quantities can induce unphysical symmetry-breaking interactions.

Here, we demonstrate that the LAUT framework \textit{inherently guarantees} symmetry preservation whenever the block partition is symmetry-adapted ($[\mathcal{S}, P]=0$). 

Crucially, this property stems directly from the \textit{uniqueness} of the variational solution (see Corollary~\ref{cor:symmetry}), a feature not shared by procedural numerical approaches. For instance, local optimization methods such as sequential Givens rotations (GR) often introduce arbitrary gauge phases that disrupt global symmetries; we provide concrete counterexamples illustrating this violation in Appendix~\ref{app:symmetry_GR}.

\begin{corollary}
[\textbf{Symmetry Preservation under Least‑Action Gauge Fixing}]
\label{cor:symmetry}
Let $\mathcal{S}$ be a unitary symmetry of the system such that $[\mathcal{S}, H] = 0$. If the symmetry respects the partitioning of the Hilbert space (i.e., $[\mathcal{S}, P] = 0$), then the optimal transformation $T_{\mathrm{opt}}$ defined by the LAUT principle commutes with the symmetry:
\begin{equation}
    [\mathcal{S}, T_{\mathrm{opt}}] = 0.
\end{equation}
Consequently, the resulting effective Hamiltonian preserves the symmetry: $[H_{\mathrm{eff}}, P\mathcal{S}P] = 0$.
\end{corollary}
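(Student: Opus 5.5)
The plan is a uniqueness-plus-invariance argument. Let $\mathcal{T}_{BD}$ be the set of unitaries $T$ for which $T^\dagger H T$ is block diagonal with respect to the partition (sorting the prescribed eigenvalues into $\mathcal{H}_P$). The LAUT criterion stated above asserts that $T_{\mathrm{opt}}$ is the \emph{unique} minimizer of $\|T-\mathcal{I}\|_F$ over $\mathcal{T}_{BD}$; in the gapped, non-singular setting of the Lemma this is guaranteed by the polar-decomposition construction of Appendix~\ref{app:ebd-la}. We will show that conjugation by $\mathcal{S}$ maps $\mathcal{T}_{BD}$ into itself and preserves the objective. It then follows that $\mathcal{S}T_{\mathrm{opt}}\mathcal{S}^\dagger$ is also a minimizer, so uniqueness forces $\mathcal{S}T_{\mathrm{opt}}\mathcal{S}^\dagger=T_{\mathrm{opt}}$.

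\textbf{Key steps.} First, take $T\in\mathcal{T}_{BD}$ and set $T'=\mathcal{S}T\mathcal{S}^\dagger$. Using $[\mathcal{S},H]=0$,
\begin{equation*}
T'^\dagger H T'=\mathcal{S}T^\dagger\mathcal{S}^\dagger H\mathcal{S}T\mathcal{S}^\dagger=\mathcal{S}(T^\dagger H T)\mathcal{S}^\dagger .
\end{equation*}
Since $[\mathcal{S},P]=0$ (and hence $[\mathcal{S},Q]=0$), conjugation by $\mathcal{S}$ preserves block diagonality: $P\,\mathcal{S}X\mathcal{S}^\dagger Q=\mathcal{S}PXQ\mathcal{S}^\dagger=0$. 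The eigenvalues assigned to $\mathcal{H}_P$ are unchanged, because $P\mathcal{S}(T^\dagger HT)\mathcal{S}^\dagger P$ is unitarily equivalent to $P T^\dagger H T P$ via the restriction $P\mathcal{S}P$, which is unitary on $\mathcal{H}_P$. Hence $T'\in\mathcal{T}_{BD}$. Second, the Frobenius norm is unitarily invariant, so
\begin{equation*}
\|T'-\mathcal{I}\|_F=\|\mathcal{S}(T-\mathcal{I})\mathcal{S}^\dagger\|_F=\|T-\mathcal{I}\|_F .
\end{equation*}
Third, uniqueness gives $[\mathcal{S},T_{\mathrm{opt}}]=0$. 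Finally, $T^\dagger_{\mathrm{opt}}HT_{\mathrm{opt}}$ commutes with $\mathcal{S}$, and therefore so does $H_{\mathrm{eff}}=P T^\dagger_{\mathrm{opt}} H T_{\mathrm{opt}} P$. Because $P$ commutes with $\mathcal{S}$, we have $P\mathcal{S}P=\mathcal{S}P$, and so $[H_{\mathrm{eff}},P\mathcal{S}P]=0$.

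\textbf{Main obstacle.} The delicate point is the uniqueness of the minimizer, on which everything rests. I would justify it through the explicit construction. In the gapped regime the Lemma says every element of $\mathcal{T}_{BD}$ has the form $T=S\,\mathrm{diag}(W_P,W_Q)$, with $S$ a fixed block-diagonalizing eigenbasis. Minimizing $\|T-\mathcal{I}\|_F$ is then equivalent to maximizing $\mathrm{Re}\,\Tr(P S P\,W_P)+\mathrm{Re}\,\Tr(Q S Q\,W_Q)$. The maximizers are the unitary polar factors of $PSP$ and $QSQ$, and these are unique exactly when those blocks are invertible. That invertibility is the one-to-one correspondence assumption stated in the text.

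As an independent check that avoids any appeal to uniqueness, I would also verify directly that the polar factors are covariant. Since $[\mathcal{S},P]=0$, $P\mathcal{S}P$ is a unitary on $\mathcal{H}_P$. Choosing the eigenbasis $S$ in a symmetry-adapted way, the blocks $PSP$ and $QSQ$ commute with the corresponding restrictions of $\mathcal{S}$. Polar factors of operators commuting with a unitary also commute with that unitary, because $\sqrt{A^\dagger A}$ is a function of $A^\dagger A$.

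In resonant cases, where the Lemma's item (2) applies and a degenerate eigenspace crosses the partition, the blocks may be singular and uniqueness can fail. The corollary should then be stated under the gapped hypothesis only.
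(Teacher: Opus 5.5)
Your main argument is correct and is essentially the paper's proof. You conjugate $T_{\mathrm{opt}}$ by $\mathcal{S}$, check that the result is still block-diagonalizing with the same action, and invoke uniqueness. Your additions are sound and fill in steps the paper leaves implicit:
\begin{itemize}
\item checking that the eigenvalue sorting is preserved, via the unitary $P\mathcal{S}P$ on $\mathcal{H}_P$;
\item justifying uniqueness through the polar factors of $PSP$ and $QSQ$ (strictly, their adjoints are the maximizers of $\mathrm{Re}\,\Tr$);
\item spelling out $[H_{\mathrm{eff}},P\mathcal{S}P]=0$ from $P\mathcal{S}P=\mathcal{S}P$.
\end{itemize}

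The ``independent check,'' however, rests on a false claim. A symmetry-adapted eigenbasis $S$ does not commute with $\mathcal{S}$. Instead it satisfies $\mathcal{S}S=SD$, where $D=S^\dagger\mathcal{S}S$ represents the symmetry on the eigenvector labels (diagonal for a joint eigenbasis). So $PSP$ \emph{intertwines} $P\mathcal{S}P$ and $PDP$, namely $(P\mathcal{S}P)(PSP)=(PSP)(PDP)$, rather than commuting with $P\mathcal{S}P$.

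The paper's own symmetric Q--C--Q example is a counterexample. Take $\mathcal{S}$ to swap $|100\rangle\leftrightarrow|001\rangle$. The two qubit-like eigenvectors are symmetric and antisymmetric, and the spectrum is nondegenerate for $g\neq 0$. Hence, for every admissible choice of ordering and phases, the columns of $PSP$ are proportional to $(1,1)$ and $(1,-1)$. A $2\times 2$ matrix commuting with the swap has columns $(a,b)$ and $(b,a)$, which is impossible here.

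The check can be repaired in either of two ways.
\begin{itemize}
\item Use the intertwining relation. For invertible $PSP$, uniqueness of the polar decomposition transfers the intertwining to its polar factor. From this, $\mathcal{S}T_{\mathrm{opt}}=T_{\mathrm{opt}}\mathcal{S}$ follows.
\item More directly, note that $T_{\mathrm{opt}}=A(A^\dagger A)^{-1/2}$ with $A=SS_{\mathrm{BD}}^\dagger=\Pi P+(\mathcal{I}-\Pi)Q$, where $\Pi$ is the spectral projector of $H$ onto the selected eigenvalues. In the gapped regime $\Pi$ is a function of $H$, so $\Pi$ and $P$ both commute with $\mathcal{S}$, and hence so does $T_{\mathrm{opt}}$. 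This gives a genuinely uniqueness-free proof.
\end{itemize}

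Since your main argument does not rely on this check, the proposal as a proof of the corollary stands.
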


\begin{proof}
The proof relies on the uniqueness of the variational solution and the unitary invariance of the Frobenius norm. Let $T_{\mathrm{opt}}$ be the unique minimizer of the action $\mathcal{A}(T) = \|T - \mathcal{I}\|_F$ within the manifold of block-diagonalizing unitaries $\mathcal{T}_{\mathrm{BD}}$.

Consider the rotated transformation $T' \equiv \mathcal{S}^\dagger T_{\mathrm{opt}} \mathcal{S}$. We verify two properties of $T'$:
\begin{enumerate}[(1)]
    \item \textbf{Validity:} Since $H$ and $P$ commute with $\mathcal{S}$, the block-diagonal structure is invariant under conjugation by $\mathcal{S}$. Specifically,
    \begin{equation}
        (T')^\dagger H T' = \mathcal{S}^\dagger (T_{\mathrm{opt}}^\dagger H T_{\mathrm{opt}}) \mathcal{S}.
    \end{equation}
    Because $T_{\mathrm{opt}}^\dagger H T_{\mathrm{opt}}$ is block-diagonal and $[\mathcal{S}, P]=0$, the product remains block-diagonal. Thus, $T' \in \mathcal{T}_{\mathrm{BD}}$.

    \item \textbf{Optimality:} The Frobenius norm is unitarily invariant ($\|U A V\|_F = \|A\|_F$). The action of the rotated transformation is:
    \begin{align}
        \|T' - \mathcal{I}\|_F &= \|\mathcal{S}^\dagger T_{\mathrm{opt}} \mathcal{S} - \mathcal{S}^\dagger \mathcal{I} \mathcal{S}\|_F \nonumber \\
        &= \|\mathcal{S}^\dagger (T_{\mathrm{opt}} - \mathcal{I}) \mathcal{S}\|_F \nonumber \\
        &= \|T_{\mathrm{opt}} - \mathcal{I}\|_F.
    \end{align}
    Thus, $T'$ achieves the same minimal action value as $T_{\mathrm{opt}}$.
\end{enumerate}
Since the LAUT criterion identifies a unique variational solution (as established in Sec.~\ref{subsec:definition_of_EH}), the optimal transformation $T_{\mathrm{opt}}$ must inherently respect any symmetry $\mathcal{S}$ compatible with the partitioning ($[\mathcal{S}, P]=0$). If it did not ($[T_{\mathrm{opt}}, \mathcal{S}] \neq 0$), the symmetry operation would generate a distinct transformation $T' = \mathcal{S}^\dagger T_{\mathrm{opt}} \mathcal{S}$ with identical geometric action, contradicting the proven uniqueness.
\end{proof}


In contrast, numerical approaches relying on local operations—such as sequential Givens rotations or Schur complements—often traverse the solution manifold arbitrarily. These methods may disrupt symmetries even when the block partition nominally respects them.
Therefore, we conclude that a rigorous \textbf{one-to-one correspondence} between the mathematical procedure of block diagonalization and the physical construction of a symmetry-preserving effective Hamiltonian is established \textit{if and only if} the transformation is constrained by the LAUT criterion.

\subsection{Perturbative Block Diagonalization Aligned with LAUT}
\label{subsec:pbd}

With the LAUT framework and its symmetry-preserving advantages in place, a natural question arises: how can one construct effective Hamiltonians analytically while remaining aligned with the LAUT principle?
Exact block diagonalization (EBD-LAUT) provides numerically exact results but suffers from a high computational cost and limited analytical transparency.
Perturbative block diagonalization (PBD) addresses these limitations by offering compact, analytically tractable expressions that are widely used in theoretical modeling and experimental calibration.
Its applicability, however, is contingent on perturbation strength, expansion order, and the spectral properties of the unperturbed Hamiltonian.
We begin by summarizing these validity conditions and discussing the known limitations of the commonly employed SWT; technical details are deferred to Appendix~\ref{app:pbd-validity}.

Perturbative block diagonalization remains valid when the perturbation strength $\varepsilon$ satisfies $|\varepsilon| \leqslant \varepsilon_c = \Delta / (2|V|)$, where $\Delta$ denotes the spectral gap separating the target (low-energy) manifold from the rest of the spectrum~\cite{bravyi2011schrieffer}.
In the SWT, the effective Hamiltonian is constructed through a unitary transformation $U = e^{\mathcal{A}}$, where the anti-Hermitian generator $\mathcal{A}$ is perturbatively expanded to eliminate couplings between the low- and high-energy subspaces~\cite{bravyi2011schrieffer}.
Although the direct rotation condition underlying the SWT is formally equivalent to the geometric LAUT criterion~\cite{bravyi2011schrieffer}, the perturbative series of $\mathcal{A}$ does not, in general, satisfy the LAUT principle.
Recent analyzes further show that the SWT expansion deviates from the LAUT expansion starting at the third order~\cite{mankodi2024perturbative}.
Despite its widespread use, the SWT exhibits several notable drawbacks:
\begin{enumerate}[(1)]
    \item The complexity of computing $\mathcal{A}$ grows factorially with order~\cite{wurtz2020variational, bravyi2011schrieffer}, making manual derivations impractical and automated implementations computationally expensive~\cite{li2022nonperturbative}.
    \item SWT can introduce spurious singularities that are absent in the exact dynamics, potentially leading to unphysical predictions~\cite{sanz2016beyond}.
    \item SWT is primarily designed for two-block decompositions (e.g., separating low- and high-energy sectors). Extending it to multi-block scenarios, such as full block diagonalization across multiple manifolds, requires significant modifications and imposes heavy computational overhead~\cite{hormann2023projective, araya2025pymablock}.
\end{enumerate}

These limitations highlight the need for alternative perturbative frameworks that align with the LAUT principle while offering improved tractability and generality.

A natural candidate for a perturbative scheme aligned with the LAUT principle is the Bloch--Brandow (BB) formalism~\cite{bloch1958theorie,killingbeck2003bloch,shavitt2009many,brandow2005effective}, which constructs a similarity transformation that decouples a model space $P$ from its complement $Q = \mathcal{I} - P$. In the absence of unitarity constraints, the BB solution coincides with the LAUT result; however, the resulting effective Hamiltonian is inherently non-Hermitian and typically requires post hoc symmetrization. Notably, it has been established that the symmetrized form remains algebraically equivalent to the canonical Hermitian expansion up to the fourth order~\cite{richert1976comparison}.

Crucially, Takayanagi's refinements~\cite{takayanagi2013study,takayanagi2016effective,takayanagi2020linked} introduced efficient recursive expansions, widely used in nuclear and quantum chemistry for incorporating virtual excitations and deriving closed-form effective interactions. For a detailed review, including projection operators and eigenstate decompositions, see Appendix~\ref{app:bb-formalism}. 

While these techniques ensure analytical tractability, a pivotal advantage of adopting the BB formalism within the LAUT framework is its strict adherence to physical conservation laws. We formalize this property in the following corollary:
\begin{corollary}\label{prop:bb_symmetry}
    \textbf{(Symmetry Preservation of the BB Formalism).} The Bloch-Brandow effective Hamiltonian $H_{\mathrm{eff}}^{\mathrm{BB}}$ preserves a system symmetry $\mathcal{S}$ within the target subspace if and only if the projector $P$ commutes with the symmetry operator:
\begin{equation}
    [\,H_{\mathrm{eff}}^{\mathrm{BB}},\,\mathcal{S}\,]=0 \;\Longleftrightarrow\; [\,P,\mathcal{S}\,]=0.
\end{equation}
\end{corollary}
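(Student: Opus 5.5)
The plan is to work directly with the Bloch--Brandow wave operator and prove the two implications separately. Throughout, we assume $[\mathcal{S},H]=0$. We use the standard BB construction from Appendix~\ref{app:bb-formalism}: the wave operator is $\Omega = P + \chi$, with $\chi = Q\chi P$. It is defined by the decoupling (generalized Bloch) equation $Q\,(H\Omega - \Omega H \Omega)P = 0$, equivalently $H\Omega = \Omega P H\Omega$. The BB effective Hamiltonian is $H_{\mathrm{eff}}^{\mathrm{BB}} = P H \Omega = PHP + PH\chi$, or its symmetrized Hermitian version.

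For the direction ($\Leftarrow$), assume $[P,\mathcal{S}]=0$, so $[Q,\mathcal{S}]=0$ as well. Conjugating the Bloch equation by $\mathcal{S}$ shows that $\chi' := \mathcal{S}^\dagger \chi \mathcal{S}$ again satisfies $\chi' = Q\chi' P$ and solves the same nonlinear equation. Indeed, $H$, $P$ and $Q$ are all invariant under the conjugation, so the equation maps to itself. The key step is uniqueness of $\chi$. In the gapped regime (Lemma, item (1)), $\chi$ is fixed by the choice of $d$ eigenvectors whose $P$-projections are linearly independent, $\chi = Q\,\Psi (P\Psi)^{-1}$. Since $\mathcal{S}$ maps the selected eigenspace to itself (it commutes with $H$ and preserves the eigenvalue labels sorted into $P$), $\chi'$ selects the same eigenspace, and therefore $\chi'=\chi$. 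Order by order, one can see this directly: every term of the Takayanagi recursion is a product of $PHP$, $PHQ$, $QHP$, $QHQ$ and resolvents $(E-QHQ)^{-1}$, each of which commutes with $\mathcal{S}$. Thus $[\Omega,\mathcal{S}]=0$, and hence $[H_{\mathrm{eff}}^{\mathrm{BB}},\mathcal{S}] = [PH\Omega,\mathcal{S}]=0$. The commutator here is understood on $\mathcal{H}_P$, where $\mathcal{S}$ restricts to $P\mathcal{S}P$, a unitary. The symmetrized form $(\Omega^\dagger\Omega)^{1/2} H_{\mathrm{eff}}^{\mathrm{BB}} (\Omega^\dagger\Omega)^{-1/2}$ inherits the property, since $\Omega^\dagger\Omega = P + \chi^\dagger\chi$ commutes with $\mathcal{S}$ and so do its functional calculus powers. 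This also matches Corollary~\ref{cor:symmetry}, because the unitary built from $\Omega$ by polar decomposition is exactly the LAUT transformation.

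For the direction ($\Rightarrow$), the statement must be read with the meaning that $H_{\mathrm{eff}}^{\mathrm{BB}}$ preserves $\mathcal{S}$ as an operator on the target space, i.e., $\mathcal{S}$ leaves $\mathcal{H}_P$ invariant and commutes with $H_{\mathrm{eff}}^{\mathrm{BB}}$ there. The plan is to note that for $[H_{\mathrm{eff}}^{\mathrm{BB}},\mathcal{S}]=0$ to be meaningful with $H_{\mathrm{eff}}^{\mathrm{BB}} = PH_{\mathrm{eff}}^{\mathrm{BB}}P$, we need $\mathcal{S}P\mathcal{H}\subseteq P\mathcal{H}$. Unitarity of $\mathcal{S}$ then gives invariance of $Q\mathcal{H}$ as well, hence $[P,\mathcal{S}]=0$. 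Alternatively, one can argue that a generic $H$ whose effective Hamiltonian commutes with $\mathcal{S}$ as a full-space operator forces $\mathcal{S}P=P\mathcal{S}P$, and applying the same argument to $\mathcal{S}^\dagger$ yields $P\mathcal{S}=P\mathcal{S}P$.

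The main obstacle is this converse. As literally stated it is not true for degenerate choices: if $H_{\mathrm{eff}}^{\mathrm{BB}}$ is proportional to $P$, it commutes with things that do not preserve $P$ on the compressed space. So I would first make the interpretation precise, requiring that the symmetry act within $\mathcal{H}_P$, and only then derive $[P,\mathcal{S}]=0$ from invariance of $\operatorname{Ran}P$ under the unitary $\mathcal{S}$. The forward direction's only delicate point is uniqueness of the Bloch solution, which I would take from the gapped-regime Lemma.
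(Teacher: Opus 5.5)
Your argument is correct, and for the sufficiency direction it takes a different route from the paper. The paper argues directly from the defining equation $H_{\mathrm{eff}}^{\mathrm{BB}}\ket{\phi_i}=E_i\ket{\phi_i}$. It takes the $\ket{\Psi_k}$ to be joint eigenvectors of $H$ and $\mathcal{S}$, uses $[P,\mathcal{S}]=0$ to get $\mathcal{S}\ket{\phi_i}=s_i\ket{\phi_i}$, and concludes that $H_{\mathrm{eff}}^{\mathrm{BB}}$ and $\mathcal{S}$ are simultaneously diagonal in the non-orthogonal basis $\{\ket{\phi_i}\}$ of $\mathcal{H}_P$. You instead prove the stronger statement $[\Omega,\mathcal{S}]=0$: $\chi$ is the unique graph operator of the selected spectral subspace $\mathcal{M}$, and $\mathcal{S}$ leaves $\mathcal{M}$ invariant.

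The paper's version is shorter. Yours needs no joint eigenbasis and shows that the decoupling map itself is symmetric. As a result, the Hermitian form $(\Omega^\dagger\Omega)^{1/2}H_{\mathrm{eff}}^{\mathrm{BB}}(\Omega^\dagger\Omega)^{-1/2}$ and the LAUT unitary built from $\Omega$ inherit the symmetry at once. It also exhibits the same uniqueness mechanism as Corollary~\ref{cor:symmetry}. Both proofs rest on the same genericity: $P\Psi$ invertible, and no degeneracy straddling the partition.

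Two small fixes are needed here.
\begin{enumerate}
\item The conjugation $\mathcal{S}^\dagger\chi\mathcal{S}$ presumes $\mathcal{S}$ is unitary, whereas the paper's appendix takes $\mathcal{S}$ Hermitian (e.g.\ excitation number). Run the graph argument directly instead. For $p\in\mathcal{H}_P$, the vector $\mathcal{S}(p+\chi p)\in\mathcal{M}$ has $P$-part $\mathcal{S}p$, so its $Q$-part $\mathcal{S}\chi p$ must equal $\chi\mathcal{S}p$. This works for any operator that commutes with $P$ and preserves $\mathcal{M}$.
\item Your order-by-order aside does not describe the Takayanagi series of Eq.~\eqref{eq:PBD_BT_results}. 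That series is built from $V$ and unperturbed denominators, so its truncations are symmetric only if $\mathcal{S}$ commutes with $H_0$ and $V$ separately. Drop or qualify that remark.
\end{enumerate}

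On the converse, your caution is justified, and your treatment is more careful than the paper's. The paper's necessity step sets the $\mathcal{S}$-eigenvalue of $\ket{\phi_i}$ equal to $s_i$, which already presupposes $\mathcal{S}P\ket{\Psi_i}=P\mathcal{S}\ket{\Psi_i}$. It then passes from the $d$ target eigenvectors to an operator identity on all of $\mathcal{H}$.

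Your first reading, which assumes $\mathcal{S}$ leaves $\mathcal{H}_P$ invariant, assumes the conclusion. Lead with the full-space version instead, and state its hypothesis exactly. If $H_{\mathrm{eff}}^{\mathrm{BB}}$ is nonsingular on $\mathcal{H}_P$ (no selected $E_i$ vanishes), then $\operatorname{Ran}H_{\mathrm{eff}}^{\mathrm{BB}}=\mathcal{H}_P$ and $\ker H_{\mathrm{eff}}^{\mathrm{BB}}=\mathcal{H}_Q$ are both $\mathcal{S}$-invariant, which is precisely $[P,\mathcal{S}]=0$.

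Two facts show this assumption cannot be omitted. First, $H\to H+c$ leaves $\Omega$ unchanged and shifts $H_{\mathrm{eff}}^{\mathrm{BB}}$ by $cP$, hence shifts the commutator by $c[P,\mathcal{S}]$. Second, in the compressed reading the implication already fails for $d=1$. Together these confirm your point: the ``only if'' direction holds only under an explicit nondegeneracy assumption of this kind.
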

This corollary establishes that symmetry-adapted partitioning is sufficient to guarantee a physically consistent effective model. The rigorous proof of Corollary~\ref{prop:bb_symmetry} is provided in Appendix~\ref{app:symmetry_bb}.

Unlike the SWT, the perturbative block diagonalization via the Bloch--Brandow formalism (PBD-BB) approach satisfies the LAUT criterion by construction without enforcing unitarity, thereby simplifying implementation and enabling systematic extensions. Its analytical tractability and symmetry compatibility make it particularly attractive for superconducting circuits and quantum many-body systems, where multiple energy manifolds are weakly coupled or symmetry-partitioned.
This BB approach forms the perturbative counterpart in our unified LAUT-based framework, enabling analytic EHs that complement the non-perturbative EBD-LAUT method.

\subsection{Verification of Effective Hamiltonian}

While the LAUT principle provides a rigorous theoretical lower bound on dynamical fidelity, practical applications require a more sensitive diagnostic. The asymptotic bound becomes loose in the strong-coupling regime ($||T-\mathcal{I}||_F^2 \sim \mathcal{O}(D)$), where the distinction between ``valid" and ``invalid" effective models requires precise quantification. This motivates the introduction of a refined, physically meaningful metric that directly quantifies information retention and dynamical accuracy within the computational subspace.

To address this, we introduce a static, fidelity-inspired soundness metric $I(H_{\mathrm{eff}})$, derived from the subspace information retention~\cite{pedersen2007fidelity}:
\begin{equation}
    I(H_{\mathrm{eff}}) = \frac{1}{d(d+1)} \left[ \Tr(M M^\dagger) + |\Tr(M)|^2 \right],
    \label{eq:soundness_metric}
\end{equation}
where $d = \dim(\mathcal{H}_{\mathrm{C}})$, and $M = U_{\text{ideal}}^\dagger P U_{\text{real}} P$ capture the projection overlap onto $\mathcal{H}_{\mathrm{C}}$.
By identifying $U_{\text{ideal}}$ and $U_{\text{real}}$ with the eigenbases of $H_{\mathrm{eff}}$ and $H$, respectively, we derive the explicit form:
\begin{equation}
\label{eq:measure_of_EH}
\begin{aligned}
    I(\Heff) =& \frac{1}{d(d+1)} \Bigg[ \operatorname{Tr}\left(P S_{\text{BD}} S_{\text{BD}}^\dagger P\right) \\
    &+ \left| \operatorname{Tr}\left(P (S_{\text{BD}} S_{\text{BD}}^\dagger)^{1/2} P \right) \right|^2 \Bigg],
\end{aligned}
\end{equation}
where $S_{\text{BD}}$ denotes the block-diagonal component of the full eigenbasis. Physically, the first term quantifies the population retained within the target subspace after the basis transformation, while the second term captures the preservation of relative phase coherence. This metric satisfies $I(\Heff) \in [0,1]$, with $I(\Heff)=1$ indicating perfect recovery of the original dynamics within $\mathcal{H}_{\mathrm{C}}$.

Functionally, this metric serves as a \textbf{post hoc complement} to the variational LAUT criterion:
\begin{itemize}
    \item The \textbf{LAUT principle} guides the \textit{construction} of $H_{\mathrm{eff}}$ by minimizing geometric distortion.
    \item The \textbf{Soundness metric} $I(H_{\mathrm{eff}})$ certifies the \textit{quality} of the result, sensitively detecting leakage and hybridization errors that escape purely spectral analysis.
\end{itemize}
This dual structure enables a unified workflow for designing and benchmarking effective Hamiltonians in realistic quantum hardware. As demonstrated in Sec.~\ref{sec:benchmarking}, $I(\Heff)$ strongly correlates with the time-averaged fidelity $\bar{\mathcal{F}}$. This establishes it as a reliable and computationally efficient proxy for global dynamical accuracy (averaged over all initial states), eliminating the need for explicit time evolution.

\paragraph*{Practical considerations and numerical implementation.-}
The validity of the metrics introduced above relies on the consistent identification of the target subspace across varying parameter regimes. Ensuring the adiabatic continuity of eigenstate labeling is critical; misidentification—often triggered by level crossings—can lead to unphysical discontinuities in $H_{\mathrm{eff}}$ or erroneous assessments of leakage errors. While recent algorithmic advances~\cite{goto2024labeling} offer strategies for automated labeling, robustly handling highly hybridized spectra in large Hilbert spaces remains a significant challenge.

In addition, computational scalability warrants attention. Although the static metric $I(H_{\mathrm{eff}})$ avoids the prohibitive cost of explicit time-domain simulations, it still necessitates obtaining the relevant eigenstates of the full Hamiltonian. The cost of this partial diagonalization typically dominates the workflow for large-scale systems. Future research may address this bottleneck by exploring low-rank approximations or randomized trace estimators to evaluate $I(H_{\mathrm{eff}})$ without full spectral decomposition.

\subsection{Protocol for Solving Effective Hamiltonians}
\label{subsec:BD_protocol}
Building on the least-action framework, we present a practical protocol for systematically constructing effective Hamiltonians. These reduced models provide a principled way to simplify complex quantum many-body systems by focusing on dynamically relevant subspaces. Across condensed matter theory and quantum information science, effective Hamiltonians enable the elimination of virtual processes and irrelevant degrees of freedom, yielding low-dimensional models that retain essential physical content. While perturbative techniques such as the SWT and BB formalism have long been employed for this purpose, their systematic integration with modern symmetry-guided and fidelity-driven criteria remains underexplored. A unified protocol that reconciles these objectives is particularly valuable for preserving emergent phenomena, including symmetry-protected phases and collective excitations.

In practical quantum platforms—such as superconducting circuits, trapped ions, or Rydberg arrays—couplings to ancillary modes, higher excited states, or symmetry-breaking imperfections are often non-negligible. Neglecting these effects can lead to leakage, residual interactions, or distortions of target dynamics. Our protocol addresses this challenge by iteratively constructing effective Hamiltonians via PBD-BB while explicitly maintaining dynamical fidelity and symmetry preservation. The resulting models are analytically tractable, experimentally relevant, and scalable, enabling quantitatively accurate modeling of complex quantum systems.

\subsubsection*{Iterative Block Diagonalization Strategy}

\textit{Objective: Transform the microscopic Hamiltonian into an approximately block-diagonal form while faithfully reproducing the original dynamics.}

Let \( H_{\mathrm{full}} \) be the microscopic Hamiltonian and \( \{H_{\mathrm{ideal}}\} \) the set of block-diagonal Hamiltonians consistent with a symmetry group \( \mathcal{S} \). The target effective Hamiltonian satisfies
\begin{equation}
    H_{\mathrm{eff}} \in \{H_{\mathrm{ideal}}\},
\end{equation}
and accurately approximates dynamics within a designated computational subspace \( \mathcal{H}_C \). This is achieved via a sequence of unitary transformations \( \{T^{[n]}\} \), each designed to target and eliminate a particular class of undesired terms in the Hamiltonian. By addressing system–ancilla couplings, symmetry-breaking perturbations, and other off-block components in turn, the transformation sequence yields a progressively block-diagonal structure.
\begin{equation}
    H^{[n+1]} = [T^{[n]}]^\dagger H^{[n]} T^{[n]}, \quad H^{[0]} = H_{\mathrm{full}}.
\end{equation}
Each \( T^{[n]} \) is derived perturbatively, and under the LAUT criterion, the sequence converges toward an optimal block-diagonal approximation of \( H_{\mathrm{full}} \).

\subsubsection*{Step 1: Ancillary Subsystem Decoupling}

\textit{Objective: Remove hybridization between the system and ancillary degrees of freedom.}

Ancillary components such as couplers or resonators facilitate tunable interactions but may induce leakage or unwanted entanglement. We isolate the computational subspace
\begin{equation}
    \mathcal{H}_C = \mathrm{span}\{\ket{\phi_i}_S \otimes \ket{0}_A\},
\end{equation}
where \( \ket{0}_A \) is the ground state of the ancilla. The total Hamiltonian takes the form
\begin{equation}
    H_{\text{full}} = H_S \otimes \mathcal{I}_A + \mathcal{I}_S \otimes H_A + \varepsilon_1 V_{\text{int}},
\end{equation}
with \( \varepsilon_1 \ll 1 \) controlling the interaction strength. Applying a PBD transformation yields an effective Hamiltonian \( H_{\text{decoupled}} \) that eliminates ancilla excitations. Perturbative validity requires
\begin{equation}
    |\varepsilon_1| \leq \varepsilon_c, \quad \varepsilon_c = \frac{\Delta}{2 \|V_{\text{int}}\|},
\end{equation}
where \( \Delta \) denotes the spectral gap separating \( \mathcal{H}_C \) from its orthogonal complement.

\subsubsection*{Step 2: Interband Coupling Elimination}

\textit{Objective: Suppress transitions between symmetry sectors to preserve the block structure.}

After decoupling the ancilla modes, the reduced Hamiltonian reads:
\begin{equation}
    H_{\text{decoupled}} = H_0 + \varepsilon_2 V_2,
\end{equation}
where \( H_0 \) respects a symmetry decomposition \( \mathcal{H} = \bigoplus_\lambda \mathcal{H}_\lambda \), and \( V_2 \) introduces weak interband coupling. Applying a second BDT, the effective Hamiltonian becomes
\begin{equation}
    H_{\mathrm{eff}} = \sum_\lambda P_\lambda H_{\mathrm{eff}} P_\lambda,
\end{equation}
with \( P_\lambda \) projecting onto sector \( \mathcal{H}_\lambda \). Perturbative consistency requires
\begin{equation}
    |\varepsilon_2| \leq \varepsilon_c, \quad \varepsilon_c = \frac{\Delta_2}{2 \|V_2\|},
\end{equation}
where \( \Delta_2 \) is the minimal interband gap. This step preserves the original symmetry structure imposed by \( G_0 \) and yields a symmetry-respecting \( H_{\mathrm{eff}} \) with renormalized parameters.

\subsubsection*{Step 3: Validity Regime Assessment}

\textit{Objective: Identify the conditions under which the perturbative expansion remains reliable.}

Block-diagonal construction becomes unreliable in the presence of near-degeneracies, typically when
\begin{equation}
    |g_{ij}/(E_i^{(0)} - E_j^{(0)})| \geqslant 1, \quad \ket{i} \in \mathcal{H}_C,\ \ket{j} \in \mathcal{H}_{\mathrm{NC}},
\end{equation}
and $g_{ij}$ denotes the coupling between $\ket{i}$ and $\ket{j}$.
Such resonance conditions deteriorate the convergence of perturbative expansions due to small energy denominators. We identify the onset of breakdown both analytically—by inspecting perturbation denominators—and numerically, using the fidelity-based indicator \( I(H_{\mathrm{eff}}) \) defined in Eq.~\eqref{eq:measure_of_EH}, which quantifies how faithfully \( H_{\mathrm{eff}} \) captures the full system dynamics over long timescales. When necessary, the EBD-LAUT method provides a robust, non-perturbative alternative capable of addressing strongly hybridized or near-resonant scenarios.

\subsubsection*{Step 4: Dynamical Fidelity Benchmarking}

\textit{Objective: Quantify how well the effective Hamiltonian captures real-time evolution.}

We compare the time evolution under \( H_{\mathrm{eff}} \) and \( H_{\mathrm{full}} \):
\begin{align}
    i \frac{d}{dt} \ket{\psi_{\mathrm{eff}}(t)} &= H_{\mathrm{eff}} \ket{\psi_{\mathrm{eff}}(t)}, \\
    i \frac{d}{dt} \ket{\psi_{\mathrm{exact}}(t)} &= H_{\mathrm{full}} \ket{\psi_{\mathrm{exact}}(t)}.
\end{align}
Projecting the full dynamics onto \( \mathcal{H}_C \), define
\begin{equation}
    \ket{\psi_p(t)} = P_C \ket{\psi_{\mathrm{exact}}(t)}.
\end{equation}
The instantaneous and time-averaged fidelities are
\begin{equation}
    \mathcal{F}(t) = \left| \braket{\psi_p(t)}{\psi_{\mathrm{eff}}(t)} \right|^2, \quad
    \bar{\mathcal{F}} = \frac{1}{T} \int_0^T \mathcal{F}(t)\, \mathrm{d}t.
\end{equation}
High values of \( \bar{\mathcal{F}} \) confirm that \( H_{\mathrm{eff}} \) faithfully capture the full dynamics within the desired subspace.

\subsubsection*{Outlook: Embedding into Many-Body Chains}

\textit{Extension: Generalize the protocol to one-dimensional quantum simulators via bootstrap embedding.}

The above strategy naturally extends to chain geometries using bootstrap embedding theory (BET)~\cite{welborn2016bootstrap,ye2019bootstrap,liu2023bootstrap}. The steps are:
\begin{enumerate}
    \item \textbf{Local Construction:} Derive few-site effective Hamiltonians using PBD-BB or EBD-LAUT.
    \item \textbf{Bootstrap Embedding:} Embed the local EHs into a larger environment that is iteratively updated by local observables.
    \item \textbf{Global Convergence:} Iterate until a self-consistent many-body EH is obtained.
\end{enumerate}
This framework enables scalable and physically transparent modeling of strongly correlated systems with tunable local interactions.

\paragraph*{Summary.}
This section has outlined a general protocol for effective Hamiltonian construction via block diagonalization, along with two representative applications: ancillary subsystem decoupling and interband coupling elimination. These examples illustrate how the approach systematically simplifies complex quantum systems while preserving dynamical and symmetry fidelity, paving the way for scalable and accurate simulation frameworks.

\section{Benchmarking, Validation and Application of the LAUT Framework}
\label{sec:benchmarking}

To evaluate the practical performance of the least-action (LAUT) framework, we benchmark a set of effective-Hamiltonian (EH) constructions on experimentally relevant superconducting-circuit models.  Using the three-mode qubit–coupler–qubit (Q–C–Q) architecture as a unified platform (Appendix~\ref{app:QCQ model}), we compare exact block diagonalization via the least-action criterion (EBD-LAUT), its perturbative implementation of the Bloch–Brandow formalism (PBD-BB), and two commonly used non-variational techniques: Schrieffer–Wolff transformations (SWT) and Jacobi-style Givens rotations (GR). Our comparison emphasizes three desiderata for an EH method: \emph{accuracy}, \emph{symmetry preservation}, and \emph{robustness}, and it leverages the fidelity and soundness diagnostics introduced in Sec.~\ref{sec: EH theory and application}.

The benchmarks are organized into three layers.  
(i) An \emph{exact-level} comparison of the Q--C--Q model highlights structural differences between variational and nonvariational constructions and demonstrates the importance of symmetry constraints.  
(ii) An \emph{experiment-level} validation using the driven cross-resonance (CR) gate shows how LAUT-consistent perturbative expansions remain accurate under experimentally relevant strong drives.  
(iii) A \emph{beyond-RWA} analysis examines counter-rotating and multi-photon processes, illustrating how PBD-BB captures renormalizations that standard RWA-based models miss.

We employ Fourier-based spectral extraction (FFT analysis) of population dynamics as a numerical and experimental benchmark for coupling strengths, where applicable. FFT is used solely as a diagnostic tool, not as an EH construction method; detailed derivations are given in Appendix~\ref{app:effective_coupling_FFT}. For analytical consistency, Appendix~\ref{app:three_level} presents a three-level derivation verifying that PBD-BB reproduces the fourth-order LAUT expansion exactly, while SWT begins to deviate at that order.

\subsection{Exact EH Solving Comparison}
\label{subsec:exact_EH_benchmark}
To assess the structural and quantitative advantages of the LAUT framework, we benchmark it against two widely used non-variational approaches: (i) dynamical spectral extraction via Fourier analysis~\cite{sung2021realization}, used here purely as a numerical reference, and (ii) Jacobi-style block diagonalization based on Givens rotations (GR)~\cite{li2022nonperturbative}.  
All comparisons are performed on the full microscopic Hamiltonian of the three-mode Q--C--Q circuit,
\begin{equation}\label{eq:two_qubit_system}
\begin{aligned}
H_{2Q} = &\sum_{i\in\{1,2,c\}}
\left(\omega_i a^{\dagger}_i a_i
+ \frac{\beta_i}{2}\, a^{\dagger}_i a^{\dagger}_i a_i a_i\right)
\\[1mm]
&- g_{12}\,(a_1 - a_1^{\dagger})(a_2 - a_2^{\dagger})
\\[1mm]
&- \sum_{j\in\{1,2\}} g_j\,
(a_j - a_j^{\dagger})(a_c - a_c^{\dagger}),
\end{aligned}
\end{equation}
with parameters chosen to match typical superconducting-qubit devices.

\paragraph*{Symmetry preservation.-}
While Givens rotations (GR) provide a numerically robust and convergent approach for exact block diagonalization~\cite{li2022nonperturbative}, they typically navigate the unitary manifold without an explicit geometric constraint to fix the gauge freedom. 
Consequently, the resulting effective model may not strictly inherit the discrete symmetries of the microscopic system. 
We quantify this spurious symmetry breaking using the frequency mismatch metric $\zeta=|\tilde{\omega}_1-\tilde{\omega}_2|/2\pi$. 
In a perfectly symmetric configuration~\footnote{Parameters: $\omega_{1,2}/2\pi=4.16~\mathrm{GHz}$, $\beta_{1,2}/2\pi=-220~\mathrm{MHz}$, $g_{1,2}/2\pi=72.5~\mathrm{MHz}$, $\beta_c/2\pi=-90~\mathrm{MHz}$, and $g_{12}/2\pi=5~\mathrm{MHz}$.}, we find that EBD-LAUT yields identical effective qubit frequencies, thereby exactly preserving the exchange symmetry ($\zeta=0$). 
In contrast, standard GR algorithms generally yield $\tilde{\omega}_1\neq \tilde{\omega}_2$ ($\zeta > 0$). 
This establishes that the iterative elimination procedure in GR introduces a path-dependent gauge bias, spuriously breaking the permutation invariance. 
Finally, time-averaged fidelity benchmarks [Fig.~\ref{subfig:Fidelity_verification}] confirm that while GR produces valid spectral results, it violates the rigorous fidelity lower bound predicted by the Theorem in Sec.~\ref{thm:LAUT_bound_trace_fidelity}. 
This indicates that the bound is intrinsic to the geometric least-action path, which EBD-LAUT uniquely satisfies.

\begin{figure}[tb]
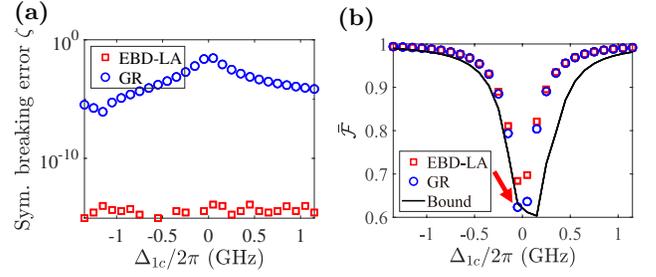

    \centering
    \subfigure{
        \begin{minipage}[b]{0.46\columnwidth}
            \centering
            \begin{overpic}[scale=0.30]{Figures/Case_Study/Symmetry_verification.eps}
                \put(0,82){\textbf{(a)}}
            \end{overpic}
        \end{minipage}\label{subfig:Symmetry_verification}
    }
    \subfigure{
        \begin{minipage}[b]{0.46\columnwidth}
            \centering
            \begin{overpic}[scale=0.39]{Figures/Case_Study/Fidelity_verification.pdf}
                \put(0,75){\textbf{(b)}}
            \end{overpic}\label{subfig:Fidelity_verification}
        \end{minipage}
    }
    \caption{(Color online) Analysis of symmetry preservation and transformation fidelity. \textbf{(a)}~The symmetry-breaking metric $\zeta$. LAUT maintains exact symmetry ($\zeta \approx 0$) by construction, while GR introduces spurious asymmetry. \textbf{(b)}~LAUT yields superior fidelity in the near-resonant regime, strictly saturating the analytical lower bound of the Theorem. GR, lacking variational optimality, falls below this limit (red arrow). Data are shown versus the qubit--coupler detuning.}
\end{figure}

\paragraph*{Effective coupling extraction.-}
Figure~\ref{fig:compare_LAUT} benchmarks the effective coupling $2\tilde{g}$ derived from the LAUT and GR frameworks against a numerically exact FFT reference. 
Throughout this paper, we employ the notation $\eta_{X}$ to denote the relative deviation of a physical quantity $X$. Accordingly, the coupling accuracy is quantified by the relative error $\eta_{g} = |(\tilde{g}-\tilde{g}_{\mathrm{FFT}})/\tilde{g}_{\mathrm{FFT}}|$.
As illustrated in Fig.~\ref{fig:compare_LAUT}, LAUT maintains superior accuracy across the parameter space. Crucially, in the quasi-dispersive regime ($\Delta_{1c} \to 0$), where hybridization is pronounced, GR exhibits systematic deviations. This discrepancy arises because GR lacks the Least-Action selection criterion, whereas the LAUT framework explicitly employs this variational principle to filter and identify the optimal effective model. Consequently, LAUT suppresses the relative error by nearly two orders of magnitude (reaching $\eta_g \sim 10^{-4}$), confirming the necessity of the least-action optimization in extracting physically faithful parameters.

\begin{figure}[t]
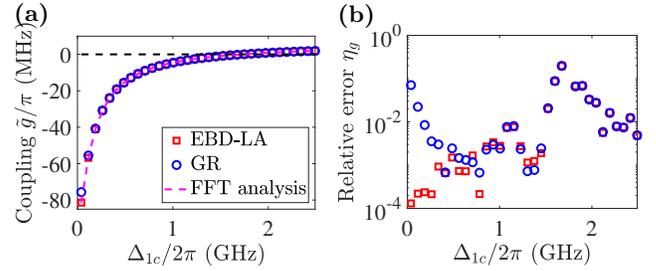

    \centering
    \subfigure{
        \begin{minipage}[b]{0.46\columnwidth}
            \centering
            \begin{overpic}[scale=0.30]{Figures/Case_Study/Coupling_strength.eps}
                \put(0,80){\textbf{(a)}}
            \end{overpic}
        \end{minipage}
    }
    \subfigure{
        \begin{minipage}[b]{0.46\columnwidth}
            \centering
            \begin{overpic}[scale=0.30]{Figures/Case_Study/Coupling_strength_relative_error.eps}
                \put(0,80){\textbf{(b)}}
            \end{overpic}
        \end{minipage}
    }
    \caption{(Color online) Effective coupling comparison for \(\omega_1=\omega_2\) in the Qubit--Coupler--Qubit system.  
    \textbf{(a)}~Coupling \(2\tilde{g}\) obtained via LAUT, GR, and FFT extraction.  
    \textbf{(b)}~Relative error \(\eta_{g} = |(\tilde{g}-\tilde{g}_{\mathrm{FFT}})/\tilde{g}_{\mathrm{FFT}}|\).  
    LAUT tracks the FFT benchmark closely across the full range, while GR shows large resonance-enhanced errors.}
    \label{fig:compare_LAUT}
\end{figure}

\paragraph*{Full-parameter dynamical benchmarking.-}
To strictly evaluate the performance beyond symmetric configurations, we performed a two-dimensional sweep of the qubit frequencies $(\omega_1, \omega_2)$ with fixed $\omega_{\mathrm{c}}$. The dynamical fidelity was computed by averaging over a $2,000$~ns evolution window and across the initial computational basis states $\{\ket{100}, \ket{001}\}$. We quantify the global performance gap using the relative fidelity advantage, defined as $\eta_{_{\mathcal{F}}}=(\bar{\mathcal{F}}_{\text{LAUT}} - \bar{\mathcal{F}}_{\text{GR}})/\bar{\mathcal{F}}_{\text{LAUT}}$. In Fig.~\ref{fig:Compare_LAUT_GR}, regions of strong hybridization—explicitly identified by the resonance condition $g/\Delta \ge 1$—are masked in shadow. In this regime, the coupling strength becomes comparable to the energy gap, destroying the scale separation required for a valid effective model description~\cite{bravyi2011schrieffer} and leading to a breakdown in model fidelity ($\bar{\mathcal{F}} < 0.8$). Within the physically valid parameter space (colored area), the map is strictly positive ($\eta_{\mathcal{F}} > 0$), demonstrating that LAUT consistently outperforms GR. This confirms the global robustness of the LAUT framework in maintaining dynamical accuracy throughout the domain of physical utility.

\begin{figure}[t]
    \centering
    \includegraphics[width=0.8\columnwidth]{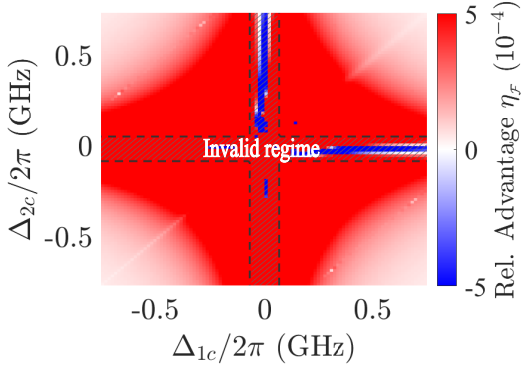}
    \caption{
    (Color online) Global performance comparison over the full two-dimensional parameter space ($\Delta_{1c}, \Delta_{2c}$) in the Qubit--Coupler--Qubit system. 
    The heatmap displays the relative fidelity advantage of the LAUT framework, defined as $\eta_{_{\mathcal{F}}}=(\bar{\mathcal{F}}_{\text{LAUT}} - \bar{\mathcal{F}}_{\text{GR}})/\bar{\mathcal{F}}_{\text{LAUT}}$. 
    The central cross-shaped region, labeled as ``Invalid regime'' (diagonal hatching enclosed by dashed lines), corresponds to areas of strong hybridization ($g/\Delta \gtrsim 1$) where the effective Hamiltonian description naturally breaks down ($\bar{\mathcal{F}} < 0.8$). 
    In the physically valid regime (the unhatched area), the advantage is {strictly positive} (red), demonstrating that the LAUT framework outperforms the GR method throughout the entire applicable parameter space.
}  
    \label{fig:Compare_LAUT_GR}
\end{figure}

Across symmetry tests, coupling extraction, and broad dynamical benchmarking, the LAUT framework uniquely integrates symmetry-preserving, quantitatively accurate, and dynamically robust methods among those evaluated. GR, in contrast, introduces intrinsic symmetry breaking and suffers from resonance-enhanced errors. These results establish LAUT as a principled and reliable foundation for the more demanding case studies examined in the following sections.

\subsection{Cross-Resonance Gate: Experiment-Level Validation}

The driven cross-resonance (CR) gate provides a stringent and experimentally grounded test of effective-Hamiltonian (EH) construction. In fixed-frequency superconducting architectures, a microwave drive on the control qubit induces an effective $ZX$ interaction on the target through their static capacitive coupling. This mechanism operates in a challenging quasi-dispersive regime where (i) the drive amplitude can be a sizable fraction of the qubit frequency (in the rotating frame), (ii) the qubit–qubit detuning is comparable to the anharmonicity, and (iii) static coupling is non-negligible. Under these conditions, standard perturbative methods such as SWT become quantitatively unreliable, while direct numerical simulation of the full multi-level system remains feasible as a reference. The CR gate, therefore, serves as an ideal testbed for evaluating both the nonperturbative LAUT-based block-diagonalization (EBD-LAUT) and its perturbative counterpart (PBD-BB).

\paragraph*{Model setup.-}
We consider two coupled transmons with an un-driven Hamiltonian
\begin{equation}
\begin{aligned}
H_0 = 
\sum_{i=1,2} 
\left( \omega_i\, a_i^{\dagger} a_i 
+ \frac{\beta_i}{2}\, a_i^{\dagger} a_i^{\dagger} a_i a_i \right)
+ J (a_1^{\dagger}+a_1)(a_2^{\dagger}+a_2),
\end{aligned}
\end{equation}
where $\omega_i$ and $\beta_i$ denote the transmon frequencies and anharmonicities, and $J$ is the static coupling.  
A resonant microwave drive applied to qubit~1 has the form
\[
H_{\mathrm{d}}=\frac{\Omega_{\rm CR}}{2}(a_1 + a_1^{\dagger}),
\]
and we work in a rotating frame generated by
\[
H_R=\omega_{\mathrm{d}} (a_1^{\dagger}a_1 + a_2^{\dagger}a_2),
\]
yielding the full Hamiltonian $H_{\rm full}=H_0 + H_{\mathrm{d}} - H_R$.

Counter-rotating terms first induce Bloch-Siegert-type shifts, renormalizing the bare frequencies and anharmonicities as
\begin{equation}
\begin{aligned}
\omega_i' &= \omega_i + \frac{J^2}{\Sigma_{12}} - \frac{2J^2}{\Sigma_{12}+\beta_i}, \\
\beta_i' &= \beta_i + \frac{4J^2}{\Sigma_{12}+\beta_i} - \frac{2J^2}{\Sigma_{12}} - \frac{3J^2}{\Sigma_{12}+2\beta_i},
\end{aligned}
\end{equation}
where $\Sigma_{12}=\omega_1+\omega_2$. In the remainder of this text, we adopt these renormalized parameters and drop the primes for notation simplicity.

The leading-order effective $ZX$ interaction arises from the interference of three distinct second-order virtual tunneling processes:
\begin{equation}
\begin{aligned}
    \ket{00} &\xleftrightarrow[]{\Omega_{\text{CR}}} \ket{10} \xleftrightarrow[]{J} \ket{01},\\
    \ket{10} &\xleftrightarrow[]{\Omega_{\text{CR}}} \ket{20} \xleftrightarrow[]{J} \ket{11},\\  
    \ket{10} &\xleftrightarrow[]{J} \ket{01} \xleftrightarrow[]{\Omega_{\text{CR}}} \ket{11}.
\end{aligned}
\end{equation}
Summing the contributions from these virtual pathways yields the second-order effective coupling strength
\begin{equation}
    \tilde{\nu}_{ZX}^{(2)} = 
    \frac{\Omega_{\rm CR} J}{2}
    \left(
    -\frac{1}{\Delta_{1\mathrm{d}}}
    -\frac{1}{\Delta_{12}}
    +\frac{1}{\Delta_{1\mathrm{d}}+\beta_1}
    +\frac{1}{\Delta_{12}+\beta_1}
    \right),
\end{equation}
where $\Delta_{1\mathrm{d}}=\omega_1-\omega_{\mathrm{d}}$. A detailed derivation via the PBD-BB framework, including a summary of fourth-order contributions, is provided in Appendix~\ref{app:CR_coupling_strength}.

\begin{figure*}[t]
    \centering
    \includegraphics[width=1.9\columnwidth]{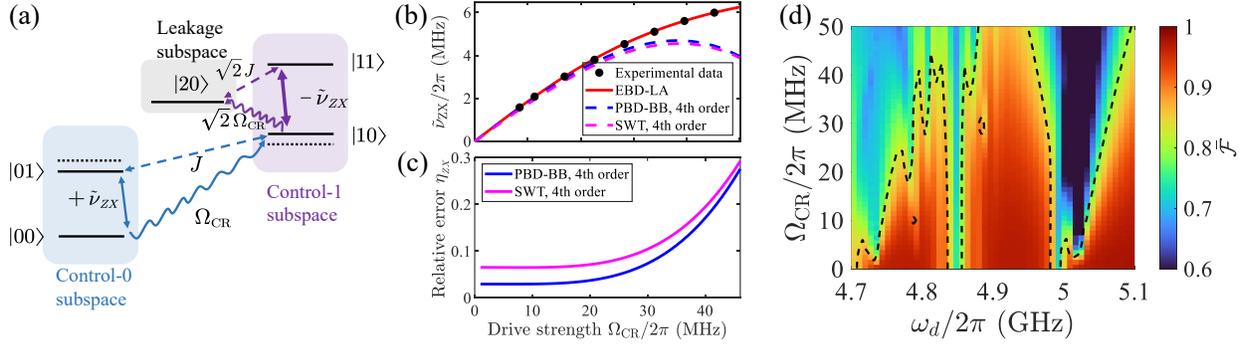}
    \caption{(Color online) Microscopic mechanism and experimental validation of the high-precision effective Hamiltonian.
\textbf{(a)} State-dependent virtual pathways unfolded into Control-0 (blue) and Control-1 (purple) subspaces. The effective $ZX$ interaction arises from the pathway asymmetry, specifically the dominant virtual transition via the leakage state $|20\rangle$ in the Control-1 branch. Solid and dotted lines represent renormalized and bare energy levels, respectively.
\textbf{(b)} Comparison of the effective interaction strength $\tilde{\nu}_{ZX}$ extracted from EBD-LA (red solid line) against experimental data from Ref.~\cite{wei2022hamiltonian} (black dots). The geometric approach maintains excellent agreement across the entire drive range, whereas conventional perturbative expansions (SWT, PBD-BB; dashed lines) diverge significantly beyond $\Omega_{\text{CR}}/2\pi \approx 20$ MHz.
\textbf{(c)} Relative deviation $\eta_{_{ZX}} = |(\tilde{\nu}_{ZX} - \tilde{\nu}_{ZX}^{\rm LAUT})/\tilde{\nu}_{ZX}^{\rm LAUT}|$ reveals that PBD-BB consistently maintains higher accuracy than SWT before the perturbative breakdown in the strong-driving regime.
\textbf{(d)} Time-averaged fidelity heatmap in the $(\omega_{\mathrm{d}},\Omega_{\rm CR})$ plane. The soundness contour $I=0.9$ (black dashed) accurately demarcates the high-fidelity regime, validating the effective model's applicability.}
    \label{fig:CRgate}
\end{figure*}
While the analytical expression includes all three channels, {Fig.~\ref{fig:CRgate} (a)} visually elucidates the two dominant pathways that govern the gate's conditional logic: the standard computational path (blue line) and the anharmonic excursion via the $\ket{20}$ state (purple line). The latter, highlighted in the figure with the $\sqrt{2}$ enhancement, is crucial for the correct sign and magnitude of the interaction and is rigorously captured by our LAUT framework. Higher-order corrections become significant when $\varepsilon=\Omega_{\rm CR}/\Delta_{12} \gtrsim 0.22$.

\paragraph*{Experiment-Level Benchmarking.-}
Figure~\ref{fig:CRgate} (b) compares the predicted $ZX$ interaction against the experimentally extracted values reported in Ref.~\cite{wei2022hamiltonian}.  
The nonperturbative EBD-LAUT method (red solid curve) quantitatively reproduces the experimental data across the full drive range, with average absolute deviations of only $\sim 11$~kHz.  
In contrast, fourth-order SWT (pink dashed) exhibits significant deviations, and PBD-BB (blue dashed) remains accurate only in the moderate-drive region before drifting at large $\Omega_{\rm CR}$.  
The relative error $\eta_{_{ZX}}= |(\tilde{\nu}_{ZX}-\tilde{\nu}_{ZX}^{LAUT})/\tilde{\nu}_{ZX}^{LAUT}|$
in Fig.~\ref{fig:CRgate} (c) confirms that PBD-BB maintains noticeably higher accuracy than SWT; yet, both ultimately break down outside the perturbative domain.


To evaluate dynamical performance, we compute the time-averaged fidelity between the EBD-LAUT effective Hamiltonian and the full evolution over a 1,000-ns evolution window, sweeping the two-dimensional parameter space $(\omega_{\mathrm{d}},\Omega_{\rm CR})$. As shown in Fig.~\ref{fig:CRgate} (d), high fidelity is confined precisely to the region predicted by the soundness measure $I(H_{\rm eff})=0.9$ (black dashed line), while it degrades sharply as $I(H_{\rm eff})$ decreases. This strong correspondence confirms that $I(H_{\rm eff})$ provides a reliable and experimentally meaningful diagnostic for CR-gate operating regimes.

The CR-gate analysis confirms the hierarchical performance observed in the Q--C--Q benchmarks:  
EBD-LAUT provides quantitatively accurate predictions across the entire experimentally relevant regime; PBD-BB reproduces EBD-LAUT in the perturbative domain and outperforms SWT; furthermore, SWT becomes unreliable even under moderate drive strengths. These results demonstrate that the LAUT framework delivers both the accuracy and robustness needed for modeling driven multi-qubit interactions and for guiding high-fidelity CR-gate calibration.

\subsection{Beyond the Rotating-Wave Approximation: Effective Hamiltonian Construction}
\label{subsec:beyond_RWA}

As superconducting circuits push toward stronger coupling, higher coherence, and increasingly sophisticated multi-qubit interactions, operating regimes in which the rotating-wave approximation (RWA) becomes unreliable are gaining practical relevance.  
In parameter ranges characterized by moderate detuning or normalized coupling ratios $g/\Sigma \gtrsim 10^{-2}$, counter-rotating-wave (CRW) terms generate appreciable virtual transitions that renormalize both qubit frequencies and mediated interactions.  
Because these renormalizations directly affect gate calibration, interaction engineering, and hardware-faithful modeling, a systematic framework that remains accurate beyond the RWA is essential.

\paragraph*{Model setup and effective Hamiltonian.-}
We consider a transmon-based qubit–coupler–qubit (Q–C–Q) architecture~\cite{yan2018tunable} described by
\begin{equation}
H_{\mathrm{full}} = H_{\mathrm{sys}} + V_{\mathrm{RW}} + V_{\mathrm{CRW}},
\end{equation}
where
\begin{equation}
H_{\mathrm{sys}}
= \sum_{i=1,2,c}
\left(
\omega_i a_i^\dagger a_i
+ \frac{\beta_i}{2} a_i^\dagger a_i^\dagger a_i a_i
\right)
\end{equation}
specifies the bare oscillators with anharmonicity.  
The interaction Hamiltonian naturally separates into excitation-conserving and non-conserving components:
\begin{align}
V_{\mathrm{RW}} &= \sum_{j=1,2} g_j(a_j^\dagger a_c + a_j a_c^\dagger)
+ g_{12}(a_1^\dagger a_2 + a_1 a_2^\dagger), \\
V_{\mathrm{CRW}} &= - \sum_{j=1,2} g_j(a_j a_c + a_j^\dagger a_c^\dagger)
- g_{12}(a_1 a_2 + a_1^\dagger a_2^\dagger).
\end{align}
Although $V_{\mathrm{CRW}}$ is absent under RWA, the coupler enables off-resonant two-photon pathways that remain energetically accessible and accumulate through second-order processes.

To benchmark the RWA on equal footing, we construct an effective Hamiltonian \emph{without eliminating the coupler}, preserving the operator structure of the RWA but with CRW-renormalized parameters.  
Using second-order perturbative block diagonalization (PBD-BB), we obtain
\begin{equation}
H_{\mathrm{eff}} = H_{\mathrm{sys}}^{\mathrm{eff}} + V_{\mathrm{RW}}^{\mathrm{eff}},
\end{equation}
where
\begin{align}
H_{\mathrm{sys}}^{\mathrm{eff}} &= \sum_{i=1,2,c}
\left(\tilde{\omega}_i a_i^\dagger a_i + \frac{\beta_i}{2} a_i^\dagger a_i^\dagger a_i a_i\right), \\
V_{\mathrm{RW}}^{\mathrm{eff}} &= \sum_{j=1,2} \tilde{g}_j (a_j^\dagger a_c + a_j a_c^\dagger)
+ \tilde{g}_{12}(a_1^\dagger a_2 + a_1 a_2^\dagger).
\end{align}
The renormalized parameters (to leading order in $g/\Sigma$) are:
\begin{align*}
\tilde{\omega}_{\mathrm{c}} &= \omega_{\mathrm{c}} - \frac{2g_1^2}{\Sigma_1 + \beta_c}
- \frac{2g_2^2}{\Sigma_2 + \beta_c}
- \frac{g_{12}^2}{\Sigma_{12}}, \\
\tilde{\omega}_1 &= \omega_1 - \frac{2g_1^2}{\Sigma_1 + \beta_1}
- \frac{g_2^2}{\Sigma_2}
- \frac{2g_{12}^2}{\Sigma_{12} + \beta_1}, \\
\tilde{\omega}_2 &= \omega_2 - \frac{g_1^2}{\Sigma_1}
- \frac{2g_2^2}{\Sigma_2 + \beta_2}
- \frac{2g_{12}^2}{\Sigma_{12} + \beta_2}, \\
\tilde{g}_1 &= g_1 - \frac{g_2 g_{12}}{\Sigma_2}, \qquad
\tilde{g}_2 = g_2 - \frac{g_1 g_{12}}{\Sigma_1}, \\
\tilde{g}_{12} &= g_{12} - \frac{g_1 g_2}{2}\left(\frac{1}{\Sigma_1} + \frac{1}{\Sigma_2}\right),
\end{align*}
with $\Sigma_j = \omega_j + \omega_{\mathrm{c}}$ and $\Sigma_{12} = \omega_1 + \omega_2$.  
These terms arise from virtual CRW-enabled processes such as
\begin{equation}
\ket{100}
\xleftrightarrow[]{g_2}
\ket{111}
\xleftrightarrow[]{g_1}
\ket{001},
\label{eq:CRW_induced_process}
\end{equation}
which contribute to the effective qubit–qubit coupling even though the intermediate state is far detuned.

\paragraph*{Quantitative comparison with RWA.-}
Using representative device parameters
($\omega_{1,2}/2\pi=4$\,GHz,
$\omega_{\mathrm{c}}/2\pi=5$\,GHz,
$\beta_i/2\pi=-0.2$\,GHz,
$g_{1,2}/2\pi=100$\,MHz,
$g_{12}/2\pi=7$\,MHz), we find:
\begin{enumerate}[(1)]
\item CRW renormalizes the direct coupling to  
\[
\tilde{g}_{12}/2\pi \approx 5.9~\mathrm{MHz},
\]
compared with the RWA value of $7$\,MHz;
\item Eliminating the coupler yields a mediated interaction  
\[
g_{\mathrm{decoup}}^{(2)}/2\pi \approx -4.1~\mathrm{MHz},
\]
whereas the RWA predicts $-3.0$\,MHz.
\end{enumerate}
These $g^2/\Sigma$-level corrections are MHz-scale and thus experimentally resolvable.

The RWA relies on the assumption that terms oscillating at frequencies $\sim(\omega_j+\omega_{\mathrm{c}})$ average out and thus contribute negligibly to the dynamics.  
In practice, however, the pathway illustrated in Eq.~\eqref{eq:CRW_induced_process} shows that CRW terms mediate off-resonant virtual transitions with amplitudes scaling as $g^2/(\omega_j+\omega_{\mathrm{c}})$. Although these processes do not populate intermediate states, they generate sizable re-normalizations of both qubit frequencies and mediated couplings.  
Consequently, RWA-based models systematically underestimate interaction strengths, mis-predict dressed-state energy shifts, and fail to capture long-time interference effects—limitations that become particularly pronounced in tunable-coupler architectures. These discrepancies are directly visible in the population and infidelity dynamics shown in Figs.~\ref{subfig:Compare_RWA_Pop} and~\ref{subfig:Compare_RWA_Infid}.

\begin{figure}[t]
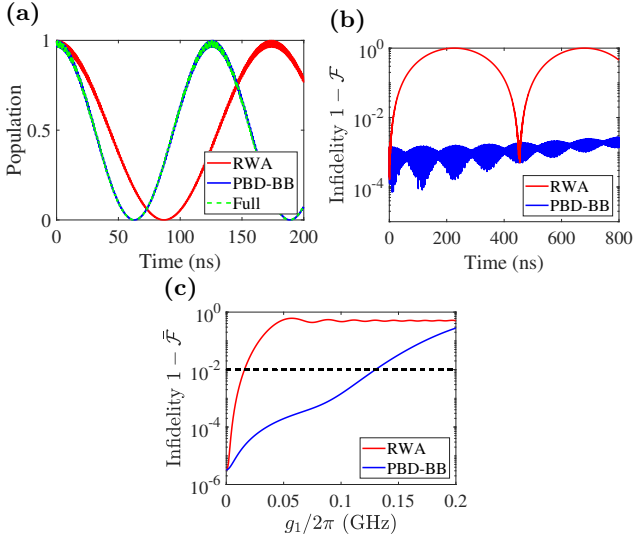

    \centering
    \subfigure{
        \begin{minipage}[b]{0.46\columnwidth}
        \centering
        \begin{overpic}[scale=0.3]{Figures/Case_Study/Compare_RWA_Pop.eps}
            \put(0,80){\textbf{(a)}}
        \end{overpic}
        \end{minipage}\label{subfig:Compare_RWA_Pop}
    }
    \subfigure{
        \begin{minipage}[b]{0.46\columnwidth}
        \centering
        \begin{overpic}[scale=0.29]{Figures/Case_Study/Compare_RWA_Infid.eps}
            \put(0,80){\textbf{(b)}}
        \end{overpic}
        \end{minipage}\label{subfig:Compare_RWA_Infid}
    }
    \subfigure{
        \begin{minipage}[b]{0.6\columnwidth}
        \centering
        \begin{overpic}[scale=0.29]{Figures/Case_Study/Compare_RWA_avg.eps}
            \put(0,78){\textbf{(c)}}
        \end{overpic}
        \end{minipage}\label{subfig:Compare_RWA_avg}
    }
    \caption{
    (Color online) Benchmarking the rotating-wave approximation (RWA) against the PBD-BB effective Hamiltonian (blue) in the Q--C--Q model, using full-Hamiltonian evolution as reference.
    \textbf{(a)} Population dynamics of qubit~1 initialized in $\ket{100}$.  
    RWA exhibits a clear Rabi-frequency mismatch, while the PBD-BB EH accurately tracks the exact evolution.
    \textbf{(b)} Instantaneous infidelity relative to the full dynamics.  
    RWA errors accumulate rapidly, whereas PBD-BB maintains low infidelity throughout the evolution.
    \textbf{(c)} Time-averaged infidelity versus coupling strength $g_1=g_2$.  
    PBD-BB remains below the $1\%$ threshold, while RWA breaks down in the moderate-coupling regime.
    }
    \label{fig:compare_RWA}
\end{figure}

For quantitative validation, we evolve for $10,000$ ns
\[
|\psi_\alpha(t)\rangle = e^{-iH_\alpha t} |100\rangle,
\qquad
\alpha \in \{f, r, e\},
\]
where $(f,r,e)$ denotes the full, RWA, and PBD-BB effective Hamiltonian, respectively,
and computes the infidelity $1-\mathcal{F}(t) = 1 - |\langle\psi_f(t) | \psi_\alpha(t)\rangle|^2$.
Figure~\ref{subfig:Compare_RWA_avg} confirms that the effective Hamiltonian faithfully reproduces the full dynamics over extended timescales, while RWA incurs substantial instantaneous and accumulated errors.

Conventional CRW-aware approaches such as GRWA or CHRW~\cite{irish2007dynamics, casanova2010deep} apply to few-mode models and rely on manually chosen unitary transformations. In contrast, the block-diagonalization framework provides a general, symmetry-preserving, and quantitatively accurate method for constructing effective Hamiltonians in multi-mode superconducting circuits. Importantly, previous Q–C–Q analyses almost exclusively derive qubit–qubit interactions \emph{after} eliminating the coupler. Here we extract CRW-renormalized interactions \emph{directly from the full three-mode Hamiltonian}, yielding a more complete and experimentally faithful description of dressed-state hybridization. Incorporating CRW corrections at this level is essential for predicting qubit shifts, calibrating mediated interactions, and designing pulses in tunable-coupler devices. The PBD-BB framework therefore provides an analytically tractable and hardware-relevant approach to modeling beyond-RWA physics in next-generation superconducting platforms.

\subsection{Mediated Three-Body Interaction in a Multilevel Q--C--Q--C--Q Architecture}

A distinctive advantage of the LAUT-guided PBD-BB framework is its ability to reveal higher-order mediated processes that remain inaccessible to conventional effective-Hamiltonian techniques. As an illustrative case study, we analyze a multilevel Q--C--Q--C--Q chain constructed from capacitively shunted flux qubits (CSFQs) and tunable couplers. Although the main text has focused on the resulting effective interaction, its microscopic origin relies on the multilayered connectivity of this five-mode architecture. For completeness, Appendix~\ref{app:device_details} summarizes the device model, bare parameters, and excitation-number structure that underlie the analysis below.

\paragraph*{Model setup.-}We construct the effective Hamiltonian in two stages: first mapping the microscopic circuit to a three-site extended Bose–Hubbard model (EBHM), and subsequently projecting onto the computational subspace. The intermediate EBHM reads
\begin{equation}
\label{eq:EBHM_3_body_full}
\begin{aligned}
    H &= H_{\mathrm{site}} + H_{\mathrm{coupling}},\\
    H_{\mathrm{site}} &= 
    \sum_{i=1}^{3}\!
    \left( \mu_i a_i^\dagger a_i 
    + \tfrac{U_i}{2} a_i^\dagger a_i^\dagger a_i a_i \right),\\
    H_{\mathrm{coupling}} &= 
    J_0 \!\sum_{\langle i,j\rangle}\!(a_i^\dagger a_j+\mathrm{H.c.})
    + V \!\sum_{\langle i,j\rangle}\! n_i n_j ,
\end{aligned}
\end{equation}
where $a_i$ ($a_i^\dagger$) denotes the bosonic annihilation (creation) operator and $n_i=a_i^\dagger a_i$. The detailed mapping from circuit parameters to the EBHM is provided in Appendix~\ref{app:mapping_to_EBHM}. Truncating each site to its three lowest eigenstates $\{0,1,2\}$ defines the relevant Hilbert space
\begin{equation*}
    \mathcal{H}
    = \mathrm{span}\{\ket{ijk}\,|\, i,j,k\in\{0,1,2\}\},
\end{equation*}
while the computational subspace is restricted to
\begin{equation*}
    \mathcal{H}_{\mathrm{C}}
    = \mathrm{span}\{\ket{ijk}\,|\, i,j,k\in\{0,1\}\}.
\end{equation*}
Virtual excitations of the central site into its second excited state $|2\rangle$ mediate a conditional next-nearest-neighbor hopping interaction of the form$$(a_1^\dagger a_3 + a_3^\dagger a_1)n_2 ,$$a term absent in the bare EBHM. Decoupling $\mathcal{H}_{\mathrm{C}}$ from the leakage manifold via the LAUT-based transformation yields the effective Hamiltonian
\begin{equation}
\label{eq:EH_3_body}
\begin{aligned}
    H_{\mathrm{eff}}^{\mathrm{3Q}} &=
    \sum_{i=1}^{3} \tilde{\omega}_i a_i^\dagger a_i 
    + J\!\sum_{\langle i,j\rangle}(a_i^\dagger a_j+\mathrm{H.c.}) \\
    &\quad + K (a_1^\dagger a_3 + a_3^\dagger a_1)n_2 ,
\end{aligned}
\end{equation}
where $K$ quantifies the strength of the mediated three-body interaction.Finally, mapping to qubit operators via $a_i^\dagger\rightarrow\sigma_i^+$ and $n_i\rightarrow(1-Z_i)/2$ leads to
\begin{equation}
    \begin{aligned}
        H_{\mathrm{eff}}^{\mathrm{3Q}} = & \sum_{i=1}^{3} \frac{-\tilde{\omega}_i}{2} Z_i 
        + \frac{J}{2} \sum_{\langle i,j \rangle} (X_i X_j + Y_i Y_j) \\
        & + K \cdot \underbrace{\left[ -\frac{1}{4} \left( X_1 Z_2 X_3 + Y_1 Z_2 Y_3 \right) \right]}_{\hat{\mathcal{O}}_{3\mathrm{b}}},
    \end{aligned}
\end{equation}
which reveals a genuine three-body coupling between the outer qubits conditioned on the state of the central one. To quantify this interaction, we identify the normalized three-body operator $\hat{\mathcal{O}}_{3\mathrm{b}}=-\frac{1}{4}(X_1 Z_2 X_3 + Y_1 Z_2 Y_3)$ and extract the effective coupling strength $\kappa$ via the contrast of conditional hoppings:
\begin{equation}
    \kappa = \bra{110} H_{\mathrm{eff}}^{\mathrm{3Q}} \ket{011} - \bra{100} H_{\mathrm{eff}}^{\mathrm{3Q}} \ket{001}.
\end{equation}
This metric $\kappa$ directly captures the amplitude of the excitation transfer mediated by the central site. Note that under the projection onto $\hat{\mathcal{O}}_{3\mathrm{b}}$, this is consistent with $\kappa = \mathrm{Tr}[\hat{\mathcal{O}}_{3\mathrm{b}}^\dagger H_{\mathrm{eff}}^{\mathrm{3Q}}]$.

\paragraph*{Microscopic origin of the effective $XZX+YZY$ interaction.-}

The origin of the three-body $XZX+YZY$ term becomes transparent in the two-excitation 
manifold (Fig.~\ref{fig:Energy_levels_XZX}).  
Although the boundary states $\ket{110}$ and $\ket{011}$ do not couple directly, each connects 
to the noncomputational state $\ket{020}$ through the nearest-neighbor hopping $J_0$, forming 
the virtual pathway
\[
\ket{110} \;\longleftrightarrow\; \ket{020} \;\longleftrightarrow\; \ket{011}.
\]
The central site thus temporarily hosts two excitations, enabling a correlated boundary-to-boundary 
exchange conditioned on the occupation of site~2.

Block diagonalization makes this mechanism explicit: eliminating the high-energy $\ket{020}$ 
state yields a single effective matrix element between the dressed states 
$\overline{\ket{110}}$ and $\overline{\ket{011}}$ with amplitude $K$.  
In operator form this produces the three-body term
\[
H_{\mathrm{eff}} \supset K\left(X_1 Z_2 X_3 + Y_1 Z_2 Y_3\right),
\]
in which excitation transfer between sites~1 and~3 is activated only when the middle site is 
occupied—precisely the hallmark of a genuine mediated three-body interaction.

We derive the three-body coupling using PBD-BB, which integrates out leakage states while preserving the excitation-number structure. This method organizes virtual pathways order-by-order, explicitly isolating the $\ket{020}$ intermediate state and yielding closed-form expressions for $J$ and $K$. This contrasts with the Schrieffer–Wolff expansion, where nested commutators make higher-order derivations in multilevel systems algebraically intractable and prone to errors. PBD-BB resolves these issues by displaying contributions directly within the block structure. In the EBHM representation, the dominant transition via $\ket{020}$ is captured at second order; however, for the full circuit Hamiltonian, this corresponds to a fourth-order contribution, the derivation of which is provided in the Appendix~\ref{app:derivation of three-body interaction}.

\begin{figure}[t]
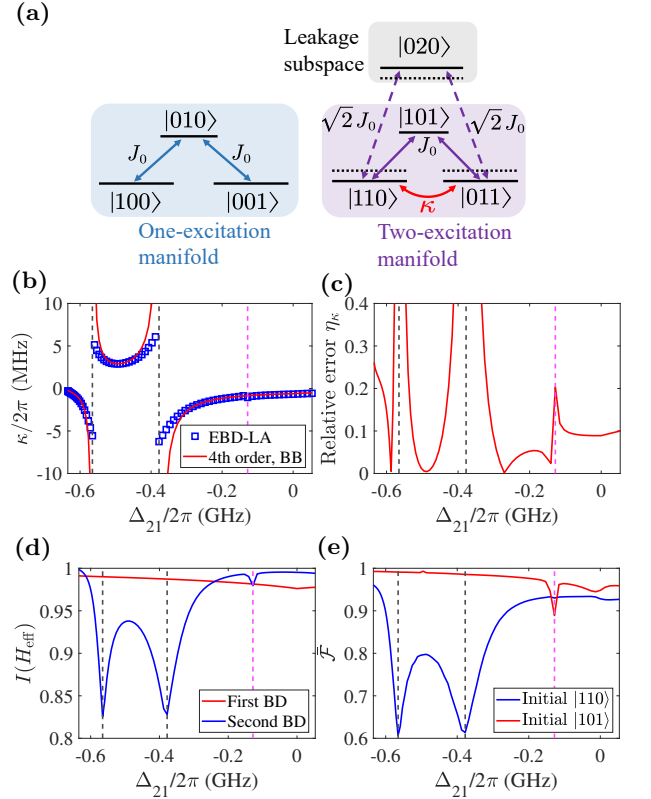

    \centering
    \subfigure{
    \begin{minipage}[b]{0.8\columnwidth}\label{fig:Energy_levels_XZX}
        \centering
        \begin{overpic}[scale=0.3]{Figures/Three_Body/Energy_levels_XZX.pdf}
            \put(-15,55){\textbf{(a)}}
        \end{overpic}
    \end{minipage}}
    \subfigure{
    \begin{minipage}[b]{0.45\columnwidth}\label{fig:Compare_XZX}
        \centering
        \begin{overpic}[scale=0.3]{Figures/Three_Body/Compare_XZX.eps}
            \put(0,80){\textbf{(b)}}
        \end{overpic}
    \end{minipage}}
    \subfigure{
    \begin{minipage}[b]{0.44\columnwidth}\label{fig:Compare_XZX_error}
        \centering
        \begin{overpic}[scale=0.3]{Figures/Three_Body/Compare_XZX_error.eps}
            \put(0,80){\textbf{(c)}}
        \end{overpic}
    \end{minipage}}
    \subfigure{
    \begin{minipage}[b]{0.44\columnwidth}\label{fig:Compare_XZX_measure}
        \centering
        \begin{overpic}[scale=0.3]{Figures/Three_Body/Compare_XZX_measure.eps}
            \put(0,80){\textbf{(d)}}
        \end{overpic}
    \end{minipage}}
    \subfigure{
    \begin{minipage}[b]{0.44\columnwidth}\label{fig:Compare_XZX_dynamics}
        \centering
        \begin{overpic}[scale=0.3]{Figures/Three_Body/Compare_XZX_dynamics.eps}
            \put(0,80){\textbf{(e)}}
        \end{overpic}
    \end{minipage}}
    \caption{
Microscopic origin and benchmarking of the conditional three-body interaction.
\textbf{(a)} Mechanism of the effective coupling. In the one-excitation manifold (blue), dynamics proceed via real nearest-neighbor hopping $J_0$ through the computational state $|010\rangle$. In contrast, the two-excitation manifold (purple) relies on a virtual transition through the high-energy leakage state $|020\rangle$ (gray). Adiabatic elimination of this leakage path generates the effective conditional coupling $\kappa$ (red arrow) connecting $|110\rangle$ and $|011\rangle$. Solid (dotted) lines denote renormalized (bare) energy levels.
\textbf{(b)} Effective coupling strength $\kappa$ extracted via EBD-LA (blue squares) compared to 4th-order perturbation theory (red line). 
\textbf{(c)} Relative error $\eta_{\kappa}$ of the effective model.
\textbf{(d)} Soundness measure $I(H_{\rm eff})$. The sharp dips indicate the breakdown of the effective description at resonances.
\textbf{(e)} Time-averaged fidelity $\bar{\mathcal{F}}$ for different initial states, showing consistent high-fidelity performance in the valid regimes predicted by (d). In panels (b)–(e), vertical dashed lines indicate the dominant fourth-order (black) and weaker eighth-order (pink) resonances.}

    \label{fig:XZX_1D}
\end{figure}

\paragraph*{Numerical validation.-}
We benchmark the tunable three-body interaction $\kappa$ by comparing the EBD-LAUT results against the fourth-order PBD-BB predictions. The effective coupling, arising from higher-order virtual transitions via the central site's second excited state ($\ket{020}$), is controlled by the flux-tunable parameters $\omega_2$ and $\beta_2$.
Figure~\ref{fig:Compare_XZX_error} demonstrates excellent agreement between EBD-LAUT and PBD-BB in the dispersive regime. However, near specific flux points (indicated by vertical lines), PBD-BB diverges due to resonances involving $\ket{110}\leftrightarrow\ket{020}$ and $\ket{011}\leftrightarrow\ket{020}$ transitions, whereas EBD-LAUT remains well-behaved.
This accuracy is quantitatively confirmed in Fig.~\ref{fig:Compare_XZX}, where deviations remain below $0.1\,\mathrm{MHz}$ away from resonances.
Crucially, the static diagnostic $I(H_{\mathrm{eff}})$ in Fig.~\ref{fig:Compare_XZX_measure} provides physical insight into the model's validity limits: while the mapping to the EBHM (red) remains robust, the subsequent isolation of the qubit subspace (blue) exhibits sharp dips near resonances, correctly signaling the breakdown of the qubit approximation due to leakage.
Finally, Fig.~\ref{fig:Compare_XZX_dynamics} validates these findings in the time domain: the dynamical fidelity for the $\ket{101}$ state remains high over $50\,\mu$s, while the leakage-prone $\ket{110}$ state shows degradation near resonance points, consistent with our static metrics.
Despite these resonance constraints, we achieve a substantial three-body coupling strength $\kappa/2\pi \approx 3\,\mathrm{MHz}$, significantly exceeding typical decoherence rates.

\begin{table*}
\centering
\caption{Comparative overview of effective Hamiltonian construction methods. The LAUT-based frameworks (EBD-LAUT and PBD-BB) are unique in simultaneously satisfying symmetry preservation and the least-action (LAUT) principle, ensuring robustness in both perturbative and non-perturbative regimes.}
\label{tab:EH_method_comparison}
\begin{tabular}{l|c|c|c|c}
\hline\hline
\textbf{Method} & \textbf{Sym. Preserving} & \textbf{Structure} & \textbf{Regime} & \textbf{LAUT Consistency} \\
\hline
\textbf{EBD-LAUT} & $\checkmark$ & Numerical & Nonpert. & $\checkmark$ \\
\textbf{PBD-BB} & $\checkmark$ & Series & Pert. & Up to $4^{\text{th}}$ order~\cite{richert1976comparison} \\
\textbf{SWT}    & $\checkmark$ & Series & Pert. & Up to $2^{\text{nd}}$ order~\cite{mankodi2024perturbative} \\
\textbf{GR}     & $\times$     & Numerical & Nonpert. & $\times$ \\
\hline\hline
\end{tabular}
\end{table*}

\subsection{Method Summary and Practical Guidance}
\label{subsec:EH_method_summary}

Our benchmarking analysis establishes a rigorous hierarchy among effective Hamiltonian construction strategies, governed by three critical criteria: symmetry preservation, variational optimality, and perturbative consistency. Table~\ref{tab:EH_method_comparison} summarizes the structural characteristics and validity regimes of the evaluated methods.

Below are the \textit{key insights} from benchmarking:
\begin{itemize}
    \item \textbf{Symmetry and Variational Grounding:} 
    Symmetry preservation is a prerequisite for physical consistency. The GR method suffers from intrinsic symmetry-breaking errors. Conversely, while SWT respects symmetry, it lacks a variational extremal principle, leading to deviations from the optimal LAUT trajectory at higher orders (e.g., mispredicting $ZZ$ shifts~\cite{li2022nonperturbative}). Only the LAUT-based frameworks---EBD-LAUT and its perturbative counterpart, PBD-BB---simultaneously satisfy both constraints.

    \item \textbf{Experimental and Dynamical Fidelity:} 
    Across Q-C-Q architectures and driven CR gates, EBD-LAUT consistently yields the highest fidelity. Notably, PBD-BB accurately reproduces EBD-LAUT in the weak-coupling regime and quantitatively matches experimental $ZX$ interaction rates where standard SWT fails. This confirms that theoretical adherence to the Least-Action criterion directly translates to observable experimental accuracy.

    \item \textbf{Beyond-RWA Capabilities:} 
    In regimes with non-negligible counter-rotating terms, PBD-BB successfully captures MHz-scale renormalizations inherently missed by standard RWA approaches. This establishes LAUT-consistent perturbative methods as essential tools for the high-precision modeling of modern tunable couplers.

    \item \textbf{Discovery of Multi-Body Physics:} 
    Beyond standard two-qubit gates, the framework serves as a powerful discovery tool for higher-order physics. As demonstrated in the multilevel CSFQ architecture, the LAUT-guided PBD-BB approach systematically reveals the microscopic origins of complex mediated couplings. It successfully identified and quantified a genuine three-body interaction ($XZX+YZY$), proving that significant multibody terms can be engineered in standard devices without specialized hardware.
\end{itemize}

\paragraph*{Practical guidance:}
\begin{enumerate}
    \item \textbf{Method Selection:} 
    For strongly hybridized or symmetry-sensitive regimes, \textbf{EBD-LAUT} is the definitive choice. For applications requiring analytical transparency in the weak-coupling limit, \textbf{PBD-BB} offers an optimal balance of accuracy and complexity. Methods lacking variational grounding (such as GR or high-order SWT) may incur higher systematic uncertainties for quantitative calibration.
    
    \item \textbf{Computational Scalability:}
A distinct advantage of the EBD-LAUT framework is that it circumvents full diagonalization. Since the optimal transformation $T$ (Eq.~\eqref{eq:BDT_from_LAUT}) is constructed exclusively from the target subspace eigenstates, the method is highly amenable to sparse-matrix techniques. We specifically leverage the Dual Applications of Chebyshev Polynomials (DACP) algorithm~\cite{guan2021dual}, which acts as a spectral band-pass filter to extract the $d$ relevant eigenstates. This reduces the computational complexity to $\mathcal{O}(N_{nz} \cdot (m+d))$, where $N_{nz}$ is the sparsity count and $m$ is the filter order. By avoiding the prohibitive $\mathcal{O}(D^3)$ scaling of dense solvers, this integration theoretically extends EBD-LAUT to verify specific interactions in systems of $\sim 20$ spins ($D \sim 10^6$) on standard computing resources. This scalability elevates the framework beyond isolated gate calibration, enabling critical system-level tasks such as crosstalk characterization and frequency collision analysis in realistic NISQ processor layouts.
\end{enumerate}

\section{Conclusion and Outlook}
\label{sec:conclusion}

We have introduced the Least Action Unitary Transformation (LAUT) as a rigorous variational foundation for effective Hamiltonian theory. By selecting the unique block‑diagonalizing transformation that minimizes the Frobenius distance to the identity, the LAUT principle resolves the gauge ambiguity intrinsic to conventional approaches and produces effective models that are both physically meaningful and mathematically well‑defined. We proved that this geometric criterion guarantees a lower bound on long‑time dynamical fidelity and enforces symmetry preservation whenever the subspace partition respects the underlying Hamiltonian symmetry. Moreover, we identified the Bloch–Brandow formalism as the natural perturbative counterpart to this variational construction.

Benchmarking on superconducting circuit architectures demonstrates that LAUT‑based methods deliver quantitatively accurate effective models in regimes where standard approximations fail. They reproduce experimental rates in driven cross‑resonance gates, capture counter‑rotating‑wave–induced renormalizations in tunable‑coupler systems, and reveal higher‑order mediated interactions such as the three‑body $XZX+YZY$ term. These results establish the LAUT framework as a practical and reliable tool for both Hamiltonian engineering and the discovery of emergent multi-body physics.

Looking forward, the variational perspective introduced here opens promising avenues. Extending LAUT to non‑Hermitian and open‑system settings may provide consistent gauge fixing in bi-orthogonal representations. Formulating LAUT within the Sambe space could enable accurate Floquet effective Hamiltonians beyond standard high‑frequency expansions. Combining LAUT‑based block diagonalization with embedding strategies may yield scalable effective models in many‑body systems, while the geometric cost functions developed here offer natural objectives for data‑driven Hamiltonian learning. Altogether, the least‑action framework establishes a unified and extensible foundation for high‑precision quantum simulation and control across diverse experimental platforms.

\acknowledgments
We thank Kangqiao Liu, Wenzheng Dong, Yongju Hai, and Shyam Dhamapurkar for their valuable advice and comments, and K. Wei for providing the experimental data used for theoretical verification. This work was supported by the Key-Area Research and Development Program of Guang-Dong Province (Grant No. 2018B030326001), the Quantum Science and Technology-National Science and Technology Major Project (Grants No. 2021ZD0301703), Innovation Program for Quantum Science and Technology (2024ZD0300400), and the Shenzhen Science and Technology Program (KQTD20200820113010023).

\section*{Author Contributions}
X.-H. Deng conceived, supervised, and verified the project; H.-Y. Guan developed the theoretical framework, performed the analytical derivations, and conducted the numerical simulations; Y.-H. Dang performed independent numerical simulations and data analysis; X.-L. Zhu contributed to discussions and verified the analytical derivations; H.-Y. Guan and X.-H. Deng wrote the manuscript with input from all authors, who contributed to the interpretation of the results and approved the final version.

\bibliography{effective_hamiltonian.bib}

\appendix

\section{Detailed Solution to the Least-Action Criterion}
\label{app:ebd-la}

This appendix provides a self-contained derivation of the \emph{exact block diagonalization via least action} (EBD-LAUT) method introduced in Ref.~\cite{cederbaum1989block}. Unlike perturbative approaches that may fail in quasi-dispersive regimes~\cite{goerz2017charting} or under singular perturbations~\cite{holmes2012introduction}, EBD-LAUT yields a fully non-perturbative and mathematically well-defined block-diagonalizing transformation.

We consider effective Hamiltonians of the form
\begin{equation}
    H_{\mathrm{eff}} = T^{\dagger} H T,
\end{equation}
where \( T \) is a unitary operator obtained by minimizing the least‐action (LAUT) functional, i.e., the Frobenius distance between \(T\) and the identity. Let \( S \) denote the matrix whose columns are the exact eigenstates of \( H \), and define its block-diagonal component
\begin{equation}
    S_{\mathrm{BD}} =
    \begin{pmatrix}
        S_{11} & 0 & \cdots & 0 \\
        0 & S_{22} & \cdots & 0 \\
        \vdots & \vdots & \ddots & \vdots \\
        0 & 0 & \cdots & S_{NN}
    \end{pmatrix},
\end{equation}
where each block \(S_{ii}\) spans the target subspace associated with the unperturbed Hamiltonian. Provided that \(S_{\mathrm{BD}}\) is invertible—which is generically satisfied in physical settings—the LAUT minimization has a unique solution. The resulting optimal block-diagonalizing transformation is~\cite{cederbaum1989block}
\begin{equation}\label{eq:BDT_from_LAUT}
    T = S\, S_{\mathrm{BD}}^{\dagger}\, \left(S_{\mathrm{BD}} S_{\mathrm{BD}}^{\dagger}\right)^{-1/2}.
\end{equation}

The structure of Eq.~\eqref{eq:BDT_from_LAUT} admits a clear interpretation.  
The matrix \( S \) diagonalizes \(H\), mapping the bare basis to the exact eigenbasis.  
The factor \( S_{\mathrm{BD}}^{\dagger} \) rotates this eigenbasis so that it aligns with the desired block structure.  
Finally, the factor \( \left(S_{\mathrm{BD}} S_{\mathrm{BD}}^{\dagger}\right)^{-1/2} \) arises from the polar decomposition~\cite{horn2012matrix} and restores unitarity while ensuring that the transformation is as close as possible to the identity in the Frobenius norm.  
Thus, EBD-LAUT performs the minimal global rotation needed to block-diagonalize \(H\), providing both uniqueness and optimality of the resulting unitary.

Unless stated otherwise, the block-diagonalizing transformation \(T\) used throughout this appendix refers to this LAUT-optimal choice. In the corresponding effective Hamiltonian, diagonal entries encode \emph{renormalized energies}, while off-diagonal entries represent \emph{effective couplings} generated by hybridization with eliminated subspaces.

\section{Numerical Illustration of the LAUT-Based Fidelity Bound}
\label{app:bound_verification}

To provide an intuitive demonstration of the fidelity bound derived in 
Sec.~\ref{subsec:LAUT_lower_bound}, we evaluate both sides of the inequality given in Eq.~\eqref{eq:key_bound} for a minimal three-level Q--C--Q model restricted to the single-excitation subspace.
The Hamiltonian takes the form
\begin{equation}
H =
\begin{pmatrix}
\omega_1 & g_1     & 0 \\
g_1      & \omega_{\mathrm{c}} & g_1 \\
0        & g_1     & \omega_2
\end{pmatrix},
\end{equation}
where $\omega_{1,2}$ are qubit frequencies, $\omega_{\mathrm{c}}$ is the coupler frequency,
and $g_{1}$ denotes qubit--coupler couplings.  
The computational subspace is spanned by $\{|100\rangle, |001\rangle\}$, and 
the non-computational state $|010\rangle$ forms an auxiliary level that mediates
hybridization near resonance.  

For each value of $\omega_{\mathrm{c}}$, we compute the exact diagonalization matrix $S$,
construct the least-action block-diagonalizing transformation 
$T$ using Eq.~\eqref{eq:BDT_from_LAUT}, and evaluate the structural deviation
$\|T-\mathcal{I}\|_F^2$.  
The long-time trace fidelity is obtained numerically by computing Eq.~\eqref{eq:long_time_averaged_fidelity} over a time window of $100\,\mu$s,
while the theoretical bound is given by
\begin{equation}
F_{\mathrm{bound}}
= \left(1 - \frac{\|T - \mathcal{I}\|_F^2}{2D} \right)^2,
\qquad D=3.
\end{equation}

Fixing the coupling strength to $g_1/2\pi = 100~\mathrm{MHz}$, we examine both a symmetric configuration ($\omega_1/2\pi = \omega_2/2\pi = 4~\mathrm{GHz}$) and a slightly asymmetric configuration ($\omega_1/2\pi = 4~\mathrm{GHz},\, \omega_2/2\pi = 4.3~\mathrm{GHz}$). In both cases, the coupler frequency is swept over the interval $\omega_{\mathrm{c}}/2\pi \in [3,\, 3.95]~\mathrm{GHz}$. The two configurations exhibit nearly identical global behavior; the inset highlights the pointwise difference between the left- and right-hand sides of the fidelity inequality and reveals the second-order scaling as the norm $\|T-\mathcal{I}\|_F^2$ approaches zero. This minimal model confirms that the LAUT-based fidelity bound is both valid and physically meaningful: it is always satisfied, becomes increasingly tight in the perturbative regime, and remains robust under realistic qubit-frequency asymmetries.

\begin{figure}[t]
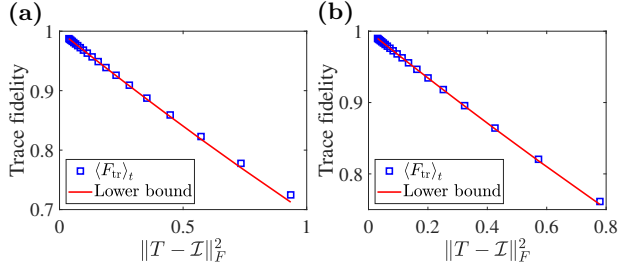

    \centering
    \subfigure{
    \begin{minipage}[b]{0.45\columnwidth}
        \centering
        \begin{overpic}[scale=0.3]{Figures/Fidelity_bound/Fidelity_versus_norm.eps}
            \put(0,80){\textbf{(a)}}
        \end{overpic}
    \end{minipage}}
    \subfigure{
    \begin{minipage}[b]{0.44\columnwidth}
        \centering
        \begin{overpic}[scale=0.3]{Figures/Fidelity_bound/Fidelity_versus_norm_detuned.eps}
            \put(0,80){\textbf{(b)}}
        \end{overpic}
    \end{minipage}} 
    \caption{
Numerical validation of the trace-fidelity lower bound.
Scatter points show the long-time trace fidelity 
$\langle F_{\mathrm{tr}} \rangle_t$ as a function of 
$\|T-\mathcal{I}\|_F^2$ for symmetric (a) and asymmetric (b) 
Q--C--Q configurations.  
The red curve is the theoretical LAUT-based lower bound 
$F_{\mathrm{bound}} = (1 - \|T-\mathcal{I}\|_F^2 / 2D)^2$.  
All data satisfy $\langle F_{\mathrm{tr}} \rangle_t \ge F_{\mathrm{bound}}$, 
confirming Eq.~\eqref{eq:key_bound}.  
The inset shows the pointwise difference 
$\Delta F := \langle F_{\mathrm{tr}} \rangle_t - F_{\mathrm{bound}}$.  
In the perturbative regime ($\|T-\mathcal{I}\|_F \ll 1$), 
$\Delta F$ approaches zero quadratically.
}
    \label{fig:fidelity_bound}
\end{figure}

\section{Dynamical Fidelity Bound from the Least Action Unitary Transformation Principle}
\label{app:fidelity_bound}

In this appendix, we provide a rigorous derivation of the fidelity bound presented in Eq.~\eqref{eq:key_bound}. We show that the LAUT criterion is not merely a geometric constraint but establishes a fundamental lower bound on the long-time dynamical fidelity.

\subsection*{Main Logic}
The proof proceeds via three distinct steps:
\begin{enumerate}
    \item \textbf{Geometry (Static):} We prove that the LAUT principle forces the eigenbasis overlap matrix to be Hermitian positive semi-definite. This maximizes the trace overlap and ensures real, non-negative diagonal elements.
    \item \textbf{Dynamics (Kinematic):} We show that under time-averaging, the dynamical fidelity is determined solely by these diagonal overlap elements.
    \item \textbf{Inequality (Synthesis):} We apply the Cauchy-Schwarz inequality to link the dynamical trace directly to the static Frobenius norm minimized by LAUT.
\end{enumerate}

\subsection{Step 1: Geometric Structure of the LAUT Transformation}

Let \( S \) and \( \mathcal{E} \) denote the unitary eigenstate matrices of the full Hamiltonian \( H \) and the effective Hamiltonian \( H_{\mathrm{eff}} \), respectively. The block-diagonalizing transformation connects these bases via \( T = S \mathcal{E}^\dagger \).

The LAUT principle requires finding the transformation \( T \) (and consequently the basis \( \mathcal{E} \)) that minimizes the Frobenius distance to the identity:
\begin{equation}
    \min_{\mathcal{E}} \| T - \mathcal{I} \|_F^2 = \min_{\mathcal{E}} \| S - \mathcal{E} \|_F^2.
\end{equation}
Mathematically, this is the \textit{Orthogonal Procrustes Problem}. To solve it, we perform the Singular Value Decomposition (SVD) of the exact eigenbasis matrix \( S \):
\begin{equation}
    S = U_0 \Sigma V_0^\dagger,
\end{equation}
where \( \Sigma = \mathrm{diag}(\sigma_1, \ldots, \sigma_D) \) contains the singular values. The variationally optimal effective basis is given by the polar factor \( \mathcal{E} = U_0 V_0^\dagger \)~\cite{cederbaum1989block}.

\paragraph*{Crucial Property of the Overlap Matrix.} 
We define the overlap matrix \( W = \mathcal{E}^\dagger S \). Substituting the optimal \( \mathcal{E} \), we find:
\begin{equation}
    W = (V_0 U_0^\dagger) (U_0 \Sigma V_0^\dagger) = V_0 \Sigma V_0^\dagger.
\end{equation}
Since \( \Sigma \) is positive semi-definite, \textbf{\( W \) is a Hermitian positive semi-definite matrix}. This implies two key physical properties:
\begin{enumerate}
    \item The trace of the overlap is real and maximized: \( \mathrm{Tr}(W) = \mathrm{Tr}(\Sigma) = \sum_i \sigma_i \).
    \item The diagonal elements are real and non-negative: \( W_{nn} \geqslant 0 \).
\end{enumerate}

This allows us to relate the trace overlap directly to the LAUT cost function:
\begin{equation}
    \| S - \mathcal{E} \|_F^2 = 2D - 2 \sum_{i=1}^D \sigma_i.
\end{equation}
Rearranging terms yields the static bound:
\begin{equation}
    \mathrm{Tr}(W) = \sum_n W_{nn} = D - \frac{1}{2} \| T - \mathcal{I} \|_F^2.
    \label{eq:static_trace}
\end{equation}

\subsection{Step 2: Dynamical Reduction via Time-Averaging}

We now consider the time-evolution operators \( U(t) = S e^{-iE t} S^\dagger \) and \( U_{\mathrm{eff}}(t) = \mathcal{E} e^{-iE t} \mathcal{E}^\dagger \). The trace fidelity is given by:
\begin{equation}
    \mathrm{Tr}\left[ U^\dagger(t) U_{\mathrm{eff}}(t) \right] = \mathrm{Tr}\left[ W e^{iE t} W^\dagger e^{-iE t} \right].
\end{equation}
Expanding in the eigenbasis indices \(m, n\):
\begin{equation}
    \mathrm{Tr}\left[ U^\dagger(t) U_{\mathrm{eff}}(t) \right] = \sum_{m,n} e^{i(E_n - E_m)t} |W_{mn}|^2.
\end{equation}
Assuming non-degenerate spectra (or that degeneracies are confined within blocks, where the eigenbasis is chosen to diagonalize the Hermitian overlap matrix \( W \)), the long-time average \( \langle \cdot \rangle_t \) eliminates oscillatory terms, while the specific basis choice eliminates static off-diagonal terms. Consequently, the surviving terms are purely diagonal:
\begin{equation}
    \left\langle \mathrm{Tr}\left[ U^\dagger(t) U_{\mathrm{eff}}(t) \right] \right\rangle_t = \sum_n |W_{nn}|^2.
\end{equation}
Applying Jensen’s inequality provides a lower bound on the magnitude:
\begin{equation}
    \left\langle \left| \mathrm{Tr}(U^\dagger(t) U_{\mathrm{eff}}(t)) \right| \right\rangle_t \geqslant \sum_n |W_{nn}|^2.
    \label{eq:dynamic_sum}
\end{equation}

\subsection{Step 3: Deriving the Fidelity Bound}

We now connect the dynamic quantity in Eq.~\eqref{eq:dynamic_sum} to the static quantity in Eq.~\eqref{eq:static_trace}. 
Since \( W \) is Hermitian positive semi-definite (from Step 1), we have \( |W_{nn}|^2 = W_{nn}^2 \). Applying the Cauchy-Schwarz inequality to the real vector of diagonal elements:
\begin{equation}
    \sum_n W_{nn}^2 \geqslant \frac{1}{D} \left( \sum_n W_{nn} \right)^2.
\end{equation}
Substituting Eq.~\eqref{eq:static_trace} into the right-hand side, we obtain:
\begin{equation}
    \sum_n W_{nn}^2 \geqslant \frac{1}{D} \left( D - \frac{1}{2} \| T - \mathcal{I} \|_F^2 \right)^2.
\end{equation}

Finally, combining this with Eq.~\eqref{eq:dynamic_sum}, we arrive at the fidelity bound:
\begin{equation}
    \frac{1}{D} \left\langle \left| \mathrm{Tr}(U^\dagger(t) U_{\mathrm{eff}}(t)) \right| \right\rangle_t \geqslant \left( 1 - \frac{1}{2D} \| T - \mathcal{I} \|_F^2 \right)^2.
\end{equation}

\paragraph*{Conclusion.}
This derivation confirms that the Frobenius norm minimization enforced by the LAUT principle is equivalent to maximizing the lower bound of the dynamical fidelity. The result holds rigorously provided the transformation \( T \) is chosen according to the least-action criterion, which ensures the requisite positive semi-definiteness of the basis overlap matrix.

\section{Gauge Dependence in Local Rotation Schemes}
\label{app:symmetry_GR}
Givens rotations offer a systematic method for iteratively block diagonalizing Hermitian matrices via unitary transformations. However, unless symmetry constraints are explicitly enforced, such operations may inadvertently break conserved quantities intrinsic to the system. We illustrate this phenomenon with two representative examples.

\subsection{Example 1: Two-Qubit System}

Consider a two-qubit Hamiltonian with an XY-type interaction:
\begin{equation}
H = X_1 X_2 + Y_1 Y_2,
\end{equation}
where \( X_i, Y_i, Z_i \) denote the Pauli operators acting on qubit \( i \). This Hamiltonian conserves the total spin along the \( z \)-axis:
\begin{equation}
[H, Z_1 + Z_2] = 0,
\end{equation}
indicating a global \( U(1) \) symmetry associated with rotations about the \( z \)-axis in the spin subspace.

Now apply a single-qubit Givens rotation on qubit 1 about the \( Y \)-axis:
\begin{equation}
G(\theta) = \exp\left(-i \frac{\theta}{2} Y_1\right).
\end{equation}
The rotated Hamiltonian is
\begin{equation}
H' = G^\dagger(\theta) H G(\theta).
\end{equation}
Using standard rotation identities for Pauli operators,
\begin{align}
G^\dagger(\theta) X_1 G(\theta) &= X_1 \cos\theta + Z_1 \sin\theta, \\
G^\dagger(\theta) Y_1 G(\theta) &= Y_1,
\end{align}
we obtain
\begin{align}
H' &= \left(X_1 \cos\theta + Z_1 \sin\theta\right) X_2 + Y_1 Y_2 \notag \\
   &= \cos\theta\, X_1 X_2 + \sin\theta\, Z_1 X_2 + Y_1 Y_2.
\end{align}

To assess symmetry preservation, we compute the commutator with the total \( Z \) operator:
\begin{equation}
[H', Z_1 + Z_2] = 2i (1 - \cos\theta)(Y_1 X_2 + X_1 Y_2) - 2i \sin\theta\, Z_1 Y_2.
\end{equation}
This expression vanishes only when \( \theta = 0 \) (mod \( 2\pi \)), implying that for any nontrivial rotation angle, the rotated Hamiltonian \( H' \) does \emph{not} conserve total \( S_z \). Thus, the Givens rotation explicitly breaks the \( U(1) \) symmetry of the original system.

\subsection{Example 2: Three-Level System}

Consider the symmetric Hamiltonian
\begin{equation}
H = 
\begin{pmatrix}
1 & 0.1 & 0.1 \\
0.1 & 0 & 0 \\
0.1 & 0 & 0 \\
\end{pmatrix},
\end{equation}
defined in the basis \( \{ \ket{0}, \ket{1}, \ket{2} \} \). This Hamiltonian exhibits a \(\mathbb{Z}_2\) symmetry under exchange of levels \( \ket{1} \leftrightarrow \ket{2} \), implemented by the permutation operator
\begin{equation}
\mathcal{P}_{12} = 
\begin{pmatrix}
1 & 0 & 0 \\
0 & 0 & 1 \\
0 & 1 & 0 \\
\end{pmatrix},
\end{equation}
which satisfies
\begin{equation}
[\mathcal{P}_{12}, H] = 0.
\end{equation}
This symmetry implies that the Hilbert space decomposes into symmetric and antisymmetric subspaces under level exchange.

\paragraph{Apply Givens rotation \(G_{01}(\theta)\):}
We now apply a Givens rotation in the \( \ket{0} \)–\( \ket{1} \) subspace:
\begin{equation}
G_{01}(\theta) = 
\begin{pmatrix}
\cos\theta & -\sin\theta & 0 \\
\sin\theta & \cos\theta & 0 \\
0 & 0 & 1 \\
\end{pmatrix}.
\end{equation}
This transformation mixes levels \( \ket{0} \) and \( \ket{1} \) while leaving \( \ket{2} \) unchanged.

The rotated Hamiltonian is
\begin{equation}
H' = G_{01}^\dagger(\theta) H G_{01}(\theta),
\end{equation}
where
\[
G_{01}^\dagger(\theta) = 
\begin{pmatrix}
\cos\theta & \sin\theta & 0 \\
-\sin\theta & \cos\theta & 0 \\
0 & 0 & 1 \\
\end{pmatrix}.
\]

Multiplying out, we find:
\begin{widetext}
\begin{equation}
H' = 
\begin{pmatrix}
0.1 \sin(2\theta) + 0.5 \cos(2\theta) + 0.5 & -0.5 \sin(2\theta) + 0.1 \cos(2\theta) & 0.1 \cos\theta \\
-0.5 \sin(2\theta) + 0.1 \cos(2\theta) & -0.1 \sin(2\theta) - 0.5 \cos(2\theta) + 0.5 & -0.1 \sin\theta \\
0.1 \cos\theta & -0.1 \sin\theta & 0
\end{pmatrix},
\end{equation}
\end{widetext}
with \( \tan(2\theta) = 0.2 \) (i.e., \( \theta \approx 0.0314\pi \)) representing the specific Givens rotation that eliminates the off-diagonal coupling between levels \( \ket{0} \) and \( \ket{1} \).

The resulting matrix elements involving levels \( \ket{1} \) and \( \ket{2} \) are not invariant under exchange,
\[
\mathcal{P}_{12} H' \mathcal{P}_{12}^\dagger \neq H',
\]
$$
H'=\begin{pmatrix}
1.0099 & 0 & 0.0995 \\
0 & -0.0099 & -0.0098 \\
0.0995 & -0.0098 & 0
\end{pmatrix},
$$
demonstrating that the Givens rotation breaks the original \( \mathbb{Z}_2 \) symmetry.

\subsection*{Discussion}

These examples underscore a key limitation of unconstrained Givens rotation schemes: absent explicit symmetry constraints, such transformations can violate physical symmetries intrinsic to the system. In contrast, symmetry-preserving diagonalization approaches—such as those derived from variational principles or least-action formulations—can be designed to respect conserved quantities by construction. This consideration is especially critical in constructing effective Hamiltonians or in symmetry-protected quantum simulation, where preserving the algebraic structure and conserved quantities is essential to ensuring physically faithful modeling.

\section{Validity Conditions and Limitations of Perturbative Block Diagonalization}
\label{app:pbd-validity}
Consider an unperturbed Hamiltonian $H_0$ with eigenvalues in a subset $\ell_0 \subseteq \mathbb{R}$ and the corresponding subspace $P_0 \subseteq \mathcal{H}$. A spectral gap $\Delta$ separating $\ell_0$ from the remainder of the spectrum is defined as
$$|\lambda - \eta| \geq \Delta, \quad \forall \lambda \in \ell_0, \eta \notin \ell_0.$$
For a perturbed Hamiltonian $H = H_0 + \varepsilon V$ with Hermitian $V$ and perturbation parameter $\varepsilon$, an extended interval $\ell \supset \ell_0$ is defined by enlarging $\ell_0$ by $\Delta/2$ on each side. Denoting $P$ as the subspace spanned by eigenstates of $H$ with eigenvalues in $\ell$, perturbative block diagonalization remains valid under the condition~\cite{bravyi2011schrieffer}
$$|\varepsilon| \leqslant \varepsilon_c = \frac{\Delta}{2 \|V\|}.$$
Typically, $\ell_0$ contains the low-energy manifold, so $P_0$ defines a low-energy effective subspace.

Regarding the limitations of the SWT, the complexity arises from the factorial growth in computing $\mathcal{A} = \sum_{n=1}^\infty \mathcal{A}_n \varepsilon^n$, where each $\mathcal{A}_n$ involves increasingly nested commutators~\cite{wurtz2020variational, bravyi2011schrieffer}. Spurious singularities can emerge from divergent series in near-degenerate regimes~\cite{sanz2016beyond}. For multi-block extensions, adaptations like generalized SWT require handling multiple generators, escalating overhead~\cite{hormann2023projective, araya2025pymablock}.

\section{Detailed Review of the Bloch--Brandow Formalism}
\label{app:bb-formalism}
Consider a Hamiltonian of the form $H = H_0 + V$, defined on a $D$-dimensional Hilbert space $\mathcal{H} = \mathcal{H}_C \oplus \mathcal{H}_{NC}$, where $H_0$ is diagonal and $V$ is a weak perturbation. Let $\mathcal{H}C$ denote a computational subspace of dimension $d$, and $\mathcal{H}{NC}$ its orthogonal complement. Define the projectors
$$P = \sum_{i=1}^d |i\rangle\langle i|, \quad Q = \sum_{I=d+1}^D |I\rangle\langle I|,$$
where $H_0 |i\rangle = E_i^{(0)} |i\rangle$ and $H_0 |I\rangle = E_I^{(0)} |I\rangle$. The full eigenstates of $H$ satisfy
$$H |\Psi_k\rangle = E_k |\Psi_k\rangle, \quad k = 1, \dots, D.$$
For the $d$ lowest-energy eigenstates ($k = 1, \dots, d$), decompose
$$|\Psi_k\rangle = |\phi_k\rangle + |\Phi_k\rangle, \quad |\phi_k\rangle = P |\Psi_k\rangle, \quad |\Phi_k\rangle = Q |\Psi_k\rangle.$$

\paragraph*{Connection between EBD-LAUT and PBD-BB approaches}

The LAUT framework admits two variants: one requiring a unitary transformation \(T^{\rm LAUT} = e^G\) with anti-Hermitian \(G\), and one requiring only invertibility transformation $T^{\rm BB} = e^{\hat{\omega}}$. The LAUT method given in Ref.~\cite{cederbaum1989block} adopts the unitary case, while the BB approach aligns with the latter.

In the EBD-LAUT method, the eigenstate matrix of EH is
\begin{equation}
    \mathcal{E}^{\mathrm{LAUT}} = \left(S_{\mathrm{BD}} S_{\mathrm{BD}}^\dagger\right)^{-1/2} S_{\mathrm{BD}},
\end{equation}
providing the closest orthonormal approximation to the exact eigenstates. The corresponding effective Hamiltonian is
\begin{equation}\label{eq:eff_LAUT}
    H_{\mathrm{eff}}^{\mathrm{LAUT}} = P\,e^{-G} H\,e^{G} P,
\end{equation}
which is Hermitian and satisfies the LAUT condition.

In contrast, the BB effective Hamiltonian is defined via the projected eigenvalue equation
\begin{equation}\label{eq:definition_of_EH_BT}
    H_{\mathrm{eff}}^{\mathrm{BB}} \ket{\phi_k} = E_k \ket{\phi_k},
    \qquad k = 1, \dots, d,
\end{equation}
with 
\begin{equation}
    \mathcal{E}^{\mathrm{BB}} = S_{\mathrm{BD}}.
\end{equation}
The associated decoupling operator \(\hat{\omega}\) satisfies
\begin{equation}
    \hat{\omega} = Q \hat{\omega} P, \qquad
    \hat{\omega} P \ket{\Psi_k} = Q \ket{\Psi_k}.
\end{equation}
Using \(\hat{\omega}^2 = 0\) and \(\exp(\pm\hat{\omega}) = \mathcal{I} \pm \hat{\omega}\), the BB effective Hamiltonian becomes
\begin{equation}
    H_{\mathrm{eff}}^{\mathrm{BB}} = P e^{-\hat{\omega}} H e^{\hat{\omega}} P 
    = P H (P + \hat{\omega}),
\end{equation}
which is generally non-Hermitian due to non-orthogonality of the projected eigenstates.

The EBD-LAUT and PBD-BB generators are connected via~\cite{shavitt1980quasidegenerate,kvaal2008geometry}
\begin{equation}
    G = \tanh^{-1}(\hat{\omega} - \hat{\omega}^{\dagger}).
\end{equation}
In practice, Hermiticity can be restored via symmetrization~\cite{richert1976comparison}:
\begin{equation}
    H_{\mathrm{eff}}^{\mathrm{Herm}} = \frac{1}{2}\left(H_{\mathrm{eff}}^{\mathrm{BB}} + H_{\mathrm{eff}}^{\mathrm{BB}\dagger}\right).
\end{equation}

\paragraph*{Perturbative expansion of the effective interaction}

The BB effective Hamiltonian admits the form
\begin{equation}
    H_{\mathrm{eff}}^{\mathrm{BB}} = P H_0 P + V_{\mathrm{eff}}, \qquad
    V_{\mathrm{eff}} = P V_{\mathrm{eff}} P,
\end{equation}
where \(V_{\mathrm{eff}}\) encodes virtual transitions through the non-computational subspace \(\mathcal{H}_{\mathrm{NC}}\). More precisely, it captures virtual excursions from the computational subspace \(\mathcal{H}_{\mathrm{C}}\) into \(\mathcal{H}_{\mathrm{NC}}\) and back, resulting in renormalized frequencies/couplings  within \(\mathcal{H}_{\mathrm{C}}\).  Up to fourth order in \(V\)~\cite{takayanagi2016effective}:
\begin{equation}\label{eq:PBD_BT_results}
\begin{aligned}
    V_{\mathrm{eff}}^{(1)} &= P V P, \\
    V_{\mathrm{eff}}^{(2)} &= P \left[ V (V) \right] P, \\
    V_{\mathrm{eff}}^{(3)} &= P \left[ V (V (V)) - V ((V) V) \right] P, \\
    V_{\mathrm{eff}}^{(4)} &= P \big[ V (V (V (V))) - V (V ((V) V)) \\
    &\quad - V ((V(V)) V) + V (((V) V) V) - V ((V) V (V)) \big] P.
\end{aligned}
\end{equation}
Here, the superoperator \((V)\) is defined as
\begin{equation}\label{eq:superoperator}
    (V)_{I i} = \frac{1}{E^{(0)}_i - E^{(0)}_I} \bra{I} V \ket{i}, \qquad
    i \in \mathcal{H}_{\mathrm{C}},\; I \in \mathcal{H}_{\mathrm{NC}}.
\end{equation}
This formalism yields elementwise closed-form expressions for \(H_{\mathrm{eff}}\), with moderate computational cost for 4th- and 6th-order calculations.

\paragraph*{Advantages of the PBD-BB method}

Compared to the SWT, the PBD-BB framework offers several key advantages:
\begin{enumerate}[(1)]
    \item It satisfies the least-action criterion to all orders of perturbation; upon symmetrization, the resulting effective Hamiltonian matches the traditional perturbation result \( H_{\mathrm{eff}}^{\mathrm{LAUT}} \) up to fourth order~\cite{richert1976comparison}.
    \item It provides explicit, closed-form expressions for \( H_{\mathrm{eff}} \), facilitating both analytical modeling and numerical symbolic implementation.
    \item It naturally extends to multi-block diagonalization, enabling systematic treatment of systems with multiple weakly coupled or symmetry-partitioned sectors.
\end{enumerate}

\section{Symmetry Preservation in the Bloch-Brandow Formalism}
\label{app:symmetry_bb}

In this section, we rigorously prove that the effective Hamiltonian constructed via the Bloch-Brandow (BB) formalism preserves the underlying symmetries of the full system, provided that the subspace partitioning is symmetry-adapted. This formalizes the connection between algebraic consistency and physical conservation laws in our framework.

Consider a full Hamiltonian $H$ acting on a Hilbert space $\mathcal{H}$, and a Hermitian symmetry operator $\mathcal{S}$ such that $[\mathcal{S}, H] = 0$. Let $\{\ket{\Psi_k}\}$ be the common eigenstates of $H$ and $\mathcal{S}$ with eigenvalues $E_k$ and $s_k$, respectively:
\begin{equation}
    H \ket{\Psi_k} = E_k \ket{\Psi_k}, \qquad
    \mathcal{S}\ket{\Psi_k} = s_k \ket{\Psi_k}.
\end{equation}
We partition the Hilbert space into a computational subspace $\mathcal{H}_{\rm C}$ and its orthogonal complement $\mathcal{H}_{\mathrm{NC}}$ using the projector $P = \sum_{i=1}^d \ket{i}\bra{i}$ and $Q = \mathcal{I}-P$. The BB effective Hamiltonian $H_{\mathrm{eff}}^{\mathrm{BB}}$ acting on $\mathcal{H}_{\rm C}$ is defined by the projected eigenvalue equation:
\begin{equation}\label{eq:EH_BB_app_def}
    H_{\mathrm{eff}}^{\mathrm{BB}}\ket{\phi_i} = E_i\ket{\phi_i}, \qquad \ket{\phi_i}=P\ket{\Psi_i}, \quad i=1,\dots,d.
\end{equation}

\noindent \textbf{Corollary~\ref{prop:bb_symmetry}.} \textit{The BB effective Hamiltonian $H_{\mathrm{eff}}^{\mathrm{BB}}$ preserves a system symmetry $\mathcal{S}$ within the target subspace if and only if the projector $P$ commutes with the symmetry operator:}
\begin{equation*}
    [\,H_{\mathrm{eff}}^{\mathrm{BB}},\,\mathcal{S}\,]=0 \;\Longleftrightarrow\; [\,P,\mathcal{S}\,]=0.
\end{equation*}

\begin{proof}
\textbf{Necessity.} Assume $[\,H_{\mathrm{eff}}^{\mathrm{BB}}, \mathcal{S}\,]=0$. Since the projected states $\{\ket{\phi_i}\}$ are eigenstates of $H_{\mathrm{eff}}^{\mathrm{BB}}$ with distinct eigenvalues (assuming no accidental cross-block degeneracy), they must also be eigenstates of $\mathcal{S}$. Given the full system symmetry $\mathcal{S}\ket{\Psi_i} = s_i\ket{\Psi_i}$, we have:
\begin{equation}
    \mathcal{S}\ket{\phi_i} = \mathcal{S}P\ket{\Psi_i} = s_i P\ket{\Psi_i} = P\mathcal{S}\ket{\Psi_i}.
\end{equation}
Rearranging terms yields $[\mathcal{S}, P]\ket{\Psi_i} = 0$. Since the set $\{\ket{\Psi_i}\}$ provides a sufficient basis for the target manifold, it follows that $[\mathcal{S}, P] = 0$.

\textbf{Sufficiency.} Conversely, if $[\,P,\mathcal{S}\,]=0$, the projector $P$ and the symmetry $\mathcal{S}$ share a common eigenbasis. For any projected eigenstate $\ket{\phi_i} = P\ket{\Psi_i}$, we have:
\begin{equation}
    \mathcal{S}\ket{\phi_i} = \mathcal{S}P\ket{\Psi_i} = P\mathcal{S}\ket{\Psi_i} = s_i P\ket{\Psi_i} = s_i \ket{\phi_i}.
\end{equation}
Thus, $\{\ket{\phi_i}\}$ are eigenstates of $\mathcal{S}$. By the definition in Eq.~\eqref{eq:EH_BB_app_def}, $H_{\mathrm{eff}}^{\mathrm{BB}}$ is diagonal in the basis of $\{\ket{\phi_i}\}$, leading to $[\,H_{\mathrm{eff}}^{\mathrm{BB}}, \mathcal{S}\,]=0$.
\end{proof}

\textbf{Remark.} This proof highlights that symmetry breaking in effective models often stems from choosing a computational subspace that ``slices" through symmetry sectors. By ensuring $[P, \mathcal{S}] = 0$, the PBD-BB method inherently respects conserved quantities like the total excitation number in Q-C-Q systems, preventing artificial transitions that plague non-variational approaches.

\section{Benchmarking Model: Qubit--Coupler--Qubit Architecture}
\label{app:QCQ model}
\begin{figure}[tb]
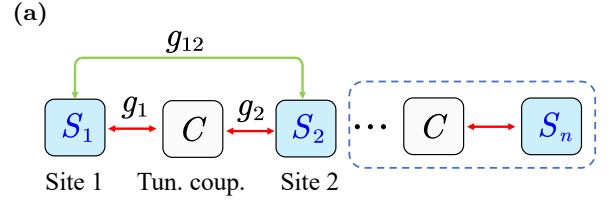

        \centering
        \begin{overpic}[scale=0.3]{Figures/Simulator/Schematic.pdf}
            \put(0,35){\textbf{(a)}}
        \end{overpic}    
    \caption{
    Schematic of a three-mode system (or generalized \(2n{-}1\)-mode chain, indicated by the dashed extension) with nearest-neighbor exchange couplings (red arrows) and a tunable next-nearest-neighbor coupling (green arrow) mediated by the central mode. In the three-mode configuration, the center serves as a tunable coupler; in the generalized layout, \(n\) lattice sites are connected via \(n{-}1\) mediators.     
    }\label{fig:schematic}
\end{figure}
Our benchmarks employ a hybrid superconducting circuit comprising two multilevel qubits coupled via a frequency-tunable coupler. This qubit--coupler--qubit (Q--C--Q) module serves as the minimal unit for evaluating effective-Hamiltonian constructions, as discussed in Sec.~\ref{sec:benchmarking}. The outer modes act as anharmonic qubits, while the central mode provides a tunable mediated interaction with minimal distortion of local qubit properties. 

The full microscopic Hamiltonian for the three-mode system reads
\begin{equation}
\begin{aligned}
H_{2Q} = &\sum_{i\in\{1,2,c\}}
\left(\omega_i\, a^{\dagger}_i a_i
+ \frac{\beta_i}{2}\, a^{\dagger}_i a^{\dagger}_i a_i a_i\right) \\
&- g_{12}\,\big(a_1 - a_1^{\dagger}\big)
\big(a_2 - a_2^{\dagger}\big) \\
&- \sum_{j\in\{1,2\}} g_j\,
\big(a_j - a_j^{\dagger}\big)
\big(a_c - a_c^{\dagger}\big),
\end{aligned}
\end{equation}
where $i\in\{1,2,c\}$ labels the two qubits and the tunable coupler. The qubit bare frequencies and anharmonicities are denoted by $\omega_{1,2}$ and $\beta_{1,2}$, respectively, and $\omega_{\mathrm{c}}$ is the coupler frequency. In typical devices the qubit--coupler couplings satisfy $g_1, g_2 \gg g_{12}$, ensuring that the mediated interaction dominates over direct qubit--qubit coupling~\cite{yan2018tunable,sung2021realization}.

Operating in the circuit-QED regime~\cite{blais2004cavity,wallraff2004strong}, both qubit and coupler frequencies are flux-tunable. To leading order in circuit capacitances the couplings are approximated by~\cite{yan2018tunable}
\begin{equation}\label{eq:direct_coupling_strength}
\begin{aligned}
g_{j} &\approx \frac{1}{2}\,\frac{C_{jc}}{\sqrt{C_j C_c}}\,
\sqrt{\omega_j\,\omega_{\mathrm{c}}},\quad j=1,2, \\[1mm]
g_{12} &\approx \frac{1}{2}
\left(\frac{C_{12}}{\sqrt{C_1 C_2}}
+ \frac{C_{1c}C_{2c}}{\sqrt{C_1 C_2}\,C_c}\right)
\sqrt{\omega_1\,\omega_2},
\end{aligned}
\end{equation}
where $C_{jc}$ are the qubit--coupler capacitances, $C_{12}$ is the small direct capacitance between the qubits, and $C_1,C_2,C_c$ are the total mode capacitances. This arrangement enables tuning of the effective qubit--qubit interaction via the coupler frequency $\omega_{\mathrm{c}}$ while keeping local qubit properties essentially unaffected, which is critical for the benchmarking presented in the main text.

\section{Relating Effective Coupling to Spectral Frequencies}
\label{app:effective_coupling_FFT}

To connect the effective coupling $\tilde{g}$ extracted from FFT analysis to the underlying Hamiltonian eigenvalues, we derive exact expressions for the case of two identical qubits ($\omega_1 = \omega_2$) and of two small detuned qubits ($|\omega_1 - \omega_2|\sim\varepsilon$). This provides a transparent reference for benchmarking our effective-Hamiltonian methods.

\subsection{Symmetric Qubits: No Detuning}

When the qubit frequencies are equal, the single-excitation subspace spanned by $\{\ket{100}, \ket{010}, \ket{001}\}$ admits a simple analytic solution. The Hamiltonian matrix reads

\begin{equation}
H = \begin{pmatrix}
\Delta & g & g \\
g & 0 & 0 \\
g & 0 & 0
\end{pmatrix},
\end{equation}
where $\Delta = \omega_{\mathrm{c}} - \omega_1$ is the coupler--qubit detuning and $g$ is the coupling strength.

The eigenvalues of this matrix are readily obtained:
\begin{equation}
\lambda_1 = \frac{\Delta + X}{2}, \quad
\lambda_2 = \frac{\Delta - X}{2}, \quad
\lambda_3 = 0,
\end{equation}
with $X = \sqrt{8g^2 + \Delta^2}$. Here $\lambda_1$ corresponds primarily to the coupler excitation, while $\lambda_2$ and $\lambda_3$ span the qubit subspace.

The corresponding normalized eigenvectors are
\begin{equation}
\ket{\phi_1} = \alpha_1 \begin{pmatrix} \frac{\Delta + X}{2g} \\ 1 \\ 1 \end{pmatrix},
\ket{\phi_2} = \alpha_2 \begin{pmatrix} \frac{\Delta - X}{2g} \\ 1 \\ 1 \end{pmatrix},
\ket{\phi_3} = \frac{1}{\sqrt{2}}\begin{pmatrix} 0 \\ -1 \\ 1 \end{pmatrix},
\end{equation}
where $\alpha_{1,2}$ are normalization constants.

Time-domain simulations show a slow oscillation whose frequency corresponds to the energy difference between the two qubit-like eigenstates. This frequency is precisely $\lambda_2 - \lambda_3$, which equals twice the effective coupling:
\begin{equation}
2\tilde{g} = \lambda_2 - \lambda_3.
\end{equation}

To see this explicitly, we use the exact expression for $\tilde{g}$ derived via the least-action criterion (see Appendix~\ref{app:three_level}):
\begin{equation}
\tilde{g} = -\frac{g^2}{\lambda_1} = -\frac{2g^2}{\Delta + X}.
\end{equation}
Direct substitution verifies that
\begin{equation}
-\frac{2g^2}{\Delta + X} = \frac{1}{2}(\Delta - X),
\end{equation}
confirming that the frequency extracted from FFT analysis indeed corresponds to $2\tilde{g}$ in the symmetric case. This relationship serves as a rigorous benchmark for our numerical EH constructions.

\subsection{Small Detuning: Perturbative Treatment}

Next, we introduce a small asymmetry by setting \(\omega_2 = \omega_1 + \varepsilon\), so that the system Hamiltonian becomes
\begin{equation*}
    H = \begin{pmatrix}
        \Delta & g & g \\
        g & 0 & 0 \\
        g & 0 & \varepsilon
    \end{pmatrix}.
\end{equation*}

Using non-degenerate perturbation theory, the eigenvalues are shifted as follows:
\begin{equation}
\begin{aligned}
    \lambda_1 &= \frac{\Delta + X}{2} + \alpha_1^2 \varepsilon + \left( \frac{\alpha_1^2 \alpha_2^2}{X} + \frac{\alpha_1^2}{\Delta + X} \right)\varepsilon^2, \\
    \lambda_2 &= \frac{\Delta - X}{2} + \alpha_2^2 \varepsilon + \left( -\frac{\alpha_1^2 \alpha_2^2}{X} + \frac{\alpha_2^2}{\Delta - X} \right)\varepsilon^2, \\
    \lambda_3 &= \frac{\varepsilon}{2} - \left( \frac{\alpha_1^2}{\Delta + X} + \frac{\alpha_2^2}{\Delta - X} \right)\varepsilon^2.
\end{aligned}
\end{equation}

Assuming \(g \ll \Delta\), we find that the relevant eigenvalue difference becomes
\begin{equation}
    \begin{aligned}
        \lambda_2 - \lambda_3 =& \frac{\Delta - X}{2} + (\alpha_2^2 - 0.5)\varepsilon+ \mathcal{O}(\varepsilon^3) \\
    &+\left( -\frac{\alpha_1^2 \alpha_2^2}{X} + \frac{\alpha_2^2}{\Delta - X} + \frac{\alpha_1^2}{\Delta + X} + \frac{\alpha_2^2}{\Delta - X} \right)\varepsilon^2
    .
    \end{aligned}
\end{equation}

Meanwhile, the effective coupling derived from the least action method is
\begin{equation}
    \tilde{g} = -\frac{g^2}{2} \left( \frac{1}{\lambda_1} + \frac{1}{\lambda_1 - \varepsilon} \right),
\end{equation}
which, after expansion in \(\varepsilon\), yields
\begin{equation}
\begin{aligned}
    2\tilde{g} &= (\lambda_2 - \lambda_3) 
    + \frac{4g^2}{(\Delta + X)^2}(2\alpha_1^2 - 1)\varepsilon 
    - (\alpha_2^2 - 0.5)\varepsilon \\
    &\quad - \left( -\frac{\alpha_1^2 \alpha_2^2}{X} + \frac{\alpha_2^2}{\Delta - X} + \frac{\alpha_1^2}{\Delta + X} + \frac{\alpha_2^2}{\Delta - X} \right)\varepsilon^2,
\end{aligned}
\end{equation}
where the coefficients \(\alpha_1^2\) and \(\alpha_2^2\) are given by
\begin{equation*}
\begin{aligned}
    \alpha_1^2 &= \frac{4g^2}{(\Delta + X)^2 + 8g^2}, \\
    \alpha_2^2 &= \frac{4g^2}{(\Delta - X)^2 + 8g^2}.
\end{aligned}
\end{equation*}

These results confirm that the frequency extracted via Fourier analysis corresponds exactly to \(2\tilde{g}\) in the symmetric limit, and remains analytically connected to \(2\tilde{g}\) through perturbative expressions under weak detuning.

\section{Three-Level Q--C--Q System: Benchmarking Effective-Hamiltonian Methods}
\label{app:three_level}

The three-level Q--C--Q architecture provides a minimal yet analytically tractable setting for benchmarking effective-Hamiltonian (EH) constructions. We focus on three widely used approaches: (i) exact block diagonalization via the least-action criterion (EBD-LAUT), (ii) its perturbative counterpart in the bare basis (PBD-BB), and (iii) the Schrieffer--Wolff transformation (SWT). Our main goal is to establish that \emph{LAUT and PBD-BB coincide order-by-order up to fourth order}, whereas \emph{SWT deviates starting at fourth order}, revealing the structural advantage of LAUT-aligned methods.

\subsection{Model}
Two transmons, \(Q_1\) and \(Q_2\), are coupled via a tunable coupler \(C\). Restricting to the single-excitation subspace \(\{\ket{100},\ket{010},\ket{001}\}\), the Hamiltonian reads
\begin{equation}
H_1 =
\begin{pmatrix}
0 & g_1 & g_2 \\
g_1 & \Delta_1 & 0 \\
g_2 & 0 & \Delta_2
\end{pmatrix}
+ \omega_{\mathrm{c}} \mathcal{I}_3,
\end{equation}
with \(\Delta_i = \omega_i - \omega_{\mathrm{c}}\). Eliminating the coupler state \(\ket{010}\) defines an effective two-level Hamiltonian with coupling \(\tilde g\) and renormalized qubit energies \(\tilde\omega_{1,2}\).

\subsection{Exact Block Diagonalization via the Least-Action Criterion (EBD-LAUT)}
In the LAUT approach, the coupler eigenvalue \(\lambda\) is fixed by continuity as \(g_{1,2}\rightarrow0\), so that \(\lambda\to\omega_{\mathrm{c}}\). The non-Hermitian block-diagonalization
\[
\mathcal{H} = U^{-1} H U
\]
yields the qubit block
\begin{equation}
\begin{pmatrix}
\omega_1+\dfrac{g_1^2}{\omega_1-\lambda} & \dfrac{g_1 g_2}{\omega_2-\lambda} \\[1mm]
\dfrac{g_1 g_2}{\omega_1-\lambda} & \omega_2+\dfrac{g_2^2}{\omega_2-\lambda}
\end{pmatrix}.
\end{equation}
Defining \(\delta_i = \omega_i - \lambda\) and expanding \(\lambda\) perturbatively up to fourth order in \(g_i/\Delta_i\) gives
\[
\lambda^{(4)} = \omega_{\mathrm{c}} - \frac{g_1^2}{\Delta_1} - \frac{g_2^2}{\Delta_2} + \left(\frac{g_1^2}{\Delta_1^2} + \frac{g_2^2}{\Delta_2^2}\right) \left(\frac{g_1^2}{\Delta_1} + \frac{g_2^2}{\Delta_2}\right).
\]
The resulting fourth-order effective parameters are
\begin{align}
\tilde{g}_{LAUT}^{(4)} &= \frac{g_1 g_2}{2} \Big[\left(\frac{1}{\Delta_1} + \frac{1}{\Delta_2}\right)
- \left(\frac{g_1^2}{\Delta_1} + \frac{g_2^2}{\Delta_2}\right)\left(\frac{1}{\Delta_1^2} + \frac{1}{\Delta_2^2}\right) \Big], \\
\tilde{\omega}_{1,LAUT}^{(4)} &= \omega_1 + \frac{g_1^2}{\Delta_1} - \frac{g_1^4}{\Delta_1^3} - \frac{g_1^2 g_2^2}{\Delta_1^2 \Delta_2}, \\
\tilde{\omega}_{2,LAUT}^{(4)} &= \omega_2 + \frac{g_2^2}{\Delta_2} - \frac{g_2^4}{\Delta_2^3} - \frac{g_1^2 g_2^2}{\Delta_1 \Delta_2^2}.
\end{align}

\subsection{Perturbative Block Diagonalization (PBD-BB)}
In the PBD-BB approach, the effective coupling arises from two-step virtual processes:
\begin{equation}
\ket{100}\xrightarrow{g_1}\ket{010}\xrightarrow{g_2}\ket{001}, \quad
\ket{001}\xrightarrow{g_2}\ket{010}\xrightarrow{g_1}\ket{100},
\end{equation}
giving a second-order contribution \(\tilde g^{(2)}_{BB} = \frac{g_1 g_2}{2} (\frac{1}{\Delta_1} + \frac{1}{\Delta_2})\). Extending to fourth order by summing all relevant four-step paths reproduces exactly the LAUT results:
\[
\tilde g_{BB}^{(4)} = \tilde g_{LAUT}^{(4)}, \quad \tilde{\omega}_{i,BB}^{(4)} = \tilde{\omega}_{i,LAUT}^{(4)},\ i=1,2.
\]

\subsection{Analytic Derivation via the Schrieffer–Wolff Transformation}\label{app:3_level_SWT}

The SWT expands the generator \(\mathcal{A}\) as
\[
\mathcal{A}=\sum_{n=1}^\infty \mathcal{A}_n\varepsilon^n.
\]
Following Ref.~\cite{bravyi2011schrieffer}, we define the off-diagonal interaction
\[
V_{\text{od}}=\begin{pmatrix}
	0 & 0 & g_1\\[1mm]
	0 & 0 & g_2\\[1mm]
	g_1 & g_2 & 0
\end{pmatrix},
\]
and obtain the first-order generator
\[
\mathcal{A}_1=\begin{pmatrix}
	0 & 0 & \dfrac{g_1}{\Delta_1}\\[1mm]
	0 & 0 & \dfrac{g_2}{\Delta_2}\\[1mm]
	-\dfrac{g_1}{\Delta_1} & -\dfrac{g_2}{\Delta_2} & 0
\end{pmatrix}.
\]
Its commutator with \(V_{\text{od}}\) reproduces the second-order corrections equivalent to those of the PBD-BB method. At fourth order, the effective hopping comprises two contributions:
\begin{widetext}
\begin{align*}
    T_1 &= \frac{g_1g_2^3\Delta_1\left(4\Delta_1^2+\Delta_1\Delta_2+3\Delta_2^2\right)
    +g_1^3g_2\Delta_2\left(3\Delta_1^2+\Delta_1\Delta_2+4\Delta_2^2\right)}
    {-6\Delta_1^3\Delta_2^3},\\[1mm]
    T_2 &= \frac{g_1g_2\left(g_1^2\Delta_2^2\left(7\Delta_1+\Delta_2\right)
    +g_2^2\Delta_1^2\left(\Delta_1+7\Delta_2\right)\right)}
    {24\Delta_1^3\Delta_2^3}.
\end{align*}
The combined effective hopping is then
\[
\tilde{g}_{SWT}^{(4)}=\frac{g_1g_2^3\Delta_1\left(5\Delta_1^2-\Delta_1\Delta_2+4\Delta_2^2\right)
+g_1^3g_2\Delta_2\left(4\Delta_1^2-\Delta_1\Delta_2+5\Delta_2^2\right)}
{-8\Delta_1^3\Delta_2^3}.
\]
\end{widetext}
Similarly, the renormalized frequencies are renormalized as
\begin{align*}
    \tilde{\omega}_{1,SWT}^{(4)} &= \omega_1+\frac{g_1^2}{\Delta_1}-\frac{g_1^4}{\Delta_1^3}-\frac{g_1^2g_2^2(\Delta_1+3\Delta_2)}{4\Delta_1^2\Delta_2^2},\\[1mm]
    \tilde{\omega}_{2,SWT}^{(4)} &= \omega_2+\frac{g_2^2}{\Delta_2}-\frac{g_2^4}{\Delta_2^3}-\frac{g_1^2g_2^2(3\Delta_1+\Delta_2)}{4\Delta_1^2\Delta_2^2}.
\end{align*}

\begin{figure}[t]
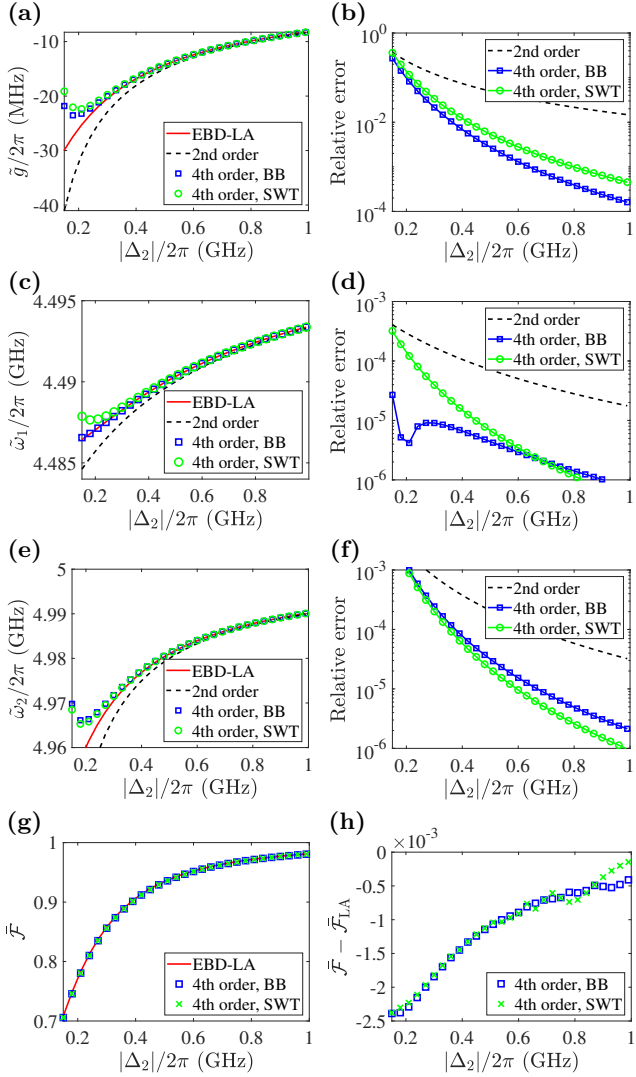

	\centering
    \subfigure{\begin{minipage}[b]{0.47\columnwidth}
        \centering
            \begin{overpic}[scale=0.3]{Figures/Physical_application/Compare_coupling.eps}
            \put(0,80){\textbf{(a)}}
        \end{overpic}
    \end{minipage}
    \label{fig:compare_three_coupling}
    }
    \subfigure{\begin{minipage}[b]{0.47\columnwidth}
        \centering
            \begin{overpic}[scale=0.3]{Figures/Physical_application/Relative_error_coupling.eps}
            \put(0,80){\textbf{(b)}}
        \end{overpic}
    \end{minipage}
    \label{fig:compare_three_coupling_error}
    }
    \subfigure{\begin{minipage}[b]{0.47\columnwidth}
        \centering
            \begin{overpic}[scale=0.3]{Figures/Physical_application/Compare_1.eps}
            \put(0,80){\textbf{(c)}}
        \end{overpic}
    \end{minipage}
    \label{fig:compare_three_omega1}
    }
    \subfigure{\begin{minipage}[b]{0.47\columnwidth}
        \centering
            \begin{overpic}[scale=0.3]{Figures/Physical_application/Relative_error_1.eps}
            \put(0,80){\textbf{(d)}}
        \end{overpic}
    \end{minipage}
    \label{fig:compare_three_omega1_error}
    }

    \subfigure{\begin{minipage}[b]{0.47\columnwidth}
        \centering
            \begin{overpic}[scale=0.3]{Figures/Physical_application/Compare_2.eps}
            \put(0,80){\textbf{(e)}}
        \end{overpic}
    \end{minipage}
    \label{fig:compare_three_omega2}
    }
    \subfigure{\begin{minipage}[b]{0.47\columnwidth}
        \centering
            \begin{overpic}[scale=0.3]{Figures/Physical_application/Relative_error_2.eps}
            \put(0,80){\textbf{(f)}}
        \end{overpic}
    \end{minipage}
    \label{fig:compare_three_omega2_error}
    }
    \subfigure{\begin{minipage}[b]{0.47\columnwidth}
        \centering
            \begin{overpic}[scale=0.3]{Figures/Physical_application/Compare_dynamics1.eps}
            \put(0,80){\textbf{(g)}}
        \end{overpic}
    \end{minipage}
    \label{fig:compare_three_dynamics}
    }
    \subfigure{\begin{minipage}[b]{0.47\columnwidth}
        \centering
            \begin{overpic}[scale=0.3]{Figures/Physical_application/Compare_dynamics2.eps}
            \put(0,80){\textbf{(h)}}
        \end{overpic}
    \end{minipage}
    \label{fig:compare_three_dynamics_error}
    }
    \caption{(Color online) 
Comparison of effective parameters extracted for the three-level system using four appraoches: exact block diagonalization via the least action unitary transformation principle (EBD-LAUT, red solid), second-order perturbation theory (black dashed), fourth-order PBD-BB (blue squares), and fourth-order SWT (green circles).  
\textbf{(a,c,e)}~Effective coupling strengths \(\tilde{g}\), and renormalized frequencies \(\tilde{\omega}_1\), \(\tilde{\omega}_2\).  
\textbf{(b,d,f)}~Relative errors with respect to the LAUT result, defined as \(\left|\left(\tilde{g} - \tilde{g}_{\mathrm{LAUT}}\right) / \tilde{g}_{\mathrm{LAUT}}\right|\), and analogously for \(\tilde{\omega}_1\) and \(\tilde{\omega}_2\). The PBD-BB method shows significantly better agreement with the EBD-LAUT benchmark than SWT.  
\textbf{(g)}~Time-averaged fidelity \(\bar{\mathcal{F}}\) between the full and effective Hamiltonian dynamics over a total evolution time of \(T = 100\,\mu\mathrm{s}\).  
\textbf{(h)}~Fidelity difference \(\bar{\mathcal{F}} - \bar{\mathcal{F}}_{\mathrm{LAUT}}\), highlighting comparable dynamical performance of PBD-BB and SWT in this simple model.
}

	\label{fig:compare_three_level}
\end{figure}
\subsection{Numerical Comparison and Summary}
Figures~\ref{fig:compare_three_level}(a–f) show representative detuning sweeps. Key observations:
\begin{itemize}
    \item PBD-BB tracks EBD-LAUT almost exactly for \(\tilde g\) and \(\tilde{\omega}_{1,2}\).
    \item SWT exhibits systematic deviations at fourth order, particularly near small detuning or strong hybridization.
\end{itemize}
Time-domain fidelity comparisons [Figs.~\ref{fig:compare_three_level}(g–h)] indicate that all methods yield qualitatively similar dynamics, but  EBD-LAUT and PBD-BB provide mutually similar effective parameters.

\noindent
\textbf{Summary.} This three-level analysis establishes a clear hierarchy of accuracy:
\begin{enumerate}
    \item \emph{LAUT and PBD-BB coincide exactly up to fourth order}, confirming the reliability of perturbative LAUT constructions.
    \item \emph{SWT deviates starting at fourth order}, a limitation of fixed-generator expansions.
    \item \emph{LAUT provides a consistent, symmetry-aware baseline} for effective models of tunable-coupler architectures.
\end{enumerate}

\section{Calculation Details on CR Coupling} 

\subsection{Derivation of Bare Parameters}\label{app:bare_parameters_CR}

We consider the full Hamiltonian of a coupled transmon system in the absence of the CR drive:
\begin{equation}
\begin{aligned}
    H_0 =& \sum_{i\in\{1,2\}} \Biggl[ \omega_i\, a_i^{\dagger}a_i 
    + \frac{\beta_i}{2}\, a_i^{\dagger}a_i \left( a_i^{\dagger}a_i - 1 \right) \Biggr]\\
    &+ J \left( a_1^{\dagger}+a_1 \right) \left( a_2^{\dagger}+a_2 \right),
\end{aligned}
\end{equation}
where \(\omega_i\) and \(\beta_i\) denote the bare frequencies and anharmonicities, respectively, and \(J\) is the direct coupling strength. 

From Ref.~\cite{wei2022hamiltonian}, the experimentally measured parameters (dressed by the coupling $J$) are:
\(
\tilde{\omega}_1/2\pi = 5.016\,\text{GHz}, \tilde{\omega}_2/2\pi = 4.960\,\text{GHz}, \tilde{\beta}_1/2\pi = -0.287\,\text{GHz}, \tilde{\beta}_2/2\pi = -0.283\,\text{GHz}.
\)

A standard perturbative expansion allows us to relate the bare parameters in \(H_0\) to the experimentally measured ones:
\begin{equation}
\begin{aligned}
\tilde{\omega}_1 =& \omega_1 + \frac{J^2}{\Delta_{12}} - \frac{2J^2}{\Sigma_{12}+\beta_1}, \\
\tilde{\omega}_2 =& \omega_2 - \frac{J^2}{\Delta_{12}} - \frac{2J^2}{\Sigma_{12}+\beta_2}, \\
\tilde{\beta}_1 =& \beta_1 + 2J^2 \left( \frac{1}{\Delta_{12}+\beta_1} - \frac{1}{\Delta_{12}} \right)
\\ &+ J^2 \left( \frac{4}{\Sigma_{12}+\beta_1} - \frac{3}{\Sigma_{12}+2\beta_1} \right), \\
\tilde{\beta}_2 =& \beta_2 + 2J^2 \left( \frac{1}{\Delta_{12}} - \frac{1}{\Delta_{12}-\beta_2} \right)
\\ &+ J^2 \left( \frac{4}{\Sigma_{12}+\beta_2} - \frac{3}{\Sigma_{12}+2\beta_2} \right),
\end{aligned}
\end{equation}
where \(\Delta_{12} = \omega_1 - \omega_2\) and \(\Sigma_{12} = \omega_1 + \omega_2\).

Solving these expressions yields the following estimates for the bare parameters:
\(
\omega_1/2\pi \approx 5.0149\,\text{GHz}, \omega_2/2\pi \approx 4.9611\,\text{GHz}, 
\beta_1/2\pi \approx -0.2847\,\text{GHz}, \beta_2/2\pi \approx -0.2853\,\text{GHz},
\)
corresponding to a detuning of \(\Delta_{12}/2\pi \approx 53\,\text{MHz}\).

\subsection{Derivation of CR Coupling Strength}
\label{app:CR_coupling_strength}

Due to parity constraints, third-order contributions to the CR interaction vanish. The leading corrections arise at fourth order, involving multiple virtual excitation pathways. For example, the state $\ket{00}$ in the first computational subspace $\mathcal{H}_1$ contributes five dominant routes, while $\ket{10}$ in the second subspace $\mathcal{H}_2$ contributes twelve. These processes are essential for accurately capturing the effective ZX interaction mediated by the interplay between the drive and inter-qubit coupling. Consequently, the effective strength $\tilde{\nu}_{ZX}$ is determined by the net difference between the aggregate fourth-order contributions from the twelve pathways in $\mathcal{H}_2$ and the five in $\mathcal{H}_1$.

\paragraph*{Effective coupling in $\mathcal{H}_1$: five fourth-order routes between $\ket{00}$ and $\ket{01}$.}
We enumerate the five contributing paths and their corresponding amplitudes:

\begin{enumerate}[(1)]
    \item Route: $\ket{00} \xleftrightarrow{\Omega} \ket{10} \xleftrightarrow{J} \ket{01} \xleftrightarrow{J} \ket{10} \xleftrightarrow{J} \ket{01}$  
    \begin{equation}
        \frac{\Omega J^3}{4\Delta_{12}} \left( \frac{1}{\Delta_{1\mathrm{d}}^2} + \frac{1}{\Delta_{12}^2} \right)
    \end{equation}
    
    \item Route: $\ket{00} \xleftrightarrow{\Omega} \ket{10} \xleftrightarrow{J} \ket{01} \xleftrightarrow{\Omega} \ket{11} \xleftrightarrow{\Omega} \ket{01}$  
    \begin{equation}
        \frac{\Omega^3 J}{16\Delta_{1\mathrm{d}}} \left( \frac{1}{\Delta_{1\mathrm{d}} \Sigma_{12}} + \frac{1}{\Delta_{12}^2} \right)
    \end{equation}

    \item Route: $\ket{00} \xleftrightarrow{\Omega} \ket{10} \xleftrightarrow{\Omega} \ket{20} \xleftrightarrow{J} \ket{11} \xleftrightarrow{\Omega} \ket{01}$  
    \begin{equation}
        -\frac{\Omega^3 J}{8\Delta_{1\mathrm{d}}} \left[ \frac{1}{\Sigma_{12}(2\Delta_{1\mathrm{d}}+\beta_1)} + \frac{1}{\Delta_{12}(\Delta_{1\mathrm{d}}+\Delta_{12}+\beta_1)} \right]
    \end{equation}

    \item Route: $\ket{00} \xleftrightarrow{\Omega} \ket{10} \xleftrightarrow{\Omega} \ket{20} \xleftrightarrow{\Omega} \ket{10} \xleftrightarrow{J} \ket{01}$  
    \begin{equation}
        -\frac{\Omega^3 J}{8} \left[ \frac{1}{\Delta_{1\mathrm{d}}^2 (2\Delta_{1\mathrm{d}} + \beta_1)} + \frac{1}{\Delta_{12}^2 (\Delta_{1\mathrm{d}} + \Delta_{12} + \beta_1)} \right]
    \end{equation}

    \item Route: $\ket{00} \xleftrightarrow{\Omega} \ket{10} \xleftrightarrow{\Omega} \ket{00} \xleftrightarrow{\Omega} \ket{10} \xleftrightarrow{J} \ket{01}$  
    \begin{equation}
        \frac{\Omega^3 J}{16\Delta_{1\mathrm{d}}} \left( \frac{1}{\Delta_{1\mathrm{d}}^2} + \frac{1}{\Delta_{12}^2} \right)
    \end{equation}
\end{enumerate}

\paragraph*{Effective coupling in $\mathcal{H}_2$: twelve fourth-order routes between $\ket{10}$ and $\ket{11}$.}
The following virtual processes contribute to the effective ZX interaction in the second excitation subspace:

\begin{enumerate}[(1)]
    \item $\ket{10} \xleftrightarrow{J} \ket{01} \xleftrightarrow{J} \ket{10} \xleftrightarrow{J} \ket{01} \xleftrightarrow{\Omega} \ket{11}$
    \begin{equation}
        -\frac{\Omega J^3}{4\Delta_{12}} \left( \frac{1}{\Delta_{1\mathrm{d}}^2} + \frac{1}{\Delta_{12}^2} \right)
    \end{equation}

    \item $\ket{10} \xleftrightarrow{J} \ket{01} \xleftrightarrow{J} \ket{10} \xleftrightarrow{\Omega} \ket{20} \xleftrightarrow{J} \ket{11}$
    \begin{equation}
        \frac{\Omega J^3}{2\Delta_{12}} \left( \frac{1}{\Delta_{1\mathrm{d}}(\Delta_{12}+\beta_1)} - \frac{1}{(\Delta_{1\mathrm{d}} + \beta_1)^2} \right)
    \end{equation}

    \item $\ket{10} \xleftrightarrow{J} \ket{01} \xleftrightarrow{\Omega} \ket{11} \xleftrightarrow{J} \ket{20} \xleftrightarrow{J} \ket{11}$
    \begin{equation}
        \frac{\Omega J^3}{2(\Delta_{12}+\beta_1)} \left[ \frac{1}{\Delta_{1\mathrm{d}}^2} - \frac{1}{\Delta_{12}(\Delta_{1\mathrm{d}}+\beta_1)} \right]
    \end{equation}

    \item $\ket{10} \xleftrightarrow{J} \ket{01} \xleftrightarrow{\Omega} \ket{11} \xleftrightarrow{J} \ket{02} \xleftrightarrow{J} \ket{11}$
    \begin{equation}
        \frac{\Omega J^3}{2(\Delta_{12}-\beta_2)} \left[ \frac{1}{\Delta_{12}(\Delta_{2\mathrm{d}}+\beta_2-\Delta_{12})} - \frac{1}{\Delta_{1\mathrm{d}}^2} \right]
    \end{equation}

    \item $\ket{10} \xleftrightarrow{\Omega} \ket{00} \xleftrightarrow{\Omega} \ket{10} \xleftrightarrow{J} \ket{01} \xleftrightarrow{\Omega} \ket{11}$
    \begin{equation}
        -\frac{\Omega^3 J}{16\Delta_{1\mathrm{d}}} \left( \frac{1}{\Delta_{12}^2} + \frac{1}{\Delta_{1\mathrm{d}} \Sigma_{12}} \right)
    \end{equation}

    \item $\ket{10} \xleftrightarrow{\Omega} \ket{00} \xleftrightarrow{\Omega} \ket{10} \xleftrightarrow{\Omega} \ket{20} \xleftrightarrow{J} \ket{11}$
    \begin{equation}
        \frac{\Omega^3 J}{8\Delta_{1\mathrm{d}}} \left[ \frac{1}{\Sigma_{12}(\Delta_{12}+\beta_1)} - \frac{1}{(\Delta_{1\mathrm{d}}+\beta_1)^2} \right]
    \end{equation}

    \item $\ket{10} \xleftrightarrow{\Omega} \ket{20} \xleftrightarrow{J} \ket{11} \xleftrightarrow{J} \ket{20} \xleftrightarrow{J} \ket{11}$
    \begin{equation}
        \frac{\Omega J^3}{\Delta_{12}+\beta_1} \left[ \frac{1}{(\Delta_{1\mathrm{d}}+\beta_1)^2} + \frac{1}{(\Delta_{12}+\beta_1)^2} \right]
    \end{equation}

    \item $\ket{10} \xleftrightarrow{\Omega} \ket{20} \xleftrightarrow{J} \ket{11} \xleftrightarrow{J} \ket{02} \xleftrightarrow{J} \ket{11}$
    \begin{equation}
        -\frac{\Omega J^3}{\Delta_{12} - \beta_2} \left[ \frac{1}{(\Delta_{1\mathrm{d}}+\beta_1)(\Delta_{2\mathrm{d}}+\beta_2-\Delta_{12})} + \frac{1}{(\Delta_{12}+\beta_1)^2} \right]
    \end{equation}

    \item $\ket{10} \xleftrightarrow{\Omega} \ket{20} \xleftrightarrow{\Omega} \ket{30} \xleftrightarrow{J} \ket{21} \xleftrightarrow{\Omega} \ket{11}$
    \begin{equation}
    \begin{aligned}
        -\frac{3\Omega^3 J}{8(\Delta_{1\mathrm{d}} + \beta_1)} &\left[ \frac{1}{(2\Delta_{1\mathrm{d}} + 3\beta_1)(\Sigma_{12} + \beta_1)} \right.\\
        &\left.+ \frac{1}{(\Delta_{12} + \beta_1)(2\Delta_{1\mathrm{d}} + 3\beta_1 - \Delta_{2\mathrm{d}})} \right]
    \end{aligned}
    \end{equation}

    \item $\ket{10} \xleftrightarrow{\Omega} \ket{20} \xleftrightarrow{\Omega} \ket{30} \xleftrightarrow{\Omega} \ket{20} \xleftrightarrow{J} \ket{11}$
    \begin{equation}
    \begin{aligned}
        -\frac{3\Omega^3 J}{8} &\left[ \frac{1}{(\Delta_{1\mathrm{d}} + \beta_1)^2 (2\Delta_{1\mathrm{d}} + 3\beta_1)} \right.\\
        &\left.+ \frac{1}{(\Delta_{12} + \beta_1)^2 (2\Delta_{1\mathrm{d}} + 3\beta_1 - \Delta_{2\mathrm{d}})} \right]
    \end{aligned}
    \end{equation}

    \item $\ket{10} \xleftrightarrow{\Omega} \ket{20} \xleftrightarrow{\Omega} \ket{10} \xleftrightarrow{J} \ket{01} \xleftrightarrow{\Omega} \ket{11}$
    \begin{equation}
        \frac{3\Omega^3 J}{8(\Delta_{1\mathrm{d}} + \beta_1)} \left[ \frac{1}{\Delta_{12}^2} - \frac{1}{\Delta_{1\mathrm{d}}(\Delta_{12} + \beta_1)} \right]
    \end{equation}

    \item $\ket{10} \xleftrightarrow{\Omega} \ket{20} \xleftrightarrow{\Omega} \ket{10} \xleftrightarrow{\Omega} \ket{20} \xleftrightarrow{J} \ket{11}$
    \begin{equation}
        \frac{3\Omega^3 J}{4(\Delta_{1\mathrm{d}} + \beta_1)} \left[ \frac{1}{(\Delta_{1\mathrm{d}} + \beta_1)^2} + \frac{1}{(\Delta_{12} + \beta_1)^2} \right]
    \end{equation}
\end{enumerate}

\section{Model Setup for Mediated Three-Body Interaction}
\label{app:device_details}

We investigate a minimal superconducting architecture designed to engineer effective three-body interactions within an extended Bose--Hubbard framework. The system comprises three capacitively shunted flux qubits (CSFQs, refer to Appendices.~\ref{app:single CSFQ},~\ref{app:CSFQ Hamiltonian}), labeled \(S_1\), \(S_2\), and \(S_3\), arranged in a linear chain and coupled via two intermediate transmon couplers, \(C_1\) and \(C_2\).

This subsection introduces the full microscopic Hamiltonian, defines the relevant low-energy subspace structure, and derives the effective Hamiltonian obtained via block diagonalization. Particular emphasis is placed on the emergence of a conditional three-body interaction of the form \(XZX + YZY\) when restricting the logical subspace to the three-qubit manifold. We analyze the physical origin, symmetry structure, and tunability of this interaction in detail.

The full system Hamiltonian is given by
\begin{equation}
    \label{eq:three_qudit_system_full}
    \begin{aligned}
        H_{\mathrm{full}} &= \sum_{i\in\{1,2,3,c_1,c_2\}} \left( \omega_i\, a^{\dagger}_i a_i + \frac{\beta_i}{2}\, a^{\dagger}_i a^{\dagger}_i a_i a_i \right) \\
        &- g_{12}\,\left(a_1 - a_1^{\dagger}\right)\left(a_2 - a_2^{\dagger}\right) - g_{23}\,\left(a_2 - a_2^{\dagger}\right)\left(a_3 - a_3^{\dagger}\right) \\
        &- \sum_{j\in\{1,2\}} g_j\,\left(a_j - a_j^{\dagger}\right)\left(a_{c1} - a_{c1}^{\dagger}\right)\\ 
        &- \sum_{j\in\{2,3\}} g_j\,\left(a_j - a_j^{\dagger}\right)\left(a_{c2} - a_{c2}^{\dagger}\right),
    \end{aligned}
\end{equation}
where \(a_i^\dagger\) creates an excitation in mode \(i\), and each oscillator is characterized by a Kerr-type nonlinearity \(\beta_i\). The Hamiltonian includes linear couplings for both direct CSFQ--CSFQ and CSFQ–coupler interactions.

In the dispersive limit where the coupler modes are far detuned, they can be adiabatically eliminated, yielding an effective three-site extended Bose--Hubbard model (EBHM) of the form:
\begin{equation}
    \begin{aligned}
        H &= H_{\mathrm{site}} + H_{\mathrm{coupling}}, \\
        H_{\mathrm{site}} &= \sum_{i=1}^{3} \left( \mu_i\, a_i^{\dagger} a_i + \frac{U_i}{2}\, a_i^{\dagger} a_i^{\dagger} a_i a_i \right), \\
        H_{\mathrm{coupling}} &= J_0 \sum_{\langle i,j \rangle} \left(a_i^\dagger a_j + \mathrm{H.c.}\right) + V \sum_{\langle i,j \rangle} n_i n_j,
    \end{aligned}
\end{equation}
where \(J_0\) represents the effective hopping amplitude and \(V\) denotes the strength of the nearest-neighbor density--density interaction. A detailed derivation mapping the microscopic circuit parameters to these EBHM couplings is provided in Appendix~\ref{app:mapping_to_EBHM}.

For the remainder of this discussion, we employ the simplified EBHM form (Eq.~\eqref{eq:EBHM_3_body_full}) to elucidate the mechanism driving the effective three-body interactions. We truncate the local Hilbert space to \(\ket{S_1 S_2 S_3}\) with \(S_i \in \{0,1,2\}\), resulting in a \(3^3 = 27\)-dimensional manifold. To isolate the qubit-relevant low-energy physics, we define the computational subspace \(\mathcal{H}_{\mathrm{C}}\) as the span of states with at most single occupancy per site:
\begin{equation}
    \mathcal{H}_{\mathrm{C}} = \mathrm{span} \left\{ \ket{ijk} \,\middle|\, i,j,k \in \{0,1\} \right\},
\end{equation}
which spans an 8-dimensional basis naturally mapping to a logical three-qubit system.

We target an effective interaction of the form
\begin{equation}
    \left(a_1^\dagger a_3 + a_1 a_3^\dagger\right) n_2,
\end{equation}
which describes a conditional hopping process between the outer sites \(S_1\) and \(S_3\), gated by the occupation of the central site \(S_2\). This interaction arises from virtual transitions through the doubly excited state of \(S_2\) in a specifically detuned EBHM configuration. We assume mirror symmetry between \(S_1\) and \(S_3\), with \(S_2\) detuned to function as a nonlinear mediator.

Applying a block-diagonalizing transformation (BDT) to decouple the computational subspace \(\mathcal{H}_{\mathrm{C}}\) from the leakage subspace \(\mathcal{H}_{\mathrm{NC}}\), we obtain the effective Hamiltonian projected onto \(\mathcal{H}_{\mathrm{C}}\):
\begin{equation}
    \label{eq: EH_3_body}
    \begin{aligned}
        H_{\mathrm{eff}}^{\mathrm{3S}} = & \sum_{i=1}^{3} \tilde{\omega}_i\, a_i^{\dagger} a_i 
        + J \sum_{\langle i,j \rangle} \left( a_i^{\dagger} a_j + \mathrm{H.c.} \right) \\
        & + K \left( a_1^{\dagger} a_3 + a_3^{\dagger} a_1 \right) n_2,
    \end{aligned}
\end{equation}
where \(\tilde{\omega}_i\) are the renormalized on-site energies, \(J\) is the nearest-neighbor hopping strength, and \(K\) quantifies the mediated three-body interaction.

Further projecting this model onto the logical qubit subspace via the mappings \(a_i^\dagger \to \sigma_i^+\) and \(n_i \to (1 - Z_i)/2\) yields the effective spin Hamiltonian:
\begin{equation}
    \begin{aligned}
        H_{\mathrm{eff}}^{\mathrm{3Q}} = & \sum_{i=1}^{3} \frac{-\tilde{\omega}_i}{2} Z_i 
        + \frac{J}{2} \sum_{\langle i,j \rangle} (X_i X_j + Y_i Y_j) \\
        & - \frac{K}{4} \left( X_1 Z_2 X_3 + Y_1 Z_2 Y_3 \right).
    \end{aligned}
\end{equation}
This Hamiltonian describes an \(XY\) spin chain augmented by a conditional three-body \(XZX + YZY\) interaction coupling the outer qubits, dependent on the state of the central spin.

To quantitatively characterize this interaction, we define the normalized operator
\begin{equation}
    \hat{K} = -\frac{1}{4} \left( X_1 Z_2 X_3 + Y_1 Z_2 Y_3 \right),
\end{equation}
which satisfies \(\Tr(\hat{K}^\dagger \hat{K}) = 1\) within \(\mathcal{H}_{\mathrm{C}}\). The effective interaction strength \(\kappa\) is extracted as
\begin{equation}
    \label{eq:kappa_definition}
    \kappa = \Tr\left( \hat{K}^\dagger \Heff \right),
\end{equation}
providing a basis-independent measure of the three-body term.

Crucially, \(\kappa\) is distinct from the raw matrix element \(K = \bra{110} \Heff \ket{011}\). Instead, \(\kappa\) isolates the genuine three-body contribution by subtracting the background two-body hopping:
\begin{equation}
    \kappa = \bra{110} \Heff \ket{011} - \bra{100} \Heff \ket{001}.
\end{equation}
This definition filters out direct exchange processes, serving as a robust diagnostic for identifying and quantifying conditional hopping effects in multi-qubit systems.

\section{Construction of Improved Capacitively Shunted Flux Qubit}
\label{app:single CSFQ}

Superconducting qudits, as multi-level quantum systems, offer a rich platform for quantum simulation and computation. In this section, we introduce an improved  CSFQ design that provides enhanced tunability of both its transition frequency and anharmonicity. By replacing the conventional small junction with a symmetric SQUID and exploiting dual magnetic flux control, the proposed CSFQ enables precise adjustment of its energy spectrum while maintaining robust coherence. This section details the circuit architecture, quantization, and control mechanisms of the CSFQ.

\begin{figure}[tb]
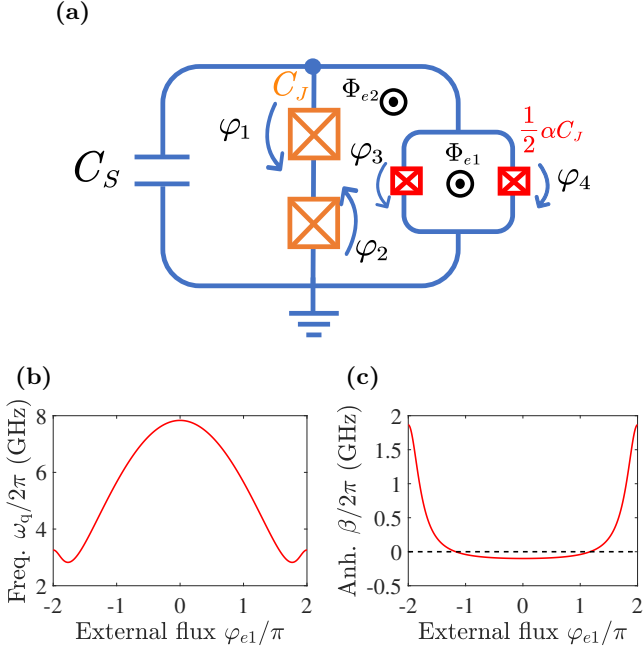

	\centering
    \subfigure{
        \begin{minipage}[b]{1\columnwidth}
            \centering
            \begin{overpic}[scale=0.6]{Figures/Simulator/CSFQ.pdf}
                \put(0,65){\textbf{(a)}}
            \end{overpic}
        \end{minipage}
        \label{fig:CSFQ_circuit}
    }
    \hspace{5pt}
    \subfigure{
        \begin{minipage}[b]{0.45\columnwidth}
            \centering
            \begin{overpic}[scale=0.3]{Figures/Simulator/CSFQ_omega.eps}
                \put(3,85){\textbf{(b)}}
            \end{overpic}
        \end{minipage}
        \label{fig:omega_q}
    }
    \hspace{2pt}
    \subfigure{
        \begin{minipage}[b]{0.45\columnwidth}
            \centering
            \begin{overpic}[scale=0.3]{Figures/Simulator/CSFQ_beta.eps}
                \put(3,85){\textbf{(c)}}
            \end{overpic}
        \end{minipage}
        \label{fig:anharmonicity}
    }
\caption{
The improved CSFQ circuit. (a) The circuit diagram shows two identical in-series junctions (orange) in the middle branch, shunted by a capacitor \(C_S\) and a superconducting quantum interference device (SQUID). The SQUID contains two identical junctions (red) with capacitance \(\frac{1}{2}\alpha C_J\), reflecting the junction asymmetry \(E_{J2}/E_{J1} = \frac{\alpha}{2}\). Two external magnetic fluxes, \(\varphi_{\rm e1}\) and \(\varphi_{\rm e2}\), are applied to enable simultaneous tuning of the qubit's frequency and anharmonicity. (b) The dependence of the qubit frequency \(\omega_{\rm q}\) on \(\varphi_{\rm e1}\). (c) The corresponding tuning of the anharmonicity \(\beta\). Circuit parameters are \(C_J = 20\,\text{fF}\), \(C_S = 100\,\text{fF}\), \(E_{J} = 50\,\text{GHz}\), \(E_J/E_C \approx 308\), and \(\alpha = 0.47\). The condition \(\varphi_{e2} + \varphi_{e1}/2 = 0\) is applied. Notice that \(\beta\) can be tuned to both positive and negative values.
}
\label{fig:CSFQ}
\end{figure}

Superconducting qubits inherently possess multiple energy levels and are classified according to the sign of their anharmonicity. In qubits with negative anharmonicity, the third level lies relatively close to the first excited state (as in typical transmons), while positive anharmonicity yields a larger energy separation. A conventional single Josephson junction (JJ) transmon employs a large shunt capacitance to suppress charge noise. In contrast, a CSFQ utilizes a small principal JJ shunted by two larger series JJs and an additional capacitance~\cite{xu2021zz, ku2020suppression}, resulting in extended coherence and enhanced anharmonicity~\cite{krantz2019quantum}. Our improved CSFQ design replaces the small principal JJ with a symmetric SQUID, significantly enhancing tunability.

At the classical level, a single-JJ transmon is described by the magnetic flux \(\Phi\) and the charge \(Q = 2en\), which form a pair of canonically conjugate variables. With the definition \(\Phi_0 = h/(2e)\), one introduces the reduced flux \(\phi = \frac{2\pi\Phi}{\Phi_0}\) and reduced charge \(n = Q/(2e)\). The potential energy is then expressed as \(-E_J \cos(\phi)\), where \(E_J = I_c\Phi_0/(2\pi)\) and \(I_c\) is the critical current. By expanding the cosine potential up to fourth order, the transmon Hamiltonian can be approximated as that of a Duffing oscillator:
\[
H = \omega \hat{a}^\dagger \hat{a} + \frac{\beta}{2} \hat{a}^\dagger \hat{a}^\dagger \hat{a} \hat{a},
\]
with \(\beta\) representing the anharmonicity, and where the canonical operators are defined by \(\hat{\phi} = \phi_{\text{zpf}} (\hat{a}+\hat{a}^\dagger)\) and \(\hat{n} = i\, n_{\text{zpf}} (\hat{a}-\hat{a}^\dagger)\) satisfying \(\phi_{\text{zpf}}\, n_{\text{zpf}} = 1/2\).

In our improved CSFQ (Fig.~\ref{fig:CSFQ_circuit}), two identical in-series JJs are integrated into a closed loop with a SQUID. The SQUID, comprising two symmetric JJs, has an effective capacitance of \(\alpha C_J\) (with \(\alpha < 1\)). Two external magnetic fluxes, \(\Phi_{e1}\) and \(\Phi_{e2}\), control the system; the potential energy is optimized at the “sweet spot” defined by \( \varphi_{e2} + \varphi_{e1}/2 = k\pi \) (with \(k\) an integer and \(\varphi_{ei} = 2\pi\Phi_{ei}/\Phi_0\)). The potential energy is given by
\begin{align}
    \mathcal{U} =& -E_{J} \cos\varphi_{1} - E_{J} \cos\varphi_{2} \nonumber \\
    & - \frac{1}{2} \alpha E_{J} \cos\Bigl( \varphi_{1} - \varphi_{2} + \varphi_{e2} \Bigr) \nonumber \\
    & - \frac{1}{2} \alpha E_{J} \cos\Bigl( \varphi_{1} - \varphi_{2} + \varphi_{e2} + \varphi_{e1} \Bigr),
\end{align}
where \(\varphi_{1,2}\) are the reduced fluxes. By introducing the normal modes \(\phi_+ = \varphi_1+\varphi_2\) and \(\phi_- = \varphi_1-\varphi_2\), and noting that \(\phi_+\) has a negligible impact on the potential, we focus on \(\phi_-\). Canonical quantization, with \([\hat{\phi}_-, \hat{N}_-] = i\) (setting \(\hbar = 1\)), leads to the effective Hamiltonian
\begin{equation}
\begin{split}
    \hat{\mathcal{H}} &= 4E_C\, \hat{N}_-^2 - E_J \Biggl[2 \cos\left(\frac{\phi_+}{2}\right) \cos\left(\frac{\hat{\phi}_-}{2}\right) \\
    &\quad + \alpha \cos\left(\frac{\varphi_{e1}}{2}\right) \cos\left(\hat{\phi}_- + \varphi_{e2} + \frac{\varphi_{e1}}{2}\right)\Biggr],
\end{split}
\end{equation}
where \(\hat{N}_- = \hat{P}_-/(2e)\) and \(E_C = e^2/(2C_-)\) with \(C_- = \alpha C_J + C_S + C_J/2\).

Expressed in the second quantization formalism, the CSFQ Hamiltonian is written as
\begin{equation}\label{eq:qudit}
    \hat{\mathcal{H}} = \omega_{\rm q}\, a^{\dagger} a + \frac{\beta}{2}\, a^{\dagger} a^{\dagger} a a,
\end{equation}
with the qudit frequency
\begin{equation}
    \omega_{\rm q} = \sqrt{8E_J^{\prime} E_C} - E_C + \frac{3E_J E_C}{8E_J^{\prime}},
\end{equation}
and the anharmonicity
\begin{equation}
    \beta = -E_C + \frac{3E_J E_C}{8E_J^{\prime}}.
\end{equation}
Here, \(E_J^{\prime} = E_J \left[\frac{1}{2} + (-1)^k \alpha \cos\left(\frac{\varphi_{e1}}{2}\right)\right]\) is the effective Josephson energy. Figures~\ref{fig:omega_q} and \ref{fig:anharmonicity} demonstrate that by varying \(\varphi_{e1}\) (with \(k=0\) and under the condition \(\varphi_{e2} + \varphi_{e1}/2 = 0\)), the frequency can be tuned continuously from \(3\) to \(8\) GHz, while the anharmonicity varies from \(-0.1\) to \(1.8\) GHz. For additional details on the quantization procedure, see Appendix~\ref{app:CSFQ Hamiltonian}.

In summary, our improved CSFQ design leverages a symmetric SQUID and dual-flux control to provide unprecedented tunability in superconducting qudits. This architecture allows for adjustment of both the qudit frequency and anharmonicity over a broad range, providing a powerful platform for engineering versatile quantum simulators. The ability to precisely control these parameters not only enhances device performance but also facilitates the exploration of diverse quantum phenomena in engineered multi-level systems.

\section{Derivation of Improved Capacitively Shunted Flux Qubit Hamiltonian}
\label{app:CSFQ Hamiltonian}

This section details the derivation of the effective Hamiltonian for our improved CSFQ, which serves as a highly tunable superconducting qudit. By integrating a symmetric SQUID and dual external magnetic fluxes, the design allows independent adjustment of the qudit's transition frequency and anharmonicity. We begin with the circuit Lagrangian, employ fluxoid quantization to reduce the degrees of freedom, and then proceed to canonical quantization. The resulting Hamiltonian, expressed in the form of a Duffing oscillator, forms the basis for precise qudit control in quantum simulation and computing applications.

The circuit (see Fig.~\ref{fig:CSFQ_circuit}) comprises four Josephson junctions: two identical junctions (with Josephson energy \(E_J\)) in series on the left, and two identical junctions in parallel forming a SQUID on the right, each with Josephson energy \(\frac{1}{2}\alpha E_J\). The Lagrangian is defined as
\begin{equation}
	\mathcal{L} = \mathcal{T} - \mathcal{U},
\end{equation}
with the kinetic and potential energies given by
\begin{equation}\label{eq:appendix_kinetic}
\begin{aligned}
    \mathcal{T} &= \frac{1}{2} C_J \dot{\varphi}_1^2 + \frac{1}{2} C_J \dot{\varphi}_2^2 
    + \frac{1}{2} (\alpha C_J + C_S)(\dot{\varphi}_1 - \dot{\varphi}_2)^2, \\
    \mathcal{U} &= -E_J \cos\varphi_1 - E_J \cos\varphi_2 
    - \frac{1}{2}\alpha E_J \cos\left(\varphi_1 - \varphi_2 + \varphi_{e2}\right) \\
    &\quad - \frac{1}{2}\alpha E_J \cos\left(\varphi_1 - \varphi_2 + \varphi_{e2} + \varphi_{e1}\right),
\end{aligned}
\end{equation}
where \(\varphi_i\) denote the reduced fluxes across the junctions, and \(\varphi_{e1,e2}\) are the reduced external fluxes (\(\Phi_{e1,e2}/\Phi_0\)).

Fluxoid quantization imposes the constraints
\[
\varphi_1 - \varphi_2 - \varphi_3 + \varphi_{e2} = 2\pi k_1,\quad \varphi_3 - \varphi_4 + \varphi_{e1} = 2\pi k_2,
\]
with \(k_{1,2}\) integers, which allows us to eliminate the degrees of freedom \(\varphi_3\) and \(\varphi_4\). Defining the new variables 
\[
\phi_+ = \varphi_1 + \varphi_2,\quad \phi_- = \varphi_1 - \varphi_2,
\]
the Lagrangian becomes
\begin{equation}
\begin{aligned}
    \mathcal{T} &= \frac{1}{4} C_J\, \dot{\phi}_+^2 + \frac{1}{2} \left(\alpha C_J + C_S + \frac{C_J}{2}\right) \dot{\phi}_-^2, \\
    \mathcal{U} &= -2E_J \cos\left(\frac{\phi_+}{2}\right) \cos\left(\frac{\phi_-}{2}\right) \\
    &\quad - \alpha E_J \cos\left(\frac{\varphi_{e1}}{2}\right) \cos\left(\phi_- + \varphi_{e2} + \frac{\varphi_{e1}}{2}\right).
\end{aligned}
\end{equation}

The generalized momenta are obtained as
\[
P_{+} = \frac{\partial \mathcal{L}}{\partial \dot{\phi}_+} = \frac{1}{2} C_J\, \dot{\phi}_+,\quad 
P_{-} = \frac{\partial \mathcal{L}}{\partial \dot{\phi}_-} = \left(\alpha C_J + C_S + \frac{C_J}{2}\right) \dot{\phi}_-.
\]
Thus, the classical Hamiltonian is
\begin{equation}
	\mathcal{H} = \frac{P_+^2}{2C_+} + \frac{P_-^2}{2C_-} + \mathcal{U},
\end{equation}
with the effective capacitances \(C_+ = C_J/2\) and \(C_- = \alpha C_J + C_S + C_J/2\).

Because the shunting capacitance \(C_S\) is much larger than \(C_J\) (\(C_S \gg C_J\)), the \(\phi_+\) mode oscillates at a much higher frequency and can be adiabatically eliminated via the Born-Oppenheimer approximation, allowing us to treat \(\phi_+\) as a fixed parameter. This reduction simplifies the potential to a one-dimensional function of \(\phi_-\).

Using canonical quantization with
\[
[\hat{\phi}_-, \hat{N}_-] = i,
\]
where \(\hat{N}_- = \hat{P}_-/(2e)\) is the Cooper-pair number operator and \(E_C = e^2/(2C_-)\) is the charging energy, we obtain the quantum Hamiltonian
\begin{equation}
\begin{split}
    \hat{\mathcal{H}} &= 4E_C\, \hat{N}_-^2 - E_J \Biggl[2 \cos\left(\frac{\phi_+}{2}\right) \cos\left(\frac{\hat{\phi}_-}{2}\right) \\
    &\quad + \alpha \cos\left(\frac{\varphi_{e1}}{2}\right) \cos\left(\hat{\phi}_- + \varphi_{e2} + \frac{\varphi_{e1}}{2}\right)\Biggr].
\end{split}
\end{equation}

At the qubit’s sweet spot—where the potential is symmetric and flux noise is minimized—the condition \(\phi_- = 0\) and \(\varphi_{e2} + \frac{\varphi_{e1}}{2} = k\pi\) (with \(k\) an integer) holds. Expanding the cosine terms around \(\phi_- = 0\) up to fourth order, the Hamiltonian separates into quadratic and quartic components:
\begin{equation}\label{eq:Hfourthorder}
\begin{aligned}
    \hat{\mathcal{H}}^{(2)} &= 4E_C\, \hat{N}_-^2 + \frac{E_J'}{2}\, \hat{\phi}_-^2, \\
    \hat{\mathcal{H}}^{(4)} &= -\frac{1}{24}\left(E_J' - \frac{3}{8}E_J\right) \hat{\phi}_-^4,
\end{aligned}
\end{equation}
where the effective Josephson energy is defined as
\[
E_J' = E_J\left[\frac{1}{2} + (-1)^k \alpha \cos\left(\frac{\varphi_{e1}}{2}\right)\right].
\]
We require a sufficiently large energy ratio \(E_J'/E_C \geq 50\) to suppress charge sensitivity.

Expressing the operators in terms of the creation and annihilation operators,
\[
\hat{\phi}_- = \sqrt{\frac{Z_{\text{imp}}}{2}}(a + a^\dagger),\quad \hat{N}_- = \frac{i}{\sqrt{2Z_{\text{imp}}}}(a - a^\dagger),
\]
with the impedance \(Z_{\text{imp}} = \sqrt{8E_C/E_J'}\), the Hamiltonian can be recast in the form of a Duffing oscillator:
\begin{equation}
	{{H}} = \omega_{\rm q}\, a^{\dagger}a + \frac{\beta}{2}\, a^{\dagger}a^{\dagger}aa,
\end{equation}
where, in the rotating frame (neglecting fast oscillating terms),
\begin{equation}
\begin{split}
	\omega_{\rm q} &= \sqrt{8E_J' E_C} - E_C + \frac{3E_J E_C}{8E_J'}, \\
	\beta &= -E_C + \frac{3E_J E_C}{8E_J'}.
\end{split}
\end{equation}
Notably, both the qudit frequency and anharmonicity are highly tunable via the external flux \(\varphi_{e1}\), typically ranging from \(\omega_{\rm q}/2\pi \approx 3\) to \(8\) GHz and \(\beta/2\pi \approx -0.1\) to \(2\) GHz, respectively.

\section{Derivation of the Extended Bose-Hubbard Model in a Rotating Frame}
\label{app:mapping_to_EBHM}

This subsection outlines how the effective Hamiltonian for a two-mode system can be transformed into the canonical form of the EBHM by applying a rotating frame transformation. The goal is to eliminate the dominant mode frequencies (on the order of GHz), thereby isolating the low-energy interactions (MHz scale) relevant for quantum simulation.

We start from the effective two-body Hamiltonian:
\begin{equation}\label{eq:effective_Hamiltonian_appendix}
\begin{aligned}
    {H}^{\rm eff} = & \sum_{i=1,2} \left( \tilde{\omega}_i\, {n}_i + \frac{\tilde{\beta}_i}{2}\, {b}_i^\dagger {b}_i^\dagger {b}_i {b}_i \right)
    + \tilde{g} \left( {b}_1^\dagger {b}_2 + {b}_1 {b}_2^\dagger \right)\\
    &+ V\, {n}_1 {n}_2,
\end{aligned}
\end{equation}
where \(\tilde{\omega}_i\) denote the renormalized mode frequencies, \(\tilde{\beta}_i\) are the anharmonicities, \(\tilde{g}\) is the effective hopping strength, and \(V\) represents the ZZ interaction.

To remove the large energy offsets \(\tilde{\omega}_i\), we perform a rotating frame transformation defined by
\begin{equation}
    {U}_R(t) = e^{-i\tilde{\omega}_1 ({n}_1+{n}_2)t}.
\end{equation}
The rotating-frame Hamiltonian is given by
\begin{equation}
    {H}_R = {U}_R^\dagger {H}^{\rm eff} {U}_R + i \, \frac{d{U}_R^\dagger}{dt} {U}_R.
\end{equation}
Evaluating this expression yields
\begin{equation}
\begin{aligned}
    {H}_R = & \Delta_{21}\, {n}_2 + \sum_{i=1,2} \frac{\tilde{\beta}_i}{2}\, {b}_i^\dagger {b}_i^\dagger {b}_i {b}_i 
    + \tilde{g} \left( {b}_1^\dagger {b}_2 + {b}_1 {b}_2^\dagger \right)
    + V\, {n}_1 {n}_2,
\end{aligned}
\end{equation}
where \(\Delta_{21} = \tilde{\omega}_2 - \tilde{\omega}_1\) is typically in the MHz range. All remaining terms now reside within the relevant low-energy scale.

To express this in the standard EBHM form,
\begin{equation}
    {H} = -J \sum_{\langle i,j \rangle} {c}_i^\dagger {c}_j + \frac{U}{2} \sum_i {n}_i ({n}_i - 1) - \sum_i \mu_i {n}_i + V \sum_i {n}_i {n}_{i+1},
\end{equation}
where $c_j$ denotes the annihilation operator in EBHM,
we identify the parameters as
\begin{equation}
    J = -\tilde{g},\quad U = \tilde{\beta}_1 = \tilde{\beta}_2,\quad \mu_1 = 0,\quad \mu_2 = -\Delta_{21}.
\end{equation}
This mapping directly connects the effective Hamiltonian in the rotating frame to the EBHM, with all energy scales appropriately reduced to the MHz domain.

This derivation demonstrates how a rotating frame transformation serves as a powerful tool to isolate interaction terms from large mode energies, providing a clean and tractable effective model. The resulting EBHM establishes a direct correspondence between circuit-level parameters and bosonic lattice models, enabling programmable many-body simulations in superconducting architectures.

\section{Derivation on Three-Body Interaction Strength}\label{app:derivation of three-body interaction}
To concretely establish the effective coupling result, we start from the full five-mode Hamiltonian described in Eq.~\eqref{eq:EBHM_3_body_full}, with mode ordering $\ket{Q_1,C_1,Q_2,C_2,Q_3}$. For simplicity, we consider a symmetric configuration with $Q_1$ and $Q_3$ identical, while $Q_2$ is slightly detuned. The effective three-body interaction strength is defined as
\begin{equation}
    \kappa = \bra{110} \Heff \ket{011} - \bra{100} \Heff \ket{001} \equiv \kappa_{110} - \kappa_{100}.
\end{equation}
We focus here on the first term, \(\kappa_{110}\), which can be computed within the two-excitation manifold using perturbation theory up to fourth order.

\subsection{Two-Excitation Manifold}

\paragraph*{Second-Order Contribution.}
There is a single second-order process connecting the states \(\ket{10100} \to \ket{00101}\), mediated by the intermediate leakage state \(\ket{00200}\):
\begin{equation*}
\ket{10100} \xleftrightarrow[]{g_{12}} \ket{00200} \xleftrightarrow[]{g_{12}} \ket{00101}, \quad 
\text{amplitude: } \frac{2g_{12}^2}{\Delta_{12} - \beta_2}.
\end{equation*}

\paragraph*{Third-Order Contributions.}
Third-order processes are categorized according to the position of the \(g_{12}\) coupling along the sequence.

\begin{enumerate}[(1)]

\item \textbf{Route with \(g_{12}\) at the first step:}
\begin{align*}
\ket{10100} &\xleftrightarrow[]{g_{12}} \ket{00200} \xleftrightarrow[]{g_1} \ket{00110} \xleftrightarrow[]{g_2} \ket{00101}, \\
&\text{amplitude: } \frac{2g_{12} g_1 g_2}{(\Delta_{12} - \beta_2)\Delta_{1c}}.
\end{align*}

\begin{align*}
\ket{10100} &\xleftrightarrow[]{g_{12}} \ket{10001} \xleftrightarrow[]{g_1} \ket{01001} \xleftrightarrow[]{g_2} \ket{00101}, \\
&\text{amplitude: } -\frac{g_{12} g_1 g_2}{2\Delta_{1c} \Delta_{2c}}.
\end{align*}

\item \textbf{Route with \(g_{12}\) at the second step:}
\begin{align*}
\ket{10100} &\xleftrightarrow[]{g_1} \ket{01100} \xleftrightarrow[]{g_{12}} \ket{01001} \xleftrightarrow[]{g_2} \ket{00101}, \\
&\text{amplitude: } \frac{g_{12} g_1 g_2}{\Delta_{1c} \Delta_{2c}}.
\end{align*}

\begin{align*}
\ket{10100} &\xleftrightarrow[]{g_1} \ket{10010} \xleftrightarrow[]{g_{12}} \ket{00110} \xleftrightarrow[]{g_2} \ket{00101}, \\
&\text{amplitude: } \frac{g_{12} g_1 g_2}{\Delta_{1c} \Delta_{2c}}.
\end{align*}

\item \textbf{Route with \(g_{12}\) at the third step:}
\begin{align*}
\ket{10100} &\xleftrightarrow[]{g_1} \ket{01100} \xleftrightarrow[]{g_1} \ket{00200} \xleftrightarrow[]{g_{12}} \ket{00101}, \\
&\text{amplitude: } \frac{2g_{12} g_1 g_2}{(\Delta_{12} - \beta_2)\Delta_{1c}}.
\end{align*}

\end{enumerate}

\paragraph*{Fourth-Order Contributions.}
We focus on fourth-order processes involving only \(g_1\) and \(g_2\), and categorize them as PQQQP and PQPQP types.

\begin{enumerate}[(1)]

\item \textbf{PQQQP processes:}

\begin{align*}
\ket{10100} &\xleftrightarrow[]{g_1} \ket{10010} \xleftrightarrow[]{g_2} \ket{01010} 
\xleftrightarrow[]{g_1} \ket{00110} \xleftrightarrow[]{g_2} \ket{00101}, \\
&\text{amplitude: } \frac{g_1^2 g_2^2}{\Delta_{2c} (\Delta_{1c} + \Delta_{2c})} \left( \frac{1}{\Delta_{1c}} + \frac{1}{\Delta_{2c}} \right).
\end{align*}

\begin{align*}
\ket{10100} &\xleftrightarrow[]{g_1} \ket{01100} \xleftrightarrow[]{g_1} \ket{00200} 
\xleftrightarrow[]{g_2} \ket{00110} \xleftrightarrow[]{g_2} \ket{00101}, \\
&\text{amplitude: } \frac{2 g_1^2 g_2^2}{\Delta_{1c}^2 (\Delta_{12} - \beta_2)}.
\end{align*}

\begin{align*}
\ket{10100} &\xleftrightarrow[]{g_1} \ket{01100} \xleftrightarrow[]{g_2} \ket{01010} 
\xleftrightarrow[]{g_1} \ket{00110} \xleftrightarrow[]{g_2} \ket{00101}, \\
&\text{amplitude: } \frac{g_1^2 g_2^2}{\Delta_{1c} (\Delta_{1c} + \Delta_{2c})} \left( \frac{1}{\Delta_{1c}} + \frac{1}{\Delta_{2c}} \right).
\end{align*}

\item \textbf{PQPQP process:}

\begin{align*}
\ket{10100} &\xleftrightarrow[]{g_1} \ket{10010} \xleftrightarrow[]{g_2} \ket{10001} 
\xleftrightarrow[]{g_1} \ket{01001} \xleftrightarrow[]{g_2} \ket{00101}, \\
&\text{amplitude: } -\frac{g_1^2 g_2^2}{\Delta_{2c}^2 \Delta_{1c}}.
\end{align*}

\end{enumerate}

\paragraph*{Total result.}
Combining all contributions up to the fourth order, we obtain
\begin{equation}
\begin{aligned}
    \kappa_{110} =&\frac{2g_{12}^2}{\Delta_{12} - \beta_2} 
    + g_{12} g_1 g_2 \left[ \frac{4}{(\Delta_{12} - \beta_2)\Delta_{1c}} + \frac{3}{2 \Delta_{1c} \Delta_{2c}} \right] \\
    &+ g_1^2 g_2^2 \left[ \frac{1}{\Delta_{1c} + \Delta_{2c}} \left( \frac{1}{\Delta_{1c}} + \frac{1}{\Delta_{2c}} \right)^2 \right.\\
    &\left.+ \frac{2}{\Delta_{1c}^2 (\Delta_{12} - \beta_2)} - \frac{1}{\Delta_{1c} \Delta_{2c}^2} \right].
\end{aligned}
\end{equation}

\subsection{One-Excitation Manifold}

\paragraph*{Second-Order Contribution.}
There is a single second-order process connecting \(\ket{10000} \to \ket{00001}\) via intermediate state \(\ket{00100}\):
\begin{equation*}
\ket{10000} \xleftrightarrow[]{g_{12}} \ket{00100} \xleftrightarrow[]{g_{12}} \ket{00001}, \quad 
\text{amplitude: } \frac{g_{12}^2}{\Delta_{12}}.
\end{equation*}

\paragraph*{Third-Order Contributions.}
Third-order processes are of PPQP type and involve intermediate couplings via \(g_1, g_2\), and \(g_{12}\):

\begin{align*}
\ket{10000} &\xleftrightarrow[]{g_1} \ket{00100} \xleftrightarrow[]{g_2} \ket{00010} \xleftrightarrow[]{g_{12}} \ket{00001}, \\
\ket{00001} &\xleftrightarrow[]{g_{12}} \ket{00010} \xleftrightarrow[]{g_2} \ket{00100} \xleftrightarrow[]{g_1} \ket{10000}, \\
&\text{amplitude: } -\frac{g_1 g_2 g_{12}}{\Delta_{1c} \Delta_{2c}}.
\end{align*}

\paragraph*{Fourth-Order Contribution.}
There is a single PQPQP-type fourth-order process:

\begin{align*}
\ket{10000} &\xleftrightarrow[]{g_1} \ket{01000} \xleftrightarrow[]{g_1} \ket{00100} 
\xleftrightarrow[]{g_2} \ket{00010} \xleftrightarrow[]{g_2} \ket{00001}, \\
&\text{amplitude: } -\frac{g_1^2 g_2^2}{\Delta_{1c}^2 \Delta_{2c}}.
\end{align*}

\paragraph*{Total result.}
Combining all contributions up to the fourth order, we obtain
\begin{align*}
&\kappa_{100} = \frac{g_{12}^2}{\Delta_{12}} - \frac{g_1 g_2 g_{12}}{\Delta_{1c} \Delta_{2c}} - \frac{g_1^2 g_2^2}{\Delta_{1c}^2 \Delta_{2c}}.
\end{align*}

\paragraph*{Effective Coupling Strength.}
Finally, the full effective three-body coupling strength is given by
\begin{widetext}
    \begin{equation}
\begin{aligned}
\kappa =\ &\kappa_{110} - \kappa_{100} \\
=\ &\frac{2 g_{12}^2}{\Delta_{12} - \beta_2} - \frac{g_{12}^2}{\Delta_{12}} 
+ g_{12} g_1^2 \left[ \frac{4}{(\Delta_{12} - \beta_2)\Delta_{1c}} + \frac{5}{2 \Delta_{1c} \Delta_{2c}} \right] \\
&+ g_1^4 \left[ \frac{1}{\Delta_{1c} + \Delta_{2c}} \left( \frac{1}{\Delta_{1c}} + \frac{1}{\Delta_{2c}} \right)^2 
+ \frac{2}{\Delta_{1c}^2 (\Delta_{12} - \beta_2)} - \frac{1}{\Delta_{1c} \Delta_{2c}^2} + \frac{1}{\Delta_{1c}^2 \Delta_{2c}} \right].
\end{aligned}
\label{eq: full Hamiltonian perturbation}
\end{equation}
\end{widetext}

\end{document}